\documentclass{eptcs}
\usepackage{bm}
\usepackage{tikzit}
\usepackage{amssymb, amsthm, amsmath}
\theoremstyle{definition}
\newtheorem{definition}{Definition}[section]
\newtheorem{lemma}[definition]{Lemma}
\newtheorem{theorem}[definition]{Theorem}
\newtheorem{proposition}[definition]{Proposition}
\newtheorem{corollary}[definition]{Corollary}
\newtheorem{example}[definition]{Example}
\def\<{\langle}
\def\>{\rangle}
\newcommand{\Odd}{\mathop{Odd}}

\input{quantum.tikzdefs}
\tikzstyle{gate}=[shape=rectangle, text height=1.5ex, text depth=0.25ex, yshift=-0.5mm, fill=white, draw=black, minimum height=5mm, minimum width=5mm, font={\small}, tikzit category=circuit]
\tikzstyle{big gate}=[shape=rectangle, text height=1.5ex, text depth=0.25ex, yshift=-0.5mm, fill=white, draw=black, minimum height=10mm, minimum width=5mm, font={\small}, tikzit category=circuit]
\tikzstyle{Z dot}=[inner sep=0mm, minimum size=2mm, shape=circle, draw=black, fill=zxgreen, tikzit fill={rgb,255: red,221; green,255; blue,221}, tikzit category=zx]
\tikzstyle{Z bold dot}=[inner sep=0mm, minimum size=2mm, shape=circle, draw=black, fill=zxgreen, tikzit fill={rgb,255: red,221; green,255; blue,221}, line width=1.2pt, tikzit category=zx]
\tikzstyle{Z phase dot}=[minimum size=5mm, font={\footnotesize\boldmath}, shape=rectangle, rounded corners=2mm, inner sep=0.2mm, outer sep=-2mm, scale=0.8, tikzit shape=circle, draw=black, fill=zxgreen, tikzit fill={rgb,255: red,221; green,255; blue,221}, tikzit draw=blue, tikzit category=zx]
\tikzstyle{Z tiny dot}=[inner sep=0.2mm, font={\footnotesize\boldmath}, minimum size=1mm, shape=circle, draw=black, fill=zxgreen, tikzit fill={rgb,255: red,221; green,255; blue,221}]
\tikzstyle{X dot}=[Z dot, shape=circle, draw=black, fill=zxred, tikzit fill={rgb,255: red,255; green,136; blue,136}, tikzit category=zx]
\tikzstyle{X bold dot}=[inner sep=0mm, minimum size=2mm, shape=circle, draw=black, fill=zxred, tikzit fill={rgb,255: red,255; green,136; blue,136}, line width=1.2pt, tikzit category=zx]
\tikzstyle{X phase dot}=[Z phase dot, tikzit shape=circle, tikzit draw=blue, fill=zxred, tikzit fill={rgb,255: red,255; green,136; blue,136}, font={\footnotesize\boldmath}, tikzit category=zx]
\tikzstyle{X tiny dot}=[inner sep=0.2mm, font={\footnotesize\boldmath}, minimum size=1mm, shape=circle, draw=black, fill=zxred, tikzit fill={rgb,255: red,255; green,136; blue,136}]
\tikzstyle{hadamard}=[fill=yellow, draw=black, shape=rectangle, inner sep=0.6mm, minimum height=1.5mm, minimum width=1.5mm, tikzit category=zx]
\tikzstyle{paulibox}=[fill={rgb,255: red,221; green,221; blue,255}, draw=black, shape=rectangle, inner sep=0.6mm, minimum height=5mm, minimum width=5mm, font={\footnotesize}, text height=1.5ex, text depth=0.25ex, tikzit category=zx]
\tikzstyle{vertex}=[inner sep=0.2mm, minimum size=1mm, shape=circle, draw=black, fill=black, tikzit category=misc]
\tikzstyle{vertex set}=[inner sep=0.2mm, minimum size=1mm, shape=circle, draw=black, fill=white, font={\footnotesize\boldmath}, tikzit category=misc]
\tikzstyle{small black dot}=[fill=black, draw=black, shape=circle, inner sep=0pt, minimum width=1.2mm, tikzit category=circuit]
\tikzstyle{cnot ctrl}=[fill=black, draw=black, shape=circle, inner sep=0pt, minimum width=1.2mm, tikzit category=circuit]
\tikzstyle{cnot targ}=[fill=white, draw=white, shape=circle, tikzit category=circuit, label={center:$\oplus$}, inner sep=0pt, minimum width=2.1mm, tikzit fill={rgb,255: red,102; green,204; blue,255}, tikzit draw=black]
\tikzstyle{ket}=[fill=white, draw=black, shape=regular polygon, regular polygon sides=3, regular polygon rotate=-30, scale=0.7, inner sep=1pt, tikzit category=circuit, tikzit shape=rectangle, tikzit fill=green]
\tikzstyle{bra}=[fill=white, draw=black, shape=regular polygon, regular polygon sides=3, regular polygon rotate=30, scale=0.7, inner sep=1pt, tikzit category=circuit, tikzit shape=rectangle, tikzit fill=red]
\tikzstyle{scalar}=[shape=rectangle, text height=1.5ex, text depth=0.25ex, yshift=-0.5mm, fill=white, draw=black, minimum height=5mm, minimum width=5mm, font={\small}]
\tikzstyle{clabel}=[fill=white, draw=none, shape=rectangle, tikzit fill={rgb,255: red,56; green,255; blue,242}, font={\footnotesize}, inner sep=1pt, tikzit category=labels]
\tikzstyle{empty diagram}=[draw=gray!40!white, dashed, shape=rectangle, minimum width=1cm, minimum height=1cm, tikzit category=misc]
\tikzstyle{amap}=[fill=white, draw=black, shape=NEbox, tikzit category=asymmetric, tikzit fill=yellow, tikzit shape=rectangle]
\tikzstyle{amap conj}=[fill=white, draw=black, shape=NWbox, tikzit category=asymmetric, tikzit fill=green, tikzit shape=rectangle]
\tikzstyle{amap adj}=[fill=white, draw=black, shape=SEbox, tikzit category=asymmetric, tikzit fill=red, tikzit shape=rectangle]
\tikzstyle{amap trans}=[fill=white, draw=black, shape=SWbox, tikzit category=asymmetric, tikzit fill=orange, tikzit shape=rectangle]
\tikzstyle{astate}=[fill=white, draw=black, shape=NEtriangle, tikzit category=asymmetric, tikzit shape=circle, tikzit fill=yellow]
\tikzstyle{astate conj}=[fill=white, draw=black, shape=NWtriangle, tikzit category=asymmetric, tikzit shape=circle, tikzit fill=green]
\tikzstyle{astate adj}=[fill=white, draw=black, shape=SEtriangle, tikzit category=asymmetric, tikzit shape=circle, tikzit fill=red]
\tikzstyle{astate trans}=[fill=white, draw=black, shape=SWtriangle, tikzit category=asymmetric, tikzit shape=circle, tikzit fill=orange]
\tikzstyle{box}=[shape=rectangle, text height=1.5ex, text depth=0.25ex, yshift=-0.5mm, fill=white, draw=black, minimum height=5mm, minimum width=5mm, font={\small}]
\tikzstyle{medium box}=[shape=rectangle, text height=1.5ex, text depth=0.25ex, yshift=-0.5mm, fill=white, draw=black, minimum height=10mm, minimum width=5mm, font={\small}]
\tikzstyle{simple}=[-]
\tikzstyle{hadamard edge}=[-, dashed, dash pattern=on 2pt off 0.5pt, thick, draw={rgb,255: red,68; green,136; blue,255}]
\tikzstyle{box edge}=[-, dashed, dash pattern=on 2pt off 0.5pt, thick, draw={rgb,255: red,203; green,192; blue,225}]
\tikzstyle{brace edge}=[-, tikzit draw=blue, decorate, decoration={brace,amplitude=1mm,raise=-1mm}]
\tikzstyle{diredge}=[->]
\tikzstyle{double edge}=[-, double, shorten <=-1mm, shorten >=-1mm, double distance=2pt]
\tikzstyle{gray edge}=[-, gray!60!white]
\tikzstyle{pointer edge}=[->, very thick, gray]
\tikzstyle{boldedge}=[-, line width=1.2pt, shorten <=-0.17mm, shorten >=-0.17mm]
\tikzstyle{bidir edge}=[<->, very thick, draw={rgb,255: red,191; green,191; blue,191}]
\tikzstyle{surface X}=[-, tikzit fill=red, fill=zxred]
\tikzstyle{surface Z}=[-, tikzit fill=green, fill=zxgreen]
\tikzstyle{X Web}=[-, preaction={line width=1.5mm, draw={rgb,255: red,255; green,150; blue,150}, dashed, dash pattern=on 2.25pt off 0.5pt}, shorten <=-0.25mm, shorten >=-0.25mm, tikzit draw=red]
\tikzstyle{Y Web}=[-, preaction={line width=1.5mm, draw={rgb,255: red,150; green,150; blue,255}, dashed, dash pattern=on 1.2pt off 0.5pt}, shorten <=-0.25mm, shorten >=-0.25mm, tikzit draw=blue]
\tikzstyle{Z Web}=[-, preaction={line width=1.5mm, draw={rgb,255: red,120; green,200; blue,120}}, shorten <=-0.25mm, shorten >=-0.25mm, tikzit draw={rgb,255: red,0; green,141; blue,0}]
\tikzstyle{light-fill}=[-, fill={rgb,255: red,230; green,230; blue,230}, tikzit fill={rgb,255: red,191; green,191; blue,191}, draw=none, tikzit draw={rgb,255: red,191; green,191; blue,191}]
\tikzstyle{dashed edge}=[-, dashed, dash pattern=on 2pt off 0.5pt, draw=black]
\tikzstyle{light dashed edge}=[-, dashed, dash pattern=on 2pt off 0.5pt, draw={rgb,255: red,191; green,191; blue,191}]

\title{ZX-Flow: A Flexible Criterion for Deterministic \\ Computation with ZX-Diagrams}

\author{Aleks Kissinger
\institute{University of Oxford}
\email{aleks.kissinger@ox.ac.uk} \and
John van de Wetering
\institute{University of Amsterdam, QuSoft}
\email{john@vdwetering.name}
}

\def\titlerunning{ZX-Flow}
\def\authorrunning{Aleks Kissinger and John van de Wetering}

\begin{document}

\maketitle

\begin{abstract}
Flow criteria are used to efficiently extract computations, either in the form of measurement patterns or quantum circuits, from ZX-diagrams. Existing criteria such as causal flow, generalised flow, and Pauli flow, were all originally formulated for graph states, so they require ZX-diagrams to be in a very particular graph-state-like form. This form is easily broken by applying basic ZX rules and makes establishing some desirable properties very complicated. Here, we introduce a new ``ZX-native'' flow criterion called ZX-flow, formulated using a new type of decoration of a ZX-diagram we call Pauli semiwebs. These are a generalisation of Pauli webs, which have recently been used extensively in reasoning about fault-tolerant computations in the ZX-calculus. We show that ZX-flow is straightforwardly preserved by all Clifford rewrites and furthermore that a ZX-diagram has ZX-flow if and only if it is Clifford-equivalent to a graph-like ZX-diagram with Pauli flow. Finally, we show that any diagram with ZX-flow can be readily interpreted either as a deterministic measurement-based computation or as a Clifford isometry followed by a sequence of Pauli exponentials. The latter can then be efficiently extracted to a quantum circuit.
\end{abstract}

\section{Introduction}


The ZX-calculus~\cite{CD1} is a graphical language for reasoning about quantum processes that has many applications in quantum computation. It allows one to represent a computation as a graph-like structure called a ZX-diagram and rewrite it according to a complete set of graphical rules~\cite{Backens1,vilmarteulercompleteness} to transform and simplify quantum computations. It has been used for instance to optimise and verify quantum circuits~\cite{duncan2019graph,Vandaele2025thesis,fischbach2025review,kissinger2019tcount,thanos2024automated,hurwitz2024simulation}, analyse parametrised quantum circuits~\cite{peham2022equivalence,Wetering2025optimalcompilation,stollenwerk2024measurementbased,stollenwerk2023diagrammatic,wang2022differentiating}, develop new constructions in measurement-based quantum computing~\cite{Backens2020extraction,kissinger2017MBQC,defelice2026dataflow,ostmann2025nonlinear,EPTCS426.4}, and design fault-tolerant quantum computations~\cite{zhou2026topols,poor2025ultrlow,rodatz2025fault,h.s.derks2025dynamical,bauer2025planar}.

These methods benefit from the ZX-calculus because of its increased flexibility with respect to the standard quantum circuit notation, which allows representations of non-unitary operations including projections, ancilla preparations, measurements and post-selections. This additional flexibility does however come with a cost: when you produce a ZX-diagram representing a computation, it is usually not immediately clear how to translate this diagram into a representation that allows direct execution on quantum hardware. We have to first \emph{extract} a circuit from the diagram. This circuit extraction problem is very hard in the most general case~\cite{debeaudrap2022circuit}, but there are several known conditions on diagrams that allow efficient extraction of a circuit. These so-called \emph{flow} conditions were borrowed from the field of measurement-base quantum computation (MBQC), where they were originally used to associate single-qubit measurements to  certain ``correction sets'', which identify future measurements that can be adapted to obtain a deterministic result. The most well-known such conditions are called generalised flow and Pauli flow~\cite{browne2007gflow}.

It turns out that these conditions also suffice to turn a ZX-diagram into a quantum circuit expressed in fixed basis of 1- and 2-qubit gates~\cite{duncan2019graph,Backens2020extraction,Simmons2021Measurement}. 
When optimising quantum circuits with the ZX-calculus, it is useful to preserve flow conditions to ensure that we can extract a circuit from the final, reduced ZX-diagram. However, even some of the simplest ZX rules, like spider fusion, break flow conditions, making it awkward to transform ZX-diagrams in a flow-preserving way.

Previously, this has required one to express all ZX rewrites in terms of a handful of rules like local complementation and pivoting, which are then (arduously) proven to preserve generalised flow or Pauli flow~\cite{duncan2019graph,Backens2020extraction,Simmons2021Measurement,mcelvanney2023flowpreserving,EPTCS426.4}. There are two reasons this has been so complicated. The first is that flow conditions were originally designed for graph states, not ZX-diagrams, so it requires us to maintain ZX-diagrams in a very special, ``graph-state-like'' form for the conditions to even be formulated. The second reason is these conditions require one to explicitly track the time ordering for the classically tractible, Clifford parts of a ZX-diagram, even though this ordering is to a large extent arbitrary. Pauli flow handles this to some extent, but it still imposes some conditions on Clifford nodes which block many ZX rewrites.


In this paper we introduce a new flow condition specifically designed to work on ZX-diagrams that we call \emph{ZX-flow}. It is built around the concept of \emph{Pauli semiwebs}, a new non-Clifford generalisation Pauli webs. Pauli webs are a way of colouring the wires of a ZX-diagram to track the propagation of Pauli operators through it. They have been a useful tool in using ZX to reason about fault-tolerant computations, as they give an elegant way to formalise logical operators, stabilisers, and detecting sets of measurements associated with an error-correcting procedure~\cite{bombin2023unifying,rodatz2025fault}. However, Pauli webs must satisfy strong local consistency conditions which break at non-Clifford locations, i.e.~nodes where the phase parameter is not an integer multiple of $\frac\pi2$. Pauli semiwebs, on the other hand, allow certain limited violations of these conditions at certain locations, which we call \emph{defects}, that allow them to handle non-Clifford structure. This new structure enables us to give a compact, easily-visualised flow criterion. We say a diagram has ZX-flow if its non-Clifford nodes can be given a time ordering, and each node can be associated to a Pauli semiweb which only has defects at the node itself and non-Clifford nodes in its future. We can then relate this to existing flow criteria with our first main result, which states that a diagram has ZX-flow if and only if it is Clifford-equivalent to a graph-like diagram with Pauli flow.

This structure gives us (at least) two ways to directly interpret a ZX-diagram as a quantum computation. The first is using MBQC, where we can interpret non-Clifford nodes as measurements of a stabiliser resource state and defects as locations where we feed-forward measurement outcomes. For our second main result, we show that a diagram with ZX-flow can be interpreted as a unitary quantum circuit, where the Pauli semiwebs enable us to read off a particular circuit decomposition just using the time-ordering and the support of semiwebs on the outputs. The latter is similar in spirit to the circuit extraction for Pauli flow due to Simmons~\cite{Simmons2021Measurement}, but it is much more direct and the proof of correctness follows straightforwardly from the properties of Pauli semiwebs.

In addition to the notion of ZX-flow and our two main theorems, a third contribution of this paper is the basic theory of Pauli semiwebs, which can be seen as generalising Pauli webs with a notion of ``feed-forward''. This unifies notions from fault-tolerance and MBQC, and may have broader applications, e.g. in the design of fault-tolerant computations involving non-Clifford components.

\section{Preliminaries}\label{sec:prelims}
\subsection{ZX-diagrams}\label{sec:diagrams}

ZX-diagrams are string diagrams built from the following basic components:
\[
\begin{tikzpicture}
	\begin{pgfonlayer}{nodelayer}
		\node [style=Z phase dot] (8) at (0, 0) {$\alpha$};
		\node [style=none] (9) at (-1, 0.75) {};
		\node [style=none] (10) at (-1, -0.75) {};
		\node [style=none] (11) at (1, 0.75) {};
		\node [style=none] (12) at (1, -0.75) {};
		\node [style=none, rotate=90] (13) at (-1, 0) {...};
		\node [style=none, rotate=90] (14) at (1, 0) {...};
		\node [style=none] (15) at (0, -1.25) {$Z$-spider};
	\end{pgfonlayer}
	\begin{pgfonlayer}{edgelayer}
		\draw (9.center) to (8);
		\draw (8) to (11.center);
		\draw (10.center) to (8);
		\draw (8) to (12.center);
	\end{pgfonlayer}
\end{tikzpicture} \ := \ 
|0\>^{\otimes m}\<0|^{\otimes n} + e^{i \alpha} |1\>^{\otimes m}\<1|^{\otimes n}
\qquad
\begin{tikzpicture}
	\begin{pgfonlayer}{nodelayer}
		\node [style=X phase dot] (8) at (0, 0) {$\alpha$};
		\node [style=none] (9) at (-1, 0.75) {};
		\node [style=none] (10) at (-1, -0.75) {};
		\node [style=none] (11) at (1, 0.75) {};
		\node [style=none] (12) at (1, -0.75) {};
		\node [style=none, rotate=90] (13) at (-1, 0) {...};
		\node [style=none, rotate=90] (14) at (1, 0) {...};
		\node [style=none] (15) at (0, -1.25) {$X$-spider};
	\end{pgfonlayer}
	\begin{pgfonlayer}{edgelayer}
		\draw (9.center) to (8);
		\draw (8) to (11.center);
		\draw (10.center) to (8);
		\draw (8) to (12.center);
	\end{pgfonlayer}
\end{tikzpicture} \ := \ 
|{+}\>^{\otimes m}\<{+}|^{\otimes n} + e^{i \alpha} |{-}\>^{\otimes m}\<{-}|^{\otimes n}
\]
\[
\begin{tikzpicture}
	\begin{pgfonlayer}{nodelayer}
		\node [style=hadamard] (16) at (0, 0) {};
		\node [style=none] (17) at (-1, 0) {};
		\node [style=none] (18) at (1, 0) {};
		\node [style=none] (19) at (0, -1.25) {$H$-gate};
	\end{pgfonlayer}
	\begin{pgfonlayer}{edgelayer}
		\draw (17.center) to (18.center);
	\end{pgfonlayer}
\end{tikzpicture}
 \ :=\ 
|{+}\>\<0| + |{-}\>\<1|
\qquad
\begin{tikzpicture}
	\begin{pgfonlayer}{nodelayer}
		\node [style=none] (26) at (-0.75, 0) {};
		\node [style=none] (27) at (0.75, 0) {};
		\node [style=none] (28) at (0, -1.15) {id};
	\end{pgfonlayer}
	\begin{pgfonlayer}{edgelayer}
		\draw (26.center) to (27.center);
	\end{pgfonlayer}
\end{tikzpicture}
 \ :=\ 
\sum_i|i\>\<i|
\qquad
\begin{tikzpicture}
	\begin{pgfonlayer}{nodelayer}
		\node [style=none] (29) at (-0.75, 0.5) {};
		\node [style=none] (30) at (-0.75, -0.5) {};
		\node [style=none] (31) at (0.75, 0.5) {};
		\node [style=none] (32) at (0.75, -0.5) {};
		\node [style=none] (33) at (0, -1.25) {swap};
	\end{pgfonlayer}
	\begin{pgfonlayer}{edgelayer}
		\draw (29.center) to (32.center);
		\draw (30.center) to (31.center);
	\end{pgfonlayer}
\end{tikzpicture} \ :=\ 
\sum_{ij}|ij\>\<ji|
\]
using parallel and sequential composition. Two ZX-diagrams are considered equivalent if one can be deformed into the other without changing the order of inputs/outputs.

It will be convenient to represent ZX-diagrams combinatorically. We treat a diagram $D$ as a collection of nodes $N_D$ and wires $W_D$, which are related by a binary adjacency relation $a_D \subseteq N_D \times W_D$. Each node $\nu$ has a \textit{type} $t_D(\nu) \in \{ Z, X, H \}$ and a \textit{phase} $\phi_D(\nu) \in \mathbb R$. Wires can be adjacent to zero, one, or two nodes. We define totally ordered, not necessarily disjoint subsets $I_D, O_D \subseteq W_D$ of \textit{inputs} and \textit{outputs}. We also allow diagrams to include a scalar factor $\lambda_D \in \mathbb C$ and impose a ``meta-rule'' that for all diagrams $D, D'$, $0 \cdot D = 0 \cdot D'$. If $t_D(\nu) = H$, it represents an $H$-gate, must have phase $0$, and be adjacent to precisely two wires. If $t_D(\nu) \in \{ Z, X \}$, it is a \textit{spider}. 
Notably, the combinatoric structure of $D$ is enough to define the linear map $\llbracket D \rrbracket$ unambiguously. That is, the linear map represented by a ZX-diagram is fully specified by its nodes and connecting wires and isomorphic diagrams will define the same linear map. This principle is often called \textit{only connectivity matters} (see e.g.~\cite{KissingerWetering2024Book}). Note we drop the subscript $D$ from parts of the diagram and the semantic brackets $\llbracket . \rrbracket$ if they are clear from context.

It is sometimes useful to treat pairs of wires that share an $H$-gate as a single logical unit, especially when discussing the adjacency of spiders in a diagram.

\begin{definition}\label{def:edge}
  A pair of wires $\{ w_1, w_2 \}$ that are both adjacent to the same $H$-gate are called an \emph{$H$-edge}. Any single wire $\{ w \}$ that is not part of an $H$-edge is called a \emph{plain edge}. We consider two spiders to be adjacent if they are connected via a plain edge or an $H$-edge. Let $E_D \subseteq 2^{W}$ be the set of all plain edges and $H$-edges in $D$.
\end{definition}

The linear map of a ZX-diagram $D$ is also preserved by transformation according to a set of rewrite rules called the ZX-calculus. While there exist several variations, for our purposes we will focus on the \textit{extended Clifford ZX-calculus}, as shown in Figure~\ref{fig:Clifford-rules} of Section~\ref{sec:flow-preservation}.
We say a spider is \emph{Clifford} if its phase is a multiple of $\frac\pi2$, and is \emph{non-Clifford} otherwise. Diagrams consisting just of $H$-gates and Clifford spiders can be efficiently reduced to normal form~\cite{KissingerWetering2024Book} using the rules of the ZX-calculus. In particular, this implies efficient equality checking for such diagrams and a graphical version of the Gottesman-Knill theorem.

\subsection{Pauli webs}\label{sec:Pauli-webs}

A Pauli web is an annotation on a ZX-diagram that denotes how Pauli operators propagate through the diagram.

\begin{definition}
  Let $D=(N,W)$ be a ZX-diagram and let $\nu \in N$ be a node. The associated \emph{local state} $|\nu\>$ is defined as:
  $|0\>^{\otimes k} + e^{i \phi(\nu)}|1\>^{\otimes k}$ for $Z$-spiders,
  $|{+}\>^{\otimes k} + e^{i \phi(\nu)}|{-}\>^{\otimes k}$ for $X$-spiders, and
  $|0{+}\> + |1{-}\>$ for $H$-gates.
\end{definition}

Let $\mathcal{P}_n$ denote the Pauli group on $n$ qubits.
For a Pauli operator $\vec P \in \mathcal P_{|W|}$, let $\vec P_j$ be the Pauli operator at a particular wire $j \in W$ and $\vec P_{\nu}$ be $\vec P$ restricted to the adjacent wires $a(\nu)$ of a node $\nu$ in a ZX-diagram.

\begin{definition}\label{def:pauli-web}
A \textit{Pauli web} for a diagram $D$ is a Pauli operator $\vec w \in \mathcal P_{|W|}$ such that $|\nu\>$ is an eigenstate for $\vec w_{\nu}$ for all $\nu \in N$. We say a Pauli web has \emph{$X$-support} at a wire $j$ if $\vec w_j$ is either $X$ or $Y$ and has \emph{$Z$-support} if it is either $Y$ or $Z$.
\end{definition}

We distinguish the four different ways a wire can be labelled by a Pauli web as follows:
\ctikzfig{web-labels}

Pauli webs track the propagation of Pauli operators through ZX-diagrams. In particular, they characterise the Pauli operators that leave a Clifford map invariant.

\begin{theorem}\label{thm:pauli-web-stabilisers}
Let $\vec w$ be a Pauli web for a ZX-diagram $D$ which colours the inputs of $D$ according to a Pauli string $\vec P$ and the outputs according to $\vec Q$. Then we have $D = (-1)^k \vec Q D \vec P$.
\end{theorem}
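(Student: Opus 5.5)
We will prove the statement by viewing the diagram $D$ as a tensor network. Concretely, $\llbracket D\rrbracket$ is obtained from the tensor product of all local states $|\nu\>$ over $\nu\in N$, one tensor factor per wire-end, by contracting the two ends of each internal wire with a cup/cap (Bell effect $\sum_i \<ii|$). The ends of the input and output wires are left open, with inputs turned into bra-type legs via the standard transposition. We begin by fixing this presentation precisely. Write $\Psi := \bigotimes_{\nu} |\nu\>$, a state on the set of wire-ends. Then $D$ is $\Psi$ with Bell effects applied to pairs of ends belonging to the same internal wire, and with identity legs for boundary wires. Wires adjacent to no node are treated as bare identity wires, possibly both input and output, and need no separate argument.

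\textbf{Step 1 (local).} By the definition of a Pauli web, $\vec w_\nu |\nu\> = \lambda_\nu |\nu\>$ for each $\nu$. Since $\vec w_\nu$ is a Pauli string, $\lambda_\nu \in \{\pm1,\pm i\}$. Hermitian Paulis would give $\lambda_\nu=\pm1$, and we should restrict to Hermitian Paulis (phases $\pm1$) on each wire, so that $\lambda_\nu=\pm1$. Tensoring these gives $\big(\bigotimes_\nu \vec w_\nu\big)\Psi = (\prod_\nu \lambda_\nu)\Psi$. Here a wire joining two nodes carries the Pauli $\vec w_j$ on \emph{both} of its ends.

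\textbf{Step 2 (pushing through wires).} For an internal wire, we need to move the Pauli off both ends. We use the identity $\<\Phi|(P\otimes P) = \pm \<\Phi|$ for the Bell effect $\<\Phi|=\sum_i\<ii|$. This holds because $\<\Phi|(P\otimes P) = \<\Phi|(P P^{T}\otimes I)$, and $PP^T = I$ for $X$ and $Z$ while $YY^T=-I$. So each internal wire with $Y$-label contributes a sign $-1$, and the Pauli disappears from the contraction. For a boundary wire, the Pauli applied on the node-end is exactly $\vec Q_j$ acting on the output, or, for an input, $\vec P_j^{T}=\pm \vec P_j$ acting on the input. Again the only sign arises from $Y$. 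A wire that is simultaneously input and output with no node is handled directly: there $\vec Q_j=\vec P_j$, and the identity satisfies $\vec Q_j\,\mathrm{id}\,\vec P_j = I$.

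\textbf{Step 3 (assembly).} Combining the two steps gives $\vec Q D \vec P = (\pm 1)\prod_\nu\lambda_\nu\, D$. Since every factor is $\pm1$, this is $(-1)^k D$, equivalently $D=(-1)^k\vec Q D\vec P$, because $((-1)^k)^{-1}=(-1)^k$.

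The main obstacle is bookkeeping rather than ideas. We must ensure the eigenvalues are real, i.e.\ that Pauli labels are taken Hermitian so that $\lambda_\nu=\pm1$ and not $\pm i$. We must also track the transpose convention on inputs, which flips sign exactly for $Y$, and handle the degenerate wires: identity wires, and caps and cups formed by wires that are both inputs or both outputs. The scalar $\lambda_D$ is unaffected throughout, and if it is $0$ the statement is trivial.
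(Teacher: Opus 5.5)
Your proposal is correct and is essentially the paper's own ``firing'' argument: you replace each local state by its eigen-equation under $\vec w_\nu$, let the paired Paulis on interior wires cancel, and read off $\vec P$ and $\vec Q$ on the boundary. Your Bell-effect bookkeeping gives a $-1$ from $Y^{T}=-Y$ for each interior or input wire carrying $Y$, which is exactly the sign rule the paper states for $(-1)^k$.
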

\begin{proof}
  We can show this by ``firing'' the Pauli web $\vec w$ (see Example~\ref{ex:pauli-web-fire} below). That is, we replace each local state $|\nu\>$ with $\lambda \vec w_{\nu} |\nu\>$. On interior wires, Pauli operators cancel out, up to a sign (since the eigenvalues of Paulis are $\pm 1$), leaving just $\vec P$ on the inputs and $\vec Q$ on the outputs. The sign $\sigma$ is computed as the product of all of the local eigenvalues $\lambda$, multiplied by $-1$ for each $Y$ on an interior or input wire of $\vec w$.
\end{proof}

\begin{example}[Firing a Pauli web]\label{ex:pauli-web-fire}
  Consider the following Pauli web on a diagram $D$. To ``fire'' this Pauli web, we add for each spider the corresponding Paulis around it determined by the colouring on that wire:
  \[
  \begin{tikzpicture}
	\begin{pgfonlayer}{nodelayer}
		\node [style=none] (0) at (-2, 1) {};
		\node [style=none] (1) at (-2, 0) {};
		\node [style=none] (2) at (2, 1) {};
		\node [style=none] (3) at (2, 0) {};
		\node [style=none] (4) at (2, -1) {};
		\node [style=X phase dot] (5) at (-1, 1) {$\frac\pi2$};
		\node [style=Z phase dot] (6) at (1, -1) {$\frac\pi2$};
		\node [style=X dot] (7) at (1, 1) {};
		\node [style=Z dot] (9) at (-1, 0) {};
		\node [style=hadamard] (10) at (0, -0.5) {};
	\end{pgfonlayer}
	\begin{pgfonlayer}{edgelayer}
		\draw [style=Y Web] (6) to (4.center);
		\draw [style=Z Web] (5) to (0.center);
		\draw [style=Y Web] (7) to (2.center);
		\draw [style=Y Web] (7) to (5);
		\draw (1.center) to (9);
		\draw [style=Z Web] (9) to (10);
		\draw [style=X Web] (10) to (6);
		\draw [style=Z Web] (7) to (9);
		\draw (9) to (3.center);
	\end{pgfonlayer}
\end{tikzpicture} \quad \rightarrow \quad
  \begin{tikzpicture}
	\begin{pgfonlayer}{nodelayer}
		\node [style=none] (0) at (-6, 1) {};
		\node [style=none] (1) at (-6, 0) {};
		\node [style=none] (2) at (-2, 1) {};
		\node [style=none] (3) at (-2, 0) {};
		\node [style=none] (4) at (-2, -1) {};
		\node [style=X phase dot] (5) at (-5, 1) {$\frac\pi2$};
		\node [style=Z phase dot] (6) at (-3, -1) {$\frac\pi2$};
		\node [style=X dot] (7) at (-3, 1) {};
		\node [style=Z dot] (8) at (-5, 0) {};
		\node [style=hadamard] (9) at (-4, -0.5) {};
		\node [style=none] (10) at (-1, 0) {$=$};
		\node [style=none] (11) at (0, 1.5) {};
		\node [style=none] (12) at (0, 0) {};
		\node [style=none] (13) at (11.5, 1.5) {};
		\node [style=none] (14) at (11.5, 0) {};
		\node [style=none] (15) at (11.5, -1.5) {};
		\node [style=X phase dot] (16) at (1.75, 1.5) {$\frac\pi2$};
		\node [style=Z phase dot] (17) at (8.75, -1.5) {$\frac\pi2$};
		\node [style=X dot] (18) at (6.75, 1.5) {};
		\node [style=Z dot] (19) at (4, 0) {};
		\node [style=hadamard] (20) at (6.45, -0.875) {};
		\node [style=Z tiny dot] (21) at (0.75, 1.5) {$\pi$};
		\node [style=Z tiny dot] (22) at (2.75, 1.5) {$\pi$};
		\node [style=X tiny dot] (23) at (3.625, 1.5) {$\pi$};
		\node [style=X tiny dot] (24) at (7.75, 1.5) {$\pi$};
		\node [style=Z tiny dot] (25) at (4.825, 0.475) {$\pi$};
		\node [style=Z tiny dot] (26) at (5, -0.35) {$\pi$};
		\node [style=X tiny dot] (27) at (7.2, -1.075) {$\pi$};
		\node [style=X tiny dot] (28) at (9.75, -1.5) {$\pi$};
		\node [style=Z tiny dot] (29) at (10.75, -1.5) {$\pi$};
		\node [style=Z tiny dot] (63) at (5.75, 1.5) {$\pi$};
		\node [style=X tiny dot] (64) at (4.875, 1.5) {$\pi$};
		\node [style=Z tiny dot] (65) at (8.75, 1.5) {$\pi$};
		\node [style=Z tiny dot] (66) at (5.55, 0.85) {$\pi$};
		\node [style=Z tiny dot] (67) at (5.7, -0.6) {$\pi$};
		\node [style=X tiny dot] (68) at (7.925, -1.3) {$\pi$};
		\node [style=none] (69) at (13, 1.25) {};
		\node [style=none] (70) at (13, 0) {};
		\node [style=none] (71) at (19.25, 1.25) {};
		\node [style=none] (72) at (19.25, 0) {};
		\node [style=none] (73) at (19.25, -1.25) {};
		\node [style=X phase dot] (74) at (14.75, 1.25) {$\frac\pi2$};
		\node [style=Z phase dot] (75) at (16.5, -1.25) {$\frac\pi2$};
		\node [style=X dot] (76) at (16, 1.25) {};
		\node [style=Z dot] (77) at (14.25, 0) {};
		\node [style=hadamard] (78) at (15.475, -0.675) {};
		\node [style=Z tiny dot] (79) at (13.75, 1.25) {$\pi$};
		\node [style=X tiny dot] (82) at (17.5, 1.25) {$\pi$};
		\node [style=X tiny dot] (86) at (17.5, -1.25) {$\pi$};
		\node [style=Z tiny dot] (87) at (18.5, -1.25) {$\pi$};
		\node [style=Z tiny dot] (90) at (18.5, 1.25) {$\pi$};
		\node [style=none] (94) at (12.25, 0) {=};
	\end{pgfonlayer}
	\begin{pgfonlayer}{edgelayer}
		\draw (6) to (4.center);
		\draw (5) to (0.center);
		\draw (7) to (2.center);
		\draw (7) to (5);
		\draw (1.center) to (8);
		\draw (8) to (9);
		\draw (9) to (6);
		\draw (7) to (8);
		\draw (8) to (3.center);
		\draw (17) to (15.center);
		\draw (16) to (11.center);
		\draw (18) to (13.center);
		\draw (18) to (16);
		\draw (12.center) to (19);
		\draw (19) to (20);
		\draw (20) to (17);
		\draw (18) to (19);
		\draw (19) to (14.center);
		\draw (75) to (73.center);
		\draw (74) to (69.center);
		\draw (76) to (71.center);
		\draw (76) to (74);
		\draw (70.center) to (77);
		\draw (77) to (78);
		\draw (78) to (75);
		\draw (76) to (77);
		\draw (77) to (72.center);
	\end{pgfonlayer}
\end{tikzpicture}
  \]
  On each internal wire, each Pauli now appears twice so it cancels out. The only Paulis are $\vec P$ on inputs and $\vec Q$ on outputs, hence we obtain $D \propto \vec Q D \vec P$.
\end{example}

Definition~\ref{def:pauli-web} is about eigenvalues, which is not always the easiest way to think of Pauli webs.
It is sometimes useful to unpack the Pauli web criteria into more combinatoric conditions. We will say support of the \textit{same type} to mean Z-support for Z-spiders and X-support for X-spiders, and we will say support for the \textit{opposite type} to mean X-support for Z-spiders and Z-support for X-spiders. It is straightforward to check that following 4 conditions characterise the Pauli operators for which a node $|\nu\>$ is an eigenstate.

\begin{theorem}\label{thm:pauli-web-conds}
A Pauli operator $\vec w \in \mathcal P_{|W|}$ is a Pauli web if and only if it satisfies the following criteria:
\begin{enumerate}
  \item \textbf{H:} every H-gate has a neighbourhood coloured from the set $\{ II, XZ, ZX, YY \}$;
  \item \textbf{all-or-nothing:} the neighbourhood of every spider must either have empty or full support of the opposite type (meaning either all the wires are coloured in the opposing colour, or none of them are);
  \item \textbf{parity:} the neighbourhood of all spiders must have even support of the same type, unless their phase is an odd multiple of $\frac\pi2$, in which case it has even support of the same type iff it has no support of the opposite type; and
  \item \textbf{Clifford:} if the neighbourhood of a spider has support of the opposite type, it must be Clifford.
\end{enumerate}
\end{theorem}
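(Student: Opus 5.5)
Since Definition~\ref{def:pauli-web} is purely local (each $|\nu\>$ must be an eigenstate of $\vec w_\nu$), the plan is to prove the equivalence one node at a time. I will fix a node $\nu$ with $k$ adjacent wires and determine exactly which Pauli strings $P \in \mathcal P_k$ have $|\nu\>$ as an eigenvector. The aim is to show that this set is cut out by the listed conditions relevant to that node type. The $H$-gate case is immediate: $|0{+}\> + |1{-}\>$ is the Choi state of $H$, so its stabilisers are $P \otimes HPH$ up to sign. This gives $\{II, XZ, ZX, YY\}$, and a direct check rules out any other pair, since otherwise $P\otimes Q$ would stabilise $|H\>$ with $Q \neq \pm HPH$. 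For $X$-spiders I will reduce to the $Z$ case by conjugating every leg by $H$. This swaps the roles of $X$ and $Z$, and hence of same-type and opposite-type support, and it maps $|\nu\>$ for an $X$-spider to the corresponding $Z$-spider state. So it suffices to treat a $Z$-spider $|0\>^{\otimes k} + e^{i\alpha}|1\>^{\otimes k}$.

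For the $Z$-spider I will write $P = \lambda X^{\vec a} Z^{\vec b}$ with $\vec a, \vec b \in \mathbb F_2^k$, where $X$-support is recorded by $\vec a$ and $Z$-support by $\vec b$. I will then split into two cases.

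\emph{Case $\vec a = 0$.} Here $P$ is diagonal, acting by $1$ on $|0\cdots0\>$ and by $(-1)^{|\vec b|}$ on $|1\cdots 1\>$. The spider state is an eigenvector iff these two values agree, i.e.\ iff $|\vec b|$ is even. The exception is the degenerate case where one of the two components vanishes, which cannot happen for a spider with $k \geq 1$. This gives the parity condition in the no-opposite-support case, for every phase.

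\emph{Case $\vec a \neq 0$.} Now $P$ sends $|0\cdots0\>$ to a multiple of $|\vec a\>$, and for $|\nu\>$ to be an eigenvector, $P|\nu\>$ must lie in $\mathrm{span}\{|0^k\>, |1^k\>\}$. This forces $\vec a = 1^k$, which is the all-or-nothing condition. With $\vec a = 1^k$, $P$ swaps $|0^k\>$ and $|1^k\>$ up to phases. Writing $P|0^k\> = \lambda|1^k\>$ and $P|1^k\> = \lambda(-1)^{|\vec b|}|0^k\>$, the eigenvalue equation $P(|0^k\>+e^{i\alpha}|1^k\>) = \mu(|0^k\>+e^{i\alpha}|1^k\>)$ reduces to $\lambda = \mu e^{i\alpha}$ and $\lambda(-1)^{|\vec b|} e^{i\alpha} = \mu$. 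Together these give $e^{2i\alpha} = (-1)^{|\vec b|}$. From this, $\alpha$ must be a multiple of $\frac\pi2$, which is the Clifford condition. Moreover $|\vec b|$ is even iff $\alpha \in \{0,\pi\}$, and odd iff $\alpha$ is an odd multiple of $\frac\pi2$. Combined with the first case, this is exactly the parity condition as stated. Conversely, whenever these conditions hold, the same computation produces a valid eigenvalue $\mu$.

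The only real obstacle is bookkeeping. I need to track how $Y = iXZ$ contributes to both $\vec a$ and $\vec b$ and to the phase $\lambda$, and to confirm that the overall scalar in $\mathcal P_k$ does not affect eigenvector-hood, only the eigenvalue. I also need to handle spiders with no legs, and wires that are adjacent to a single node or to none; these carry no constraint. Assembling the node-wise characterisations then gives the theorem.
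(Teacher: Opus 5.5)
Your proposal is correct and takes essentially the same route as the paper. The paper only says this characterisation is straightforward to check; its appendix proof of the semiweb analogue uses the same node-by-node split $\vec w_\nu = \lambda \vec P_X \vec P_Z$, with the $X$-part forced to be $I$ or $X\otimes\cdots\otimes X$ and the $Z$-part contributing a parity-dependent $\pi$ twist. Your explicit eigenvalue equation $e^{2i\alpha} = (-1)^{|\vec b|}$ fills in the derivation of the \textbf{Clifford} and \textbf{parity} conditions that the paper leaves implicit.
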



It is clear from Definition~\ref{def:pauli-web} that the product of Pauli webs is again a Pauli web. For Clifford ZX-diagrams, it is often therefore useful to find a minimal generating set of webs. The combinatoric criteria from Theorem~\ref{thm:pauli-web-conds} can be expressed as $\mathbb F_2$-linear equations over $2|W|$ boolean variables, indicating the Z-support and X-support at each wire in $W$. Hence, a generating set of Pauli webs can be computed efficiently by computing a generating set of solutions to a system of linear equations.

We can divide Pauli webs into four categories:
(i) \textit{detectors}, which have no support on inputs or outputs,
(ii) \textit{stabilisers}, which only have support on outputs,
(iii) \textit{costabilisers}, which only have support on inputs, and
(iv) \textit{logicals}, which have support on both inputs and outputs.
All four types play a role in quantum error correction and fault-tolerant quantum computing (see e.g.~\cite{rodatz2025fault,bombin2023unifying}). For our purposes, we will be most interested in logical Pauli webs, which for isometries, take a special form.

\begin{theorem}
  For any Clifford diagram $D$, $D$ is an isometry if and only if its logical Pauli webs are generated by Pauli webs $\{\ell_Z(i)\}_{i\in I}$ and $\{\ell_X(i)\}_{i\in I}$, where $\ell_Z(i)$ (resp.~$\ell_X(i)$) has $Z$-support (resp. $X$-support) on input $i$ and no other support on the inputs of $D$.
\end{theorem}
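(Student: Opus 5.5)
We will work with the linear map $V := \llbracket D \rrbracket : (\mathbb C^2)^{\otimes n} \to (\mathbb C^2)^{\otimes m}$, where $n = |I|$. Since $D$ is Clifford, $V$ is proportional to a stabiliser map. We use the standard correspondence between Pauli webs and the stabiliser group of the Choi state of $V$. By Theorem~\ref{thm:pauli-web-stabilisers}, every Pauli web with input colouring $\vec P$ and output colouring $\vec Q$ yields $\vec Q V \vec P = \pm V$. Conversely, for a Clifford diagram every pair $(\vec P, \vec Q)$ with $\vec Q V \vec P = \pm V$ is realised by some Pauli web. To see this, reduce $D$ to normal form with the Clifford rules and note that webs can be transported along each rewrite, or equivalently use that the solution space of the linear system of Theorem~\ref{thm:pauli-web-conds} has the correct dimension. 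This converse is the main technical obstacle. I would state it as a lemma and prove it by induction on the diagram, composing webs through sequential and parallel composition. The base case is single generators, where the eigenstate condition can be checked directly. Granting the lemma, the statement becomes a claim about the group $\mathcal S = \{(\vec P,\vec Q) : \vec Q V \vec P \propto V\}$. Under this correspondence the logical webs are exactly the elements with nontrivial $\vec P$ and nontrivial $\vec Q$, and webs with $\vec P = I$ are stabilisers.

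For ($\Rightarrow$), suppose $V^\dagger V = c\, I$ with $c \neq 0$. Fix an input $i$ and the Pauli $Z_i$. Because $V$ is an isometry up to scalar, it is a Clifford isometry: it equals a Clifford unitary $U$ applied to $|\psi\rangle \otimes |0\rangle^{\otimes(m-n)}$ (up to scalar). Hence $V Z_i = U(Z_i \otimes I)U^\dagger V$. Setting $\vec Q := U (Z_i \otimes I) U^\dagger$, a Pauli, gives $\vec Q V Z_i = V$. By the lemma this pair is realised by a web $\ell_Z(i)$, and its only input support is $Z$ on $i$. The construction of $\ell_X(i)$ is the same. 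To show these generate, take any logical web with input colouring $\vec P$. Multiplying it by the appropriate product of the $\ell_Z(i)$ and $\ell_X(i)$ gives a web that is trivial on the inputs, that is, a stabiliser web. So the $\ell$'s together with the stabiliser webs generate all webs with input support, which is the intended meaning of ``generated''. I would make this interpretation explicit, noting that stabilisers are always needed to fix the output part.

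For ($\Leftarrow$), suppose such $\ell_Z(i)$ and $\ell_X(i)$ exist for every $i$. By Theorem~\ref{thm:pauli-web-stabilisers}, $V = \pm \vec Q_i V Z_i$ and $V = \pm \vec Q'_i V X_i$. Then $M := V^\dagger V$ satisfies $M = Z_i M Z_i$ and $M = X_i M X_i$ for every $i$, because the output Paulis are unitary and Hermitian up to phase, and the signs square away. An operator commuting with every $X_i$ and $Z_i$ commutes with the whole Pauli group, so it is a scalar: $M = c\, I$. It remains to exclude $c = 0$. If $c = 0$ then $V = 0$, and under the meta-rule $0 \cdot D = 0 \cdot D'$ this case is degenerate. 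Either the convention excludes zero diagrams, or a zero diagram admits inconsistent webs. I would handle it by assuming $\llbracket D \rrbracket \neq 0$, as is implicit throughout. With $c \neq 0$, rescaling shows $V$ is an isometry up to the scalar $\lambda_D$.

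Both directions are thus short algebra once web realisability, namely the converse to Theorem~\ref{thm:pauli-web-stabilisers} for Clifford diagrams, is in hand. That step is where the real work lies. The fallback is to prove it via the normal form together with the observation that each Clifford ZX rule admits a bijection of webs preserving boundary colourings.
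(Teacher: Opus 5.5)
Your argument is correct, granting the realisability lemma you flag. It differs from the paper's mainly in the ($\Leftarrow$) direction.

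\textbf{($\Leftarrow$).} The paper reads the output colourings $\vec{\mathcal Z}_i,\vec{\mathcal X}_i$ as a complete stabiliser tableau. It justifies their independence somewhat loosely, from independence of the webs as generators. You instead observe that $V^\dagger V$ commutes with every $X_i$ and $Z_i$, and is therefore a scalar. Your version has three advantages:
\begin{itemize}
\item it uses only the existence of the $\ell$'s, since the ``generated by'' clause is not needed for this direction;
\item it needs no independence claim;
\item it exposes hypotheses the statement silently requires.
\end{itemize}
For the last point: an identity wire next to a legless $\pi$-phase $Z$-spider (the scalar $0$) has both logical webs but is not an isometry. So one must assume $V\neq 0$ and read ``isometry'' up to scalar.

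\textbf{($\Rightarrow$).} You follow the paper's idea of realising the logical operators of the isometry as webs. You also make explicit that this needs a converse to Theorem~\ref{thm:pauli-web-stabilisers}. The paper hides this in the phrase ``push the Paulis from input to outputs'', and in a diagram that is not circuit-like, local pushing is not enough. Your reading of ``generated'' modulo webs with trivial input colouring matches the paper's later count of $2k$ logical plus $n-k$ stabilising webs.

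\textbf{The realisability lemma.} This is the place to tighten. Induction by \emph{composing} webs only gives the easy direction. The step you actually need is a splitting: if $V_2V_1\neq 0$, every stabiliser pair $(\vec P,\vec Q)$ of $V_2V_1$ must factor through some interface Pauli $\vec R$. Here $(\vec P,\vec R)$ must be a stabiliser pair of $V_1$ and $(\vec R,\vec Q)$ one of $V_2$.

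The cleanest proof is global. Tensor all local states $|\nu\rangle$, which are stabiliser states since $D$ is Clifford, and postselect each internal wire onto $|\Phi^+\rangle$. By the stabiliser measurement-update rule, if the result is nonzero its stabiliser group consists exactly of the boundary restrictions of products of local stabilisers acting as $\pm P\otimes P$ on the two ends of every internal wire. That is precisely Definition~\ref{def:pauli-web}.

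Nonvanishing is essential: for $V=0$ every pair satisfies $\vec Q V\vec P=\pm V$, but not every pair need be realised by a web. Here it holds because $V$ is an isometry.

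Your fallback via normal forms also works. However, what each rule must preserve is the set of realisable boundary colourings, not a bijection of webs. This is the paper's Lemma~\ref{lem:clifford-zx-pres}, which it states without proof.
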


\begin{proof}
By Theorem~\ref{thm:pauli-web-stabilisers}, the existence of the Pauli web $\ell_Z(i)$ implies that $DZ_i = \vec {\mathcal Z}_i D$ for some Pauli $\vec {\mathcal Z}_i$, and similarly $\ell_X(i)$ implies $DX_i = \vec{\mathcal X}_i D$ for some $\vec{\mathcal X}_i$. The Paulis $\vec{\mathcal Z}_i$, $\vec{\mathcal X}_i$ then give a complete stabiliser tableau for $D$, which implies that it is an isometry (these Paulis $\vec{\mathcal Z}_i$, $\vec{\mathcal X}_i$ must be independent in $\mathcal{P}_n$ since otherwise $\ell_Z(i)$ and $\ell_X(i)$ wouldn't be independent generators of the logical Pauli webs).
For the converse direction we find the logical operators of the isometry, and construct the corresponding Pauli webs by observing how we should `push' the Paulis from input to outputs.
\end{proof}

If $D$ is an isometry with $n$ inputs and $n$ outputs, then it is a unitary. Hence, it is completely fixed by its $2n$ logical Pauli webs. If it has $k < n$ inputs, then it is fixed by $2k$ logical webs plus $(n + k) - 2k = n - k$ additional stabilising Pauli webs. These correspond precisely to the stabilisers and logical operators of a quantum error correcting code.

\begin{example}[Pauli webs for a unitary]
  The following 3 qubit unitary has six independent logical Pauli webs, which correspond to its stabiliser tableau:
  \ctikzfig{logical-web-ex}
\end{example}

\begin{example}[Pauli webs for an isometry]
  The following isometry is the embedding map for the $[[4, 2, 2]]$ error detecting code. It has 4 logical Pauli webs and 2 stabiliser Pauli webs, corresponding to the logical operators and stabilisers of that code:
  \[
  \begin{tikzpicture}
	\begin{pgfonlayer}{nodelayer}
		\node [style=none] (17) at (-8.5, 1.5) {};
		\node [style=none] (18) at (-8.5, 0.5) {};
		\node [style=X dot] (19) at (-7.5, -0.5) {};
		\node [style=none] (20) at (-5.25, 1.5) {};
		\node [style=none] (21) at (-5.25, 0.5) {};
		\node [style=none] (22) at (-5.25, -0.5) {};
		\node [style=none] (23) at (-5.25, -1.5) {};
		\node [style=Z dot] (24) at (-6, 1.5) {};
		\node [style=Z dot] (25) at (-6, 0.5) {};
		\node [style=Z dot] (26) at (-6, -0.5) {};
		\node [style=Z dot] (27) at (-6, -1.5) {};
		\node [style=X dot] (28) at (-7.5, 1.5) {};
		\node [style=X dot] (29) at (-7.5, 0.5) {};
		\node [style=none] (30) at (0.75, 1.5) {};
		\node [style=none] (31) at (0.75, 0.5) {};
		\node [style=X dot] (32) at (1.75, -0.5) {};
		\node [style=none] (33) at (4, 1.5) {};
		\node [style=none] (34) at (4, 0.5) {};
		\node [style=none] (35) at (4, -0.5) {};
		\node [style=none] (36) at (4, -1.5) {};
		\node [style=Z dot] (37) at (3.25, 1.5) {};
		\node [style=Z dot] (38) at (3.25, 0.5) {};
		\node [style=Z dot] (39) at (3.25, -0.5) {};
		\node [style=Z dot] (40) at (3.25, -1.5) {};
		\node [style=X dot] (41) at (1.75, 1.5) {};
		\node [style=X dot] (42) at (1.75, 0.5) {};
		\node [style=none] (43) at (-3.75, 1.5) {};
		\node [style=none] (44) at (-3.75, 0.5) {};
		\node [style=X dot] (45) at (-2.75, -0.5) {};
		\node [style=none] (46) at (-0.5, 1.5) {};
		\node [style=none] (47) at (-0.5, 0.5) {};
		\node [style=none] (48) at (-0.5, -0.5) {};
		\node [style=none] (49) at (-0.5, -1.5) {};
		\node [style=Z dot] (50) at (-1.25, 1.5) {};
		\node [style=Z dot] (51) at (-1.25, 0.5) {};
		\node [style=Z dot] (52) at (-1.25, -0.5) {};
		\node [style=Z dot] (53) at (-1.25, -1.5) {};
		\node [style=X dot] (54) at (-2.75, 1.5) {};
		\node [style=X dot] (55) at (-2.75, 0.5) {};
		\node [style=none] (56) at (5.5, 1.5) {};
		\node [style=none] (57) at (5.5, 0.5) {};
		\node [style=X dot] (58) at (6.5, -0.5) {};
		\node [style=none] (59) at (8.75, 1.5) {};
		\node [style=none] (60) at (8.75, 0.5) {};
		\node [style=none] (61) at (8.75, -0.5) {};
		\node [style=none] (62) at (8.75, -1.5) {};
		\node [style=Z dot] (63) at (8, 1.5) {};
		\node [style=Z dot] (64) at (8, 0.5) {};
		\node [style=Z dot] (65) at (8, -0.5) {};
		\node [style=Z dot] (66) at (8, -1.5) {};
		\node [style=X dot] (67) at (6.5, 1.5) {};
		\node [style=X dot] (68) at (6.5, 0.5) {};
	\end{pgfonlayer}
	\begin{pgfonlayer}{edgelayer}
		\draw (27) to (23.center);
		\draw (22.center) to (26);
		\draw [style=Z Web] (25) to (21.center);
		\draw [style=Z Web] (24) to (20.center);
		\draw (19) to (24);
		\draw (19) to (25);
		\draw (26) to (19);
		\draw (19) to (27);
		\draw [style=Z Web] (17.center) to (28);
		\draw (18.center) to (29);
		\draw [style=Z Web] (28) to (24);
		\draw [style=Z Web] (28) to (25);
		\draw (29) to (25);
		\draw (29) to (26);
		\draw (40) to (36.center);
		\draw [style=Z Web] (35.center) to (39);
		\draw [style=Z Web] (38) to (34.center);
		\draw (37) to (33.center);
		\draw (32) to (37);
		\draw (32) to (38);
		\draw (39) to (32);
		\draw (32) to (40);
		\draw (30.center) to (41);
		\draw [style=Z Web] (31.center) to (42);
		\draw (41) to (37);
		\draw (41) to (38);
		\draw [style=Z Web] (42) to (38);
		\draw [style=Z Web] (42) to (39);
		\draw [style=X Web] (53) to (49.center);
		\draw (48.center) to (52);
		\draw (51) to (47.center);
		\draw [style=X Web] (50) to (46.center);
		\draw [style=X Web] (45) to (50);
		\draw (45) to (51);
		\draw (52) to (45);
		\draw [style=X Web] (45) to (53);
		\draw [style=X Web] (43.center) to (54);
		\draw (44.center) to (55);
		\draw [style=X Web] (54) to (50);
		\draw (54) to (51);
		\draw (55) to (51);
		\draw (55) to (52);
		\draw [style=X Web] (66) to (62.center);
		\draw [style=X Web] (61.center) to (65);
		\draw (64) to (60.center);
		\draw (63) to (59.center);
		\draw (58) to (63);
		\draw (58) to (64);
		\draw [style=X Web] (65) to (58);
		\draw [style=X Web] (58) to (66);
		\draw (56.center) to (67);
		\draw [style=X Web] (57.center) to (68);
		\draw (67) to (63);
		\draw (67) to (64);
		\draw (68) to (64);
		\draw [style=X Web] (68) to (65);
	\end{pgfonlayer}
\end{tikzpicture} \quad
  \begin{tikzpicture}
	\begin{pgfonlayer}{nodelayer}
		\node [style=none] (0) at (-5, 1.5) {};
		\node [style=none] (1) at (-5, 0.5) {};
		\node [style=X dot] (2) at (-4, -0.5) {};
		\node [style=none] (3) at (-1.75, 1.5) {};
		\node [style=none] (4) at (-1.75, 0.5) {};
		\node [style=none] (5) at (-1.75, -0.5) {};
		\node [style=none] (6) at (-1.75, -1.5) {};
		\node [style=Z dot] (7) at (-2.5, 1.5) {};
		\node [style=Z dot] (8) at (-2.5, 0.5) {};
		\node [style=Z dot] (9) at (-2.5, -0.5) {};
		\node [style=Z dot] (10) at (-2.5, -1.5) {};
		\node [style=X dot] (11) at (-4, 1.5) {};
		\node [style=X dot] (12) at (-4, 0.5) {};
		\node [style=none] (13) at (0, 1.5) {};
		\node [style=none] (14) at (0, 0.5) {};
		\node [style=X dot] (15) at (1, -0.5) {};
		\node [style=none] (16) at (3.25, 1.5) {};
		\node [style=none] (17) at (3.25, 0.5) {};
		\node [style=none] (18) at (3.25, -0.5) {};
		\node [style=none] (19) at (3.25, -1.5) {};
		\node [style=Z dot] (20) at (2.5, 1.5) {};
		\node [style=Z dot] (21) at (2.5, 0.5) {};
		\node [style=Z dot] (22) at (2.5, -0.5) {};
		\node [style=Z dot] (23) at (2.5, -1.5) {};
		\node [style=X dot] (24) at (1, 1.5) {};
		\node [style=X dot] (25) at (1, 0.5) {};
	\end{pgfonlayer}
	\begin{pgfonlayer}{edgelayer}
		\draw [style=Z Web] (10) to (6.center);
		\draw [style=Z Web] (5.center) to (9);
		\draw [style=Z Web] (8) to (4.center);
		\draw [style=Z Web] (7) to (3.center);
		\draw [style=Z Web] (2) to (7);
		\draw [style=Z Web] (2) to (8);
		\draw [style=Z Web] (9) to (2);
		\draw [style=Z Web] (2) to (10);
		\draw (0.center) to (11);
		\draw (1.center) to (12);
		\draw (11) to (7);
		\draw (11) to (8);
		\draw (12) to (8);
		\draw (12) to (9);
		\draw [style=X Web] (23) to (19.center);
		\draw [style=X Web] (18.center) to (22);
		\draw [style=X Web] (21) to (17.center);
		\draw [style=X Web] (20) to (16.center);
		\draw [style=X Web] (15) to (20);
		\draw [style=X Web] (15) to (21);
		\draw [style=X Web] (22) to (15);
		\draw [style=X Web] (15) to (23);
		\draw (13.center) to (24);
		\draw (14.center) to (25);
		\draw [style=X Web] (24) to (20);
		\draw [style=X Web] (24) to (21);
		\draw [style=X Web] (25) to (21);
		\draw [style=X Web] (25) to (22);
	\end{pgfonlayer}
\end{tikzpicture}
  \]
\end{example}

\subsection{Pauli Flow}\label{sec:Pauli-flow}

The standard flow conditions (causal flow, generalised flow, Pauli flow) were originally defined for graph states, but they also make sense for \emph{graph-like} ZX-diagrams~\cite{duncan2019graph,Backens2020extraction}.

\begin{definition}
A ZX-diagram is called \textit{graph-like} if:
\begin{enumerate}
 \item every spider is of type $Z$ and internal wires only connect $Z$-spiders to $H$-gates;
 \item no $H$-gate is connected to the same spider via both adjacent wires (no self-loops) and no two $H$-gates are connected to the same pair of spiders (no parallel edges); and
 \item every input and output is connected to a spider via a plain edge, and every spider is connected to at most 1 input and at most 1 output.
\end{enumerate}
\end{definition}

Graph-like diagrams are those ZX-diagrams $D$ whose structure is uniquely fixed by an \textit{open graph} $G_D$. An open graph is a simple, undirected graph with chosen (not necessarily disjoint) subsets of input and output vertices. $G_D$ is the open graph whose vertices are $Z$-spiders and whose edges are connections between spiders via an $H$-gate. Input vertices are the set of spiders adjacent to an input wire and output vertices are the set of spiders adjacent to an output wire. We will represent the $H$-edges using dashed blue lines following~\cite{duncan2019graph}.

An open graph, along with an ordering of its vertices and a choice of measurement angles, defines part of a measurement-based quantum computation (MBQC) in the one-way model~\cite{browne2007gflow}. In MBQC, one prepares a type of stabiliser state called a \textit{graph state} and performs single-qubit measurements, which can be either Pauli measurements or measurements within one of the three planes $XY$, $YZ$, or $XZ$ of the Bloch sphere.
\begin{equation}\label{eq:bloch-planes}
  \begin{tikzpicture}
	\begin{pgfonlayer}{nodelayer}
		\node [style=none] (0) at (-6.5, 2) {};
		\node [style=none] (1) at (-8.5, 0) {};
		\node [style=none] (2) at (-4.5, 0) {};
		\node [style=none] (3) at (-6.5, -2) {};
		\node [style=vertex] (4) at (-6.5, 2) {};
		\node [style=vertex] (5) at (-8.5, 0) {};
		\node [style=vertex] (6) at (-6.5, -2) {};
		\node [style=vertex] (7) at (-4.5, 0) {};
		\node [style=vertex] (8) at (-5.975, 0.55) {};
		\node [style=vertex] (9) at (-7.075, -0.575) {};
		\node [style=none] (10) at (-13, -2) {};
		\node [style=none] (11) at (-13, -0.75) {};
		\node [style=none] (12) at (-11.75, -2) {};
		\node [style=none] (13) at (-12, -1.4) {};
		\node [style=none] (14) at (-13, -0.325) {$Z$};
		\node [style=none] (15) at (-11.325, -2) {$X$};
		\node [style=none] (16) at (-11.675, -1.2) {$Y$};
		\node [style=none] (19) at (-8.5, 0) {};
		\node [style=none] (20) at (-4.5, 0) {};
		\node [style=none] (21) at (-6.5, 0) {\footnotesize $XY$};
		\node [style=none] (22) at (-8.5, -2) {};
		\node [style=Z phase dot] (23) at (-9.5, -2) {$\alpha$};
		\node [style=none] (24) at (-2.5, 2.25) {};
		\node [style=Z phase dot] (25) at (-3.75, 2.25) {$\ \alpha+\pi~$};
		\node [style=none] (26) at (-9.5, -1.5) {};
		\node [style=none] (27) at (-7.925, -0.675) {};
		\node [style=none] (28) at (-3.75, 1.75) {};
		\node [style=none] (29) at (-5.075, 0.7) {};
		\node [style=none] (30) at (-0.075, 2) {};
		\node [style=none] (31) at (-2.075, 0) {};
		\node [style=none] (32) at (1.925, 0) {};
		\node [style=none] (33) at (-0.075, -2) {};
		\node [style=vertex] (34) at (-0.075, 2) {};
		\node [style=vertex] (35) at (-2.075, 0) {};
		\node [style=vertex] (36) at (-0.075, -2) {};
		\node [style=vertex] (37) at (1.925, 0) {};
		\node [style=vertex] (38) at (0.45, 0.55) {};
		\node [style=vertex] (39) at (-0.65, -0.575) {};
		\node [style=none] (40) at (-0.075, 2) {};
		\node [style=none] (41) at (-0.075, -2) {};
		\node [style=none] (42) at (-0.075, 0) {\footnotesize $YZ$};
		\node [style=none] (43) at (-1.825, -2) {};
		\node [style=X phase dot] (44) at (-2.825, -2) {$\alpha$};
		\node [style=none] (45) at (3.425, 2.25) {};
		\node [style=X phase dot] (46) at (2.175, 2.25) {$\ \alpha+\pi~$};
		\node [style=none] (47) at (-1.575, -2) {};
		\node [style=none] (48) at (-0.525, -1.575) {};
		\node [style=none] (49) at (1.175, 2.25) {};
		\node [style=none] (50) at (0.425, 1.525) {};
		\node [style=none] (51) at (6.425, 2) {};
		\node [style=none] (52) at (4.425, 0) {};
		\node [style=none] (53) at (8.425, 0) {};
		\node [style=none] (54) at (6.425, -2) {};
		\node [style=vertex] (55) at (6.425, 2) {};
		\node [style=vertex] (56) at (4.425, 0) {};
		\node [style=vertex] (57) at (6.425, -2) {};
		\node [style=vertex] (58) at (8.425, 0) {};
		\node [style=vertex] (59) at (6.95, 0.55) {};
		\node [style=vertex] (60) at (5.85, -0.575) {};
		\node [style=none] (61) at (4.425, 0) {};
		\node [style=none] (62) at (8.425, 0) {};
		\node [style=none] (63) at (6.425, -1.25) {\footnotesize $XZ$};
		\node [style=none] (64) at (5.175, -2) {};
		\node [style=X phase dot] (65) at (3.425, -2) {$\alpha$};
		\node [style=none] (66) at (11.425, 2.25) {};
		\node [style=X phase dot] (67) at (9.175, 2.25) {$\ \alpha+\pi~$};
		\node [style=none] (68) at (3.425, -1.5) {};
		\node [style=none] (69) at (4.55, -1.125) {};
		\node [style=none] (70) at (9.175, 1.75) {};
		\node [style=none] (71) at (8.275, 1.275) {};
		\node [style=Z phase dot] (72) at (4.5, -2) {$\frac\pi2$};
		\node [style=Z phase dot] (73) at (10.675, 2.25) {$\frac\pi2$};
	\end{pgfonlayer}
	\begin{pgfonlayer}{edgelayer}
		\draw [style=light-fill] (41.center)
			 to [in=-150, out=150, looseness=0.50] (40.center)
			 to [in=60, out=-45, looseness=0.75] cycle;
		\draw [style=light-fill] (62.center)
			 to [in=-90, out=-90, looseness=1.75] (61.center)
			 to [in=90, out=90, looseness=1.75] cycle;
		\draw [bend right=45] (0.center) to (1.center);
		\draw [bend right=45] (1.center) to (3.center);
		\draw [bend right=45] (3.center) to (2.center);
		\draw [bend right=45] (2.center) to (0.center);
		\draw [style=dashed edge, in=120, out=60, looseness=0.50] (5) to (7);
		\draw [in=-120, out=-60, looseness=0.50] (5) to (7);
		\draw [style=pointer edge] (10.center) to (11.center);
		\draw [style=pointer edge] (10.center) to (12.center);
		\draw [style=pointer edge] (10.center) to (13.center);
		\draw [style=light-fill] (20.center)
			 to [in=-75, out=-105, looseness=0.50] (19.center)
			 to [in=105, out=75, looseness=0.50] cycle;
		\draw (23) to (22.center);
		\draw (25) to (24.center);
		\draw [style=pointer edge, in=-135, out=90] (26.center) to (27.center);
		\draw [style=pointer edge, in=60, out=-90, looseness=0.75] (28.center) to (29.center);
		\draw [bend right=45] (30.center) to (31.center);
		\draw [bend right=45] (31.center) to (33.center);
		\draw [bend right=45] (33.center) to (32.center);
		\draw [bend right=45] (32.center) to (30.center);
		\draw [style=dashed edge, in=120, out=60, looseness=0.50] (35) to (37);
		\draw [in=-120, out=-60, looseness=0.50] (35) to (37);
		\draw (44) to (43.center);
		\draw (46) to (45.center);
		\draw [style=pointer edge, in=-120, out=0] (47.center) to (48.center);
		\draw [style=pointer edge, in=60, out=180] (49.center) to (50.center);
		\draw [bend right=45] (51.center) to (52.center);
		\draw [bend right=45] (52.center) to (54.center);
		\draw [bend right=45] (54.center) to (53.center);
		\draw [bend right=45] (53.center) to (51.center);
		\draw [style=dashed edge, in=120, out=60, looseness=0.50] (56) to (58);
		\draw [in=-120, out=-60, looseness=0.50] (56) to (58);
		\draw (65) to (64.center);
		\draw (67) to (66.center);
		\draw [style=pointer edge, in=-150, out=90] (68.center) to (69.center);
		\draw [style=pointer edge, in=30, out=-90] (70.center) to (71.center);
	\end{pgfonlayer}
\end{tikzpicture}
\end{equation}

Graph-like ZX-diagrams can be interpreted as graph states, where each of the non-output vertices is being measured in a basis determined by its phase, e.g.
\begin{equation}\label{eq:graph-state-ex}
\begin{tikzpicture}
	\begin{pgfonlayer}{nodelayer}
		\node [style=none] (0) at (-1.25, 1) {};
		\node [style=none] (2) at (-1.25, 0) {};
		\node [style=none] (3) at (-1.25, -1) {};
		\node [style=Z dot] (5) at (-2.75, 1) {};
		\node [style=Z dot] (6) at (-3.75, -1) {};
		\node [style=Z dot] (7) at (-2.75, 0) {};
		\node [style=Z phase dot] (11) at (-4.75, 0.25) {$\frac{3\pi}{4}$};
		\node [style=Z phase dot] (13) at (-7, -1) {$\frac{2\pi}{3}$};
		\node [style=Z phase dot] (14) at (-4.75, 1.75) {$\text{-}\frac\pi2$};
		\node [style=Z dot] (15) at (-5.75, -1) {};
		\node [style=none] (17) at (-0.25, 0) {=};
		\node [style=none] (18) at (3.75, 0.5) {};
		\node [style=none] (19) at (3.75, -0.5) {};
		\node [style=none] (20) at (3.75, -1.25) {};
		\node [style=Z dot] (21) at (2.75, 0.5) {};
		\node [style=Z dot] (22) at (2.75, -1.25) {};
		\node [style=Z dot] (23) at (2.75, -0.5) {};
		\node [style=Z phase dot] (24) at (2.75, 3.25) {$\frac{3\pi}{4}$};
		\node [style=Z phase dot] (25) at (3.75, -2.25) {$\frac{2\pi}{3}$};
		\node [style=Z phase dot] (26) at (2.75, 1.75) {$\text{-}\frac\pi2$};
		\node [style=Z dot] (27) at (2.75, -2.25) {};
		\node [style=none] (28) at (5.5, 0) {=};
		\node [style=none] (29) at (9.75, 1) {};
		\node [style=none] (30) at (9.75, 0) {};
		\node [style=none] (31) at (9.75, -1) {};
		\node [style=Z dot] (32) at (8.75, 1) {};
		\node [style=Z dot] (33) at (8.75, -1) {};
		\node [style=Z dot] (34) at (8.75, 0) {};
		\node [style=Z dot] (35) at (8.75, 3) {};
		\node [style=X phase dot] (36) at (9.75, -2) {$\frac{2\pi}{3}$};
		\node [style=Z dot] (37) at (8.75, 2) {};
		\node [style=Z dot] (38) at (8.75, -2) {};
		\node [style=Z phase dot] (40) at (9.75, 3) {$\frac{3\pi}{4}$};
		\node [style=Z phase dot] (41) at (9.75, 2) {$\text{-}\frac\pi2$};
		\node [style=hadamard] (42) at (-6.5, -1) {};
		\node [style=hadamard] (43) at (-4.75, -1) {};
		\node [style=hadamard] (44) at (-4.25, -0.425) {};
		\node [style=hadamard] (45) at (-5.25, -0.425) {};
		\node [style=hadamard] (46) at (-3.75, 1.35) {};
		\node [style=hadamard] (47) at (-4.75, 1) {};
		\node [style=hadamard] (48) at (-2.75, 0.5) {};
		\node [style=hadamard] (49) at (0.925, 0.25) {};
		\node [style=hadamard] (50) at (1.75, 0.25) {};
		\node [style=hadamard] (51) at (2.75, 1) {};
		\node [style=hadamard] (52) at (2.75, 2.5) {};
		\node [style=hadamard] (53) at (2.75, 0) {};
		\node [style=hadamard] (54) at (7.05, 0.5) {};
		\node [style=hadamard] (55) at (7.7, 0.5) {};
		\node [style=hadamard] (56) at (8.75, 0.5) {};
		\node [style=hadamard] (57) at (8.75, 1.5) {};
		\node [style=hadamard] (58) at (8.75, 2.5) {};
		\node [style=hadamard] (59) at (8.75, -1.5) {};
		\node [style=hadamard] (60) at (2.75, -1.75) {};
	\end{pgfonlayer}
	\begin{pgfonlayer}{edgelayer}
		\draw (0.center) to (5);
		\draw (7) to (2.center);
		\draw (6) to (3.center);
		\draw (13) to (15);
		\draw (11) to (6);
		\draw (14) to (5);
		\draw (14) to (11);
		\draw (15) to (11);
		\draw (15) to (6);
		\draw (18.center) to (21);
		\draw (23) to (19.center);
		\draw (22) to (20.center);
		\draw (25) to (27);
		\draw (21) to (23);
		\draw [bend right=45, looseness=1.25] (24) to (22);
		\draw (26) to (21);
		\draw (26) to (24);
		\draw [bend left=60, looseness=1.25] (27) to (24);
		\draw (27) to (22);
		\draw (29.center) to (32);
		\draw (34) to (30.center);
		\draw (33) to (31.center);
		\draw (36) to (38);
		\draw (32) to (34);
		\draw [bend right=45, looseness=1.25] (35) to (33);
		\draw (37) to (32);
		\draw (37) to (35);
		\draw [bend left=60, looseness=1.25] (38) to (35);
		\draw (38) to (33);
		\draw (35) to (40);
		\draw (41) to (37);
		\draw (5) to (7);
	\end{pgfonlayer}
\end{tikzpicture}
\end{equation}

While individual measurement outcomes are non-deterministic, one can recover deterministic computation via a mechanism called \textit{feed forward}, where later measurement choices are adapted based on intermediate measurement outcomes. We will only very roughly sketch the one-way model and its relationship to the ZX-calculus here, referring the interested reader to e.g.~\cite{Backens2020extraction} or \cite[Chapter~9]{KissingerWetering2024Book} for a more in-depth treatment.

A generic way to define a correction strategy is to define a \textit{correction set} $c(v)$ for each non-output vertex $v$. Let $\overline I := V_G \backslash I$ and $\overline O := V_G \backslash O$ be the \textit{non-inputs} and \textit{non-outputs}, respectively. Let $c : \overline O \to 2^{\overline I}$ be defined as a function from non-output vertices to correction sets.

Similar to a Pauli web, a correction set can be ``fired'' on a graph state to introduce a stabiliser for that graph state. The stabilisers of graph states are all of the form $S_v := X_v \Pi_{w \in \mathcal N(v)} Z_v$. That is, they introduce a Pauli $X$ on a given vertex and a Pauli $Z$ on all neighbours of that vertex. We can ``fire'' a correction set $c(v)$ by introducing the stabiliser $\Pi_{u \in c(v)} S_u$. This will introduce an $X$ on all the vertices in the set $c(v)$ and introduce a $Z$ on all the vertices in the \textit{odd neighbourhood} $\Odd(c(v))$, i.e. the set of all vertices connected to $c(v)$ oddly many times.

Now, we want to arrange our $c(v)$ such that ``firing'' $c(v)$ will flip (i.e. correct) the measurement outcome at $v$ while only disturbing other measurements in the future of $v$. How we choose $c(v)$ depends on what kind of measurement we are doing at $v$, which is dictated by a measurement label $\mu(v) \in \{ X, Y, Z, XY, YZ, XZ \}$. For convenience, we will treat these six measurement labels as 1- and 2-element sets, writing e.g. $X \in \mu(v)$ to mean $\mu(v) = X, XY,$ or $XZ$.

A choice of correction sets accounting for all six of these types of measurements is called \textit{Pauli flow}. Normally, this is stated as nine distinction conditions. However, we can treat all of the measurement choices symmetrically if we define the following three \textit{anti-commutation sets}:
\[
A_X(u) = \Odd(c(u))
\qquad
A_Y(u) = c(u) \Delta \Odd(c(u))
\qquad
A_Z(u) = c(u)
\]
i.e. locations which get a Pauli that anti-commutes with the given Pauli when the stabiliser $\Pi_{u \in c(v)} S_u$ is introduced.

\begin{definition}\label{def:pauli-flow}
  An open graph $(G, I, O)$ has \textit{Pauli flow} if there exists a triple $(\preceq, \mu, c)$ such that:
  \begin{enumerate}
    \item [(i)] $P \in \mu(u) \implies u \in A_P(u)$
    \item [(ii)] $P \in \mu(v), v \in A_P(u) \implies u \preceq v$
  \end{enumerate}
\end{definition}

This is not the standard way Pauli flow is presented in the literature, but it will make the proofs in the following sections much simpler. A proof of equivalence to the standard definition from~\cite{browne2007gflow} is given in the Appendix~\ref{app:pauli-flow-def}.

Although Pauli flow is defined for open graphs, we can also interpret it as a condition on graph-like ZX-diagrams. While in general there are can be different choices of measurement labels compatible with a ZX-diagram, we will assume measurement labels are chosen as follows:
(i) spiders whose phase is a multiple of $\pi$ have $\mu(\nu) = X$,
(ii) spiders whose phase is an odd multiple of $\pi/2$ have $\mu(\nu) = Y$, and
(iii) all other (necessarily non-Clifford) spiders have $\mu(\nu) = XY$.

As explained in~\cite[Section~4.2]{mcelvanney2025thesis}, spiders with measurement labels in the $YZ$ and $XZ$ planes can always be replaced with pairs of spiders, where one is measured in $X$ and $Y$ and the other in $XY$ (e.g.~we interpret the $\frac\pi2$ phases in~\eqref{eq:bloch-planes} for $XZ$ as additional spiders in the diagram). For this reason, we will only consider ZX-diagrams with measurement labels in the set $X$, $Y$, and $XY$ and we choose as many spiders to be Pauli-measured as possible, based on their angle.


\section{Pauli Semiwebs}\label{sec:semiwebs}

In this section, we introduce a new generalisation of Pauli webs, which allow the web conditions to be relaxed at certain locations, which we call \textit{defects}. This generalisation will allow us to define many Pauli semiwebs on non-Clifford ZX-diagrams.

For a spider $\nu \in N$, we define its \emph{twisted local state} $|\nu_\alpha\>$ as $|0\>^{\otimes k} + e^{i \phi(\nu) + \alpha}|1\>^{\otimes k}$ for Z-spiders, $|{+}\>^{\otimes k} + e^{i \phi(\nu) + \alpha}|{-}\>^{\otimes k}$ for X-spiders, and $|\nu\>$ for $H$-gates. Hence, $|\nu_0\> = |\nu\>$ and twisting has no effect on $H$-gates. This is just a notational convenience to avoid a case distinction in the following definition.

\begin{definition}
A \textit{Pauli semiweb} for a diagram $D$ is a Pauli operator $\vec w \in \mathcal P_{|W|}$ such that for all nodes $\nu$ there exists an $\alpha$ such that:
\begin{equation}\label{eq:twist}
\vec w_\nu|\nu\> = \lambda |\nu_\alpha\>
\end{equation}
If $\alpha \neq 0$, then $\nu$ is called an $\alpha$-\textit{defect}, or simply a \textit{defect}, of the Pauli semiweb.
\end{definition}

By definition, a Pauli semiweb with no defects is a Pauli web. If a spider has one or more legs, then $\alpha$ is uniquely fixed by \eqref{eq:twist}. Like Pauli webs, we can characterise Pauli semiwebs in terms of the local Z- and X-support around nodes. In fact, the semiweb conditions are a strict subset of the Pauli web conditions from Theorem~\ref{thm:pauli-web-conds}.

\begin{theorem}\label{thm:semiweb-cond}
A Pauli operator $\vec w \in \mathcal P_{|W|}$ is a Pauli semiweb iff it satisfies the \textbf{H} and \textbf{all-or-nothing} conditions from Theorem~\ref{thm:pauli-web-conds}.
\end{theorem}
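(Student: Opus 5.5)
The plan is to argue node by node, since the semiweb condition \eqref{eq:twist} is local. For an $H$-gate twisting has no effect, so the condition is exactly that $|\nu\>$ is an eigenstate of $\vec w_\nu$. By the Pauli-web analysis underlying Theorem~\ref{thm:pauli-web-conds}, this is precisely the \textbf{H} condition. It remains to show that for a $Z$-spider $\nu$ with $k$ legs, $\vec w_\nu|\nu\> = \lambda|\nu_\alpha\>$ for some $\alpha$ and $\lambda$ iff the \textbf{all-or-nothing} condition holds. The $X$-spider case then follows by conjugating every wire by Hadamards, which exchanges $X$- and $Z$-support and the two spider types.

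Write $\vec w_\nu = \mu\, X^{\vec x} Z^{\vec z}$, with $\vec x,\vec z\in\mathbb F_2^k$ recording $X$- and $Z$-support, and a phase $\mu$ absorbing the $Y$'s. Apply it to $|0\>^{\otimes k} + e^{i\phi}|1\>^{\otimes k}$. The factor $Z^{\vec z}$ only multiplies the $|1\cdots1\>$ term by $(-1)^{|\vec z|}$. The factor $X^{\vec x}$ sends $|0\cdots0\>$ to $|\vec x\>$ and $|1\cdots1\>$ to $|\bar{\vec x}\>$.

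\textbf{Sufficiency.} If $\vec x = 0$, the result is $\mu(|0\cdots0\> + e^{i(\phi+|\vec z|\pi)}|1\cdots1\>)$, a twisted local state with $\alpha = |\vec z|\pi$. If $\vec x = 1\cdots1$, the result is $\mu(-1)^{|\vec z|}e^{i\phi}\,(|0\cdots0\> + e^{-i\phi}(-1)^{|\vec z|}|1\cdots1\>)$. This has the required form with $\lambda = \mu(-1)^{|\vec z|}e^{i\phi}$ and $\alpha = -2\phi + |\vec z|\pi$, where we read $e^{i\phi(\nu)+\alpha}$ in the definition as $e^{i(\phi(\nu)+\alpha)}$.

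\textbf{Necessity.} If $\vec x$ is neither all-zero nor all-one, the output is a nonzero combination of the basis states $|\vec x\>$ and $|\bar{\vec x}\>$. Neither of these is $|0\cdots0\>$ or $|1\cdots1\>$. But every $|\nu_\alpha\>$ has nonzero amplitude on $|0\cdots0\>$, and $\lambda\neq0$ because Paulis are invertible. So no $\alpha$ works.

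The edge case $k=0$ is vacuous, since then the condition trivially holds. I expect no step to be a real obstacle. The only care needed is the phase bookkeeping for the all-ones case (including the $Y$ phases), and confirming that the $H$-gate case really coincides with the \textbf{H} condition as stated earlier.
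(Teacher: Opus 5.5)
Your proposal is correct and follows essentially the same route as the paper's proof. Both reduce to a node-local check, note that $H$-gates are unaffected by twisting, and at a $Z$-spider split $\vec w_\nu$ into a $Z$-part and an $X$-part: the $Z$-part only contributes a $\pi$ twist by parity, and the $X$-part keeps the form $|0\cdots0\rangle + e^{i\beta}|1\cdots1\rangle$ iff it is trivial or $X^{\otimes k}$, with $X$-spiders handled by symmetry. Your explicit twist $\alpha = -2\phi + |\vec z|\pi$ and the amplitude-on-$|0\cdots0\rangle$ argument make concrete the paper's step that $\vec P_X$ must preserve or swap $|0\cdots0\rangle$ and $|1\cdots1\rangle$, and they agree with the $-2\alpha$, $\pi$, $\pi-2\alpha$ defect values stated after the theorem.
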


As a consequence of this theorem, defects are introduced precisely when the \textbf{Clifford} and \textbf{parity} conditions from Theorem~\ref{thm:pauli-web-conds} are violated. If the \textbf{Clifford} condition is violated, we get a $-2\alpha$ defect, if the \textbf{parity} condition is violated, we get a $\pi$ defect, and if both are violated, we get a $\pi - 2\alpha$ defect. The proof of Theorem~\ref{thm:semiweb-cond} is in Appendix~\ref{app:semiweb}. 
The following result is an immediate consequence of the fact that products of Pauli operators preserve the \textbf{H} and \textbf{all-or-nothing} criteria. 

\begin{corollary}\label{cor:semiweb-product}
The product of Pauli semiwebs is a Pauli semiweb.
\end{corollary}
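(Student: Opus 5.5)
The plan is to reduce the claim to Theorem~\ref{thm:semiweb-cond}. That theorem says a Pauli operator $\vec w \in \mathcal P_{|W|}$ is a Pauli semiweb exactly when it satisfies the \textbf{H} and \textbf{all-or-nothing} conditions. So, given semiwebs $\vec w$ and $\vec w'$, it suffices to show that $\vec w \vec w'$ satisfies both conditions. Global phases of the product are irrelevant, since the conditions only depend on the supports. The key observation is that, up to phase, the Pauli group on each wire is the group $\mathbb F_2^2$. Its coordinates are ($Z$-support, $X$-support), and multiplication of Paulis corresponds to componentwise addition mod 2. Both conditions will turn out to be linear in these coordinates.

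\textbf{Step 1 (H condition).} Fix an $H$-gate with wires $w_1,w_2$. The allowed set $\{II, XZ, ZX, YY\}$, written in ($Z$,$X$)-support coordinates on the pair of wires, is
\[
\{(00,00),\,(01,10),\,(10,01),\,(11,11)\}.
\]
This is the subspace $\{(a,b,b,a) : a,b\in\mathbb F_2\}$ of $\mathbb F_2^4$. In other words, it is defined by the linear equations ``$Z$-support on $w_1$ equals $X$-support on $w_2$'' and ``$X$-support on $w_1$ equals $Z$-support on $w_2$''. Since the set is closed under addition, the restriction of $\vec w\vec w'$ to the $H$-gate's neighbourhood also lies in it.

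\textbf{Step 2 (all-or-nothing).} For a spider $\nu$ with adjacent wires $j_1,\dots,j_k$, let $s_j\in\mathbb F_2$ denote opposite-type support on wire $j$. The condition says $(s_{j_1},\dots,s_{j_k})$ is either all $0$ or all $1$, i.e.\ $s_{j_1}=\dots=s_{j_k}$. This is again a set of linear equations, so the sum of two solutions is a solution. Concretely, the product has opposite-type support on $j$ iff exactly one factor does. If both factors are constant on $a(\nu)$, their sum is constant as well.

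No step here is a real obstacle. The only care needed is to make explicit that ``support'' is well defined modulo phases and is additive under Pauli multiplication: multiplying $X$, $Y$, $Z$ on a single qubit XORs the $Z$- and $X$-components. With that in place, Theorem~\ref{thm:semiweb-cond} gives that $\vec w\vec w'$ is a Pauli semiweb. As a by-product, the set of Pauli semiwebs forms a group, namely an $\mathbb F_2$-subspace modulo phases, which is useful later for computing generating sets by linear algebra.
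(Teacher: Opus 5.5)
Your proof is correct and follows the same route as the paper, which also derives the corollary directly from Theorem~\ref{thm:semiweb-cond} by observing that Pauli multiplication preserves the \textbf{H} and \textbf{all-or-nothing} conditions. Your $\mathbb F_2$-linearity check on each $H$-gate and each spider neighbourhood makes explicit what the paper treats as immediate.
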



Because of this, we can also characterise Pauli semiwebs by identifying a generating set of semiwebs.

\begin{definition}\label{def:basic-semiweb}
  For a spider $\nu$ in $D$, the \textit{basic semiweb} $\vec b_\nu$ of $\nu$ is the smallest semiweb that has support of the opposite colour on a wire adjacent to $\nu$. That is, it has X-support if $\nu$ is a Z-spider and it has Z-support if $\nu$ is an X-spider.
\end{definition}

\begin{figure}
  \begin{equation*}
    \begin{tikzpicture}
	\begin{pgfonlayer}{nodelayer}
		\node [style=Z phase dot] (0) at (-0.5, 1) {$\alpha$};
		\node [style=X dot] (1) at (-0.5, 2) {};
		\node [style=X dot] (2) at (-2.5, 1) {};
		\node [style=Z dot] (3) at (-0.5, 0) {};
		\node [style=Z dot] (4) at (1.25, 1) {};
		\node [style=X dot] (5) at (1.25, 0) {};
		\node [style=X dot] (6) at (1.25, -1) {};
		\node [style=hadamard] (7) at (0.5, 1) {};
		\node [style=hadamard] (8) at (-1.5, 1) {};
		\node [style=Z dot] (9) at (-2.5, 0) {};
		\node [style=none] (10) at (2, 0) {};
		\node [style=none] (11) at (2, -1) {};
		\node [style=none] (12) at (2, 1) {};
		\node [style=none] (13) at (-1.75, 0) {};
		\node [style=none] (14) at (0.25, 2) {};
	\end{pgfonlayer}
	\begin{pgfonlayer}{edgelayer}
		\draw [style=Z Web] (8) to (2);
		\draw [style=Z Web] (7) to (4);
		\draw [style=Z Web] (2) to (9);
		\draw [style=X Web] (0) to (8);
		\draw [style=X Web] (0) to (7);
		\draw [style=X Web] (0) to (1);
		\draw [style=X Web] (0) to (3);
		\draw [style=X Web] (3) to (5);
		\draw [style=X Web] (3) to (6);
		\draw (5) to (10.center);
		\draw (6) to (11.center);
		\draw (4) to (12.center);
		\draw (9) to (13.center);
		\draw (1) to (14.center);
	\end{pgfonlayer}
\end{tikzpicture} \qquad\quad \begin{tikzpicture}
	\begin{pgfonlayer}{nodelayer}
		\node [style=Z phase dot] (0) at (1, 1.75) {$\beta$};
		\node [style=X phase dot] (2) at (-1, 1.75) {$\alpha$};
		\node [style=hadamard] (8) at (0, 1.75) {};
		\node [style=none] (9) at (-1.75, 2.25) {};
		\node [style=none] (10) at (-1.75, 1.25) {};
		\node [style=none] (11) at (1.75, 2.25) {};
		\node [style=none] (12) at (1.75, 1.25) {};
		\node [style=Z phase dot] (13) at (1, 0) {$\beta$};
		\node [style=Z phase dot] (14) at (-1, 0) {$\alpha$};
		\node [style=none] (16) at (-1.75, 0.5) {};
		\node [style=none] (17) at (-1.75, -0.5) {};
		\node [style=none] (18) at (1.75, 0.5) {};
		\node [style=none] (19) at (1.75, -0.5) {};
		\node [style=Z phase dot] (20) at (0, -1.25) {$\alpha$};
		\node [style=none] (21) at (1.75, -1.25) {};
		\node [style=none] (22) at (0.25, -2.25) {};
		\node [style=none] (23) at (1.5, -1.5) {};
		\node [style=none] (24) at (-1.5, -2.25) {{\scriptsize input/output}};
	\end{pgfonlayer}
	\begin{pgfonlayer}{edgelayer}
		\draw [style=Z Web] (8) to (2);
		\draw [style=X Web] (0) to (8);
		\draw (10.center) to (2);
		\draw (9.center) to (2);
		\draw (0) to (12.center);
		\draw (0) to (11.center);
		\draw (17.center) to (14);
		\draw (16.center) to (14);
		\draw (13) to (19.center);
		\draw (13) to (18.center);
		\draw [style=Z Web] (13) to (14);
		\draw [style=Z Web] (20) to (21.center);
		\draw [style=pointer edge, in=-90, out=0] (22.center) to (23.center);
	\end{pgfonlayer}
\end{tikzpicture} \qquad \quad \begin{tikzpicture}
	\begin{pgfonlayer}{nodelayer}
		\node [style=Z dot] (0) at (-4, 0) {};
		\node [style=X dot] (1) at (-3, 0) {};
		\node [style=X phase dot] (6) at (-2, -0.5) {$\frac\pi2$};
		\node [style=none] (11) at (-1, -0.5) {};
		\node [style=X dot] (14) at (-2, 0.5) {};
		\node [style=none] (15) at (-5, 0) {};
		\node [style=none] (16) at (-1, 0.5) {};
		\node [style=none] (17) at (0, 0) {=};
		\node [style=Z dot] (18) at (2, 1.25) {};
		\node [style=X dot] (19) at (3, 1.25) {};
		\node [style=X phase dot] (20) at (4, 0.75) {$\frac\pi2$};
		\node [style=none] (21) at (5, 0.75) {};
		\node [style=X dot] (22) at (4, 1.75) {};
		\node [style=none] (23) at (1, 1.25) {};
		\node [style=none] (24) at (5, 1.75) {};
		\node [style=none] (32) at (2.75, 0) {$\cdot$};
		\node [style=Z dot] (33) at (2, -1.25) {};
		\node [style=X dot] (34) at (3, -1.25) {};
		\node [style=X phase dot] (35) at (4, -1.75) {$\frac\pi2$};
		\node [style=none] (36) at (5, -1.75) {};
		\node [style=X dot] (37) at (4, -0.75) {};
		\node [style=none] (38) at (1, -1.25) {};
		\node [style=none] (39) at (5, -0.75) {};
	\end{pgfonlayer}
	\begin{pgfonlayer}{edgelayer}
		\draw [style=Z Web] (0) to (1);
		\draw [style=Y Web] (6) to (11.center);
		\draw [style=Y Web] (1) to (14);
		\draw [style=Z Web] (15.center) to (0);
		\draw [style=Z Web] (14) to (16.center);
		\draw [style=Z Web] (1) to (6);
		\draw [style=Z Web] (18) to (19);
		\draw [style=Z Web] (20) to (21.center);
		\draw [style=Z Web] (19) to (22);
		\draw (23.center) to (18);
		\draw [style=Z Web] (22) to (24.center);
		\draw [style=Z Web] (19) to (20);
		\draw (33) to (34);
		\draw [style=X Web] (35) to (36.center);
		\draw [style=X Web] (34) to (37);
		\draw [style=Z Web] (38.center) to (33);
		\draw (37) to (39.center);
		\draw (34) to (35);
	\end{pgfonlayer}
\end{tikzpicture}
  \end{equation*}
  \caption{Left: Example of a basic semiweb. Note that all the spiders on the boundary of the web have a $\pi$-defect. Middle: different possibilities for an edge semiweb. Right: decomposing a Pauli semiweb into a basic semiweb (top) and edge semiwebs (bottom).}
  \label{fig:semiwebs}
\end{figure}
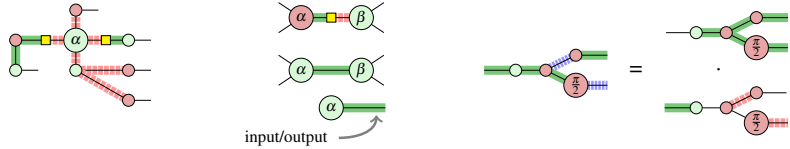

For a graph-like ZX-diagram, the basic semiweb $\vec b_\nu$ at a spider $\nu$ is the Pauli semiweb with an $X$ operator on all of the wires adjacent to $\nu$ and a $Z$ operator on all wires that connect via an $H$-gate to wires adjacent to $\nu$. In other words, it colours all of the $H$-edges, inputs, and outputs adjacent to $\nu$. 
In general, basic semiwebs will colour fusable components of the diagram. We say a pair of spiders in \textit{fusable} if they are either the same colour and connected via a plain edge or opposite colours and connected by an $H$-edge. A fusable component is a connected sub-diagram of fusable spiders. For a spider $\nu$, the basic semiweb $\vec b_\nu$ colours the edges adjacent to every spider in the fusable component of $\nu$. See Figure~\ref{fig:semiwebs} for an example.

Basic semiwebs nearly generate all of the Pauli semiwebs. However, there is one case that they don't cover: if a plain edge or $H$-edge connects a fusable pair of spiders or it contains an input/output wire, we can always form a semiweb consisting just of that edge, with defects at each adjacent spider.

\begin{definition}
  An \textit{edge semiweb} is a Pauli semiweb that colours a single plain edge $X$ or $Z$ or a single $H$-edge $XZ$.
\end{definition}

\begin{theorem}\label{thm:generating-set-of-spiders}
Any Pauli semiweb can be written as $\vec w = \vec e\, \prod_{\nu \in B} \vec b_\nu$ for $\vec e$ a product of edge semiwebs. In this case, we call $B$ a \textit{generating set of spiders} for the Pauli semiweb $\vec w$.
\end{theorem}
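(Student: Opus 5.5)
By Theorem~\ref{thm:semiweb-cond}, a Pauli semiweb is exactly a Pauli operator satisfying the \textbf{H} and \textbf{all-or-nothing} conditions. Since Paulis multiply modulo phases by adding their $X$- and $Z$-support vectors over $\mathbb F_2$, I will work with $\vec w$ as a pair of support vectors $(x,z) \in \mathbb F_2^{W} \times \mathbb F_2^{W}$. By Corollary~\ref{cor:semiweb-product}, the semiwebs form a linear subspace of this space. The plan is to peel off basic semiwebs until only a residue built from edge semiwebs is left.

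The key observation concerns the ``opposite type'' support at each spider $\nu$. This is $X$-support for $Z$-spiders and $Z$-support for $X$-spiders. By \textbf{all-or-nothing}, it is either empty or full on $a(\nu)$, so it defines a bit $s_\nu(\vec w) \in \mathbb F_2$. I would define
\[
B := \{ \nu \mid s_\nu(\vec w) = 1 \}
\]
in a canonical way, one spider per fusable component. The argument for this goes as follows. The \textbf{H} condition turns $X$ into $Z$ across an $H$-gate, and plain wires carry the same Pauli at both ends. Together these force $s_\nu = s_\mu$ whenever $\nu,\mu$ are fusable neighbours. Hence $s$ is constant on each fusable component $C$. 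The basic semiweb $\vec b_\nu$ for $\nu \in C$ has $s_\mu(\vec b_\nu) = 1$ exactly for $\mu \in C$, as described after Definition~\ref{def:basic-semiweb}. So I take one representative $\nu_C$ of each component $C$ on which $s \equiv 1$, and form $\vec w' := \vec w \prod_{C} \vec b_{\nu_C}$. This $\vec w'$ is again a semiweb, and $s_\mu(\vec w') = 0$ for every spider $\mu$.

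It remains to show that a semiweb $\vec w'$ with no opposite-type support anywhere is a product of edge semiwebs; this is the main step. I would argue edge by edge. Take a wire $j$ adjacent to a spider $\nu$. Since $\vec w'$ has no opposite-type support at $\nu$, the Pauli $\vec w'_j$ can only be $I$ or the same-type Pauli: $Z$ if $\nu$ is a $Z$-spider, $X$ if it is an $X$-spider.

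Now consider the possible edges $e \in E_D$ on which $\vec w'$ is nontrivial:
\begin{itemize}
\item[(a)] A plain edge between two spiders. The Pauli must be same-type for both ends, so they share a colour, i.e.\ they are fusable.
\item[(b)] An $H$-edge. By \textbf{H}, the two wires carry $XZ$, $ZX$ or $YY$. $YY$ is excluded, and the remaining options force the two endpoint spiders to have opposite colours, i.e.\ they are fusable.
\item[(c)] An edge touching an input or output, or a bare wire. This is unconstrained at the open end.
\end{itemize}
In each case the restriction of $\vec w'$ to $e$ is precisely an edge semiweb. The wires of distinct edges are disjoint, and $\vec w'$ is the product of its restrictions to the edges of $E_D$. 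So $\vec w' = \vec e$ is a product of edge semiwebs. Finally, the $\vec b_\nu$ are self-inverse up to scalar, so
\[
\vec w = \vec e \prod_{\nu \in B} \vec b_\nu
\]
up to a global phase, which the statement implicitly ignores (Paulis as semiwebs are taken modulo phase).

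The main obstacles are two points that need careful checking. The first is the claim that $s_\nu$ propagates across every fusable connection; this needs a case check on plain versus $H$-edges, and on chains of $H$-gates joined by plain wires. The second is the exact shape of basic semiwebs in non-graph-like diagrams, in particular that $\vec b_\nu$ exists and has $s$-support exactly on the fusable component of $\nu$. I would establish the latter by constructing $\vec b_\nu$ explicitly: colour all edges adjacent to the component. Then verify \textbf{H} and \textbf{all-or-nothing} locally, noting that spiders just outside the component receive only same-type support and so merely acquire defects.
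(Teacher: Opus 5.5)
Your proposal is correct and follows essentially the same argument as the paper. You multiply off basic semiwebs until no spider carries opposite-type support, then observe that the residue sits on plain/$H$-edges joining fusable spiders or touching the boundary, so it is a product of edge semiwebs. The only difference is that you do the first phase in one batch, taking one $\vec b_{\nu_C}$ per fusable component and justifying this by $s$ being constant on components, whereas the paper does it iteratively, using the minimality of $\vec b_\nu$ to argue that $\vec w$'s support contains $\vec b_\nu$'s.

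One small fix: the displayed $B := \{\nu \mid s_\nu(\vec w)=1\}$ should be the set of chosen representatives $\{\nu_C\}$. Taking every spider with $s_\nu = 1$ multiplies in the same $\vec b_{\nu_C}$ once per element of $C$, which cancels it on components of even size.
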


The proof is postponed to Appendix~\ref{app:semiweb}.
Note that when $D$ is graph-like, the set of generating spiders of a semiweb is unique, so we call it \textit{the} generating set of spiders for $\vec w$. See Figure~\ref{fig:semiwebs} for an example of decomposing a Pauli web into basic and edge semiwebs. 


\section{ZX-flow}\label{sec:zx-flow}

Now that we have established some basic concepts for Pauli semiwebs, we are ready to state our main definition.

\begin{definition}
We say ZX-diagram $D$ has \textit{ZX-flow} if there exists a partial ordering $\preceq \ \subseteq T \times T$ on the non-Clifford spiders $T \subseteq N_D$ and we have the following Pauli semiwebs:
\begin{itemize}
  \item for every input wire $i \in I$, a pair of \textit{logical semiwebs} $\ell_Z(i)$ and $\ell_X(i)$ which only have defects at non-Clifford spiders and whose restrictions to input wires are $Z_i$ and $X_i$, respectively; and
  \item for every non-Clifford spider $\nu \in T$, a \textit{flow semiweb} $f(\nu)$ that has a $\pi$-defect at $\nu$ and all other defects at non-Clifford spiders $\nu'$ where $\nu \preceq \nu'$.
\end{itemize}
\end{definition}

\begin{example}\label{ex:ZX-flow}
  The following ZX-diagram has the following logical Z- and X-semiwebs for input $i_1$ and flow semiwebs for the two non-Clifford spiders $\nu_1, \nu_2$:
  \begin{equation*}
    \begin{tikzpicture}
	\begin{pgfonlayer}{nodelayer}
		\node [style=X phase dot] (7) at (-2.75, 0.75) {$\frac\pi4$};
		\node [style=Z phase dot] (8) at (-4.25, 0) {$\frac\pi4$};
		\node [style=none] (9) at (-5.5, 0) {};
		\node [style=Z dot] (11) at (-1.75, 1.25) {};
		\node [style=Z dot] (12) at (-1.75, 0) {};
		\node [style=none] (13) at (-1, 1.25) {};
		\node [style=none] (14) at (-1, 0) {};
		\node [style=Z dot] (15) at (-3.25, -1.25) {};
		\node [style=hadamard] (16) at (-3.725, -0.675) {};
		\node [style=hadamard] (17) at (-2.5, -1.25) {};
		\node [style=Z dot] (18) at (-1.75, -1.25) {};
		\node [style=none] (19) at (-1, -1.25) {};
		\node [style=hadamard] (20) at (-2.5, -0.625) {};
		\node [style=X phase dot] (21) at (7.25, 0.75) {$\frac\pi4$};
		\node [style=Z phase dot] (22) at (5.75, 0) {$\frac\pi4$};
		\node [style=none] (23) at (4.5, 0) {};
		\node [style=Z dot] (24) at (8.25, 1.25) {};
		\node [style=Z dot] (25) at (8.25, 0) {};
		\node [style=none] (26) at (9, 1.25) {};
		\node [style=none] (27) at (9, 0) {};
		\node [style=Z dot] (28) at (6.75, -1.25) {};
		\node [style=hadamard] (29) at (6.275, -0.675) {};
		\node [style=hadamard] (30) at (7.5, -1.25) {};
		\node [style=Z dot] (31) at (8.25, -1.25) {};
		\node [style=none] (32) at (9, -1.25) {};
		\node [style=hadamard] (33) at (7.5, -0.625) {};
		\node [style=none] (34) at (-4.25, 0.5) {*};
		\node [style=none] (35) at (5.75, 0.5) {*};
		\node [style=none] (36) at (7.25, 1.25) {*};
		\node [style=none] (37) at (-8, 0) {$\ell_Z(i_1) :=$};
		\node [style=none] (67) at (2, 0) {$\ell_X(i_1) :=$};
	\end{pgfonlayer}
	\begin{pgfonlayer}{edgelayer}
		\draw [style=Z Web] (9.center) to (8);
		\draw (7) to (12);
		\draw (7) to (11);
		\draw (7) to (8);
		\draw (11) to (13.center);
		\draw (12) to (14.center);
		\draw (8) to (16);
		\draw (16) to (15);
		\draw (19.center) to (18);
		\draw (18) to (17);
		\draw (17) to (15);
		\draw (15) to (20);
		\draw (20) to (12);
		\draw [style=X Web] (23.center) to (22);
		\draw (21) to (25);
		\draw (21) to (24);
		\draw [style=X Web] (21) to (22);
		\draw (24) to (26.center);
		\draw (25) to (27.center);
		\draw [style=X Web] (22) to (29);
		\draw [style=Z Web] (29) to (28);
		\draw [style=X Web] (32.center) to (31);
		\draw [style=X Web] (31) to (30);
		\draw [style=Z Web] (30) to (28);
		\draw (28) to (33);
		\draw (33) to (25);
	\end{pgfonlayer}
\end{tikzpicture} \qquad \begin{tikzpicture}
	\begin{pgfonlayer}{nodelayer}
		\node [style=X phase dot] (0) at (-2.75, 0.75) {$\frac\pi4$};
		\node [style=Z phase dot] (1) at (-4.25, 0) {$\frac\pi4$};
		\node [style=none] (2) at (-5.5, 0) {};
		\node [style=Z dot] (3) at (-1.75, 1.25) {};
		\node [style=Z dot] (4) at (-1.75, 0) {};
		\node [style=none] (5) at (-1, 1.25) {};
		\node [style=none] (6) at (-1, 0) {};
		\node [style=Z dot] (7) at (-3.25, -1.25) {};
		\node [style=hadamard] (8) at (-3.75, -0.65) {};
		\node [style=hadamard] (9) at (-2.5, -1.25) {};
		\node [style=Z dot] (10) at (-1.75, -1.25) {};
		\node [style=none] (11) at (-1, -1.25) {};
		\node [style=hadamard] (12) at (-2.5, -0.625) {};
		\node [style=X phase dot] (13) at (6.75, 0.75) {$\frac\pi4$};
		\node [style=Z phase dot] (14) at (5.25, 0) {$\frac\pi4$};
		\node [style=none] (15) at (4, 0) {};
		\node [style=Z dot] (16) at (7.75, 1.25) {};
		\node [style=Z dot] (17) at (7.75, 0) {};
		\node [style=none] (18) at (8.5, 1.25) {};
		\node [style=none] (19) at (8.5, 0) {};
		\node [style=Z dot] (20) at (6.25, -1.25) {};
		\node [style=hadamard] (21) at (5.75, -0.625) {};
		\node [style=hadamard] (22) at (7, -1.25) {};
		\node [style=Z dot] (23) at (7.75, -1.25) {};
		\node [style=none] (24) at (8.5, -1.25) {};
		\node [style=hadamard] (25) at (7, -0.625) {};
		\node [style=none] (26) at (-4.25, 0.5) {*};
		\node [style=none] (28) at (6.75, 1.25) {*};
		\node [style=none] (29) at (-2.75, 1.25) {*};
		\node [style=none] (30) at (-7.5, 0) {$f(\nu_1) :=$};
		\node [style=none] (31) at (2, 0) {$f(\nu_2) :=$};
	\end{pgfonlayer}
	\begin{pgfonlayer}{edgelayer}
		\draw (2.center) to (1);
		\draw [style=Z Web] (0) to (4);
		\draw [style=Z Web] (0) to (3);
		\draw [style=Z Web] (0) to (1);
		\draw [style=Z Web] (3) to (5.center);
		\draw [style=Z Web] (4) to (6.center);
		\draw (11.center) to (10);
		\draw (10) to (9);
		\draw (9) to (7);
		\draw (7) to (12);
		\draw (12) to (4);
		\draw (15.center) to (14);
		\draw (13) to (17);
		\draw [style=X Web] (13) to (16);
		\draw (13) to (14);
		\draw [style=X Web] (16) to (18.center);
		\draw (17) to (19.center);
		\draw (24.center) to (23);
		\draw (23) to (22);
		\draw (22) to (20);
		\draw (20) to (25);
		\draw (25) to (17);
		\draw (1) to (7);
		\draw (14) to (20);
	\end{pgfonlayer}
\end{tikzpicture}
  \end{equation*}
  Here we have marked the spiders with a defect with $*$. This defines a ZX-flow with $\nu_1 \preceq \nu_2$.
\end{example}

One way we can interpret a ZX-flow for a diagram $D$ is that it gives us a recipe for implementing $D$ using a Clifford isometry $D'$ (thought of as a fixed resource, generalising e.g. graph states in the one-way model) and adaptive, single-qubit, non-Clifford measurements. A spider $\nu$ with non-Clifford phase $\alpha$ represents a single-qubit measurement in the basis $\{ \begin{tikzpicture}
	\begin{pgfonlayer}{nodelayer}
		\node [style=none] (0) at (-1, 0) {};
		\node [style=Z phase dot] (1) at (0.5, 0) {$\ \alpha\!+\!k\pi\ $};
	\end{pgfonlayer}
	\begin{pgfonlayer}{edgelayer}
		\draw (0.center) to (1);
	\end{pgfonlayer}
\end{tikzpicture}
 \}_{k = 0, 1}$. If $k = 0$, we got the desired outcome, so we proceed to the next measurement. If $k = 1$, we have an undesired $\pi$ phase, which we can cancel with another $\pi$-defect by ``firing'' the flow semiweb $f(\nu)$, as we did for Pauli webs in Example~\ref{ex:pauli-web-fire}. However, unlike firing Pauli webs, firing semiwebs can change the angles of same spiders. Notably, it will send $\alpha + \pi$ to $\alpha$, as desired, but it may also change some future measurement angles, located at the defects of $f(\nu)$.

\begin{example}
  Taking the diagram with ZX-flow from Example~\ref{ex:ZX-flow}, the flow semiweb $f(\nu_1)$ has a $\pi$ defect at $\nu_1$ and a $-\frac{\pi}{2}$ defect at $\nu_2$. This means if we get an unwanted outcome at $\nu_1$, we can push it into the future by firing $f(\nu_1)$:
  \begin{equation*}
    \begin{tikzpicture}
	\begin{pgfonlayer}{nodelayer}
		\node [style=X dot] (0) at (-2.75, 0.75) {};
		\node [style=Z dot] (1) at (-4.25, 0) {};
		\node [style=none] (2) at (-5, 0) {};
		\node [style=Z dot] (3) at (-1.75, 1.25) {};
		\node [style=Z dot] (4) at (-1.75, 0) {};
		\node [style=none] (5) at (-1, 1.25) {};
		\node [style=none] (6) at (-1, 0) {};
		\node [style=Z dot] (7) at (-3.25, -1.25) {};
		\node [style=hadamard] (8) at (-3.75, -0.65) {};
		\node [style=hadamard] (9) at (-2.5, -1.25) {};
		\node [style=Z dot] (10) at (-1.75, -1.25) {};
		\node [style=none] (11) at (-1, -1.25) {};
		\node [style=hadamard] (12) at (-2.5, -0.625) {};
		\node [style=Z phase dot] (13) at (-4.25, 1.25) {$\ \frac\pi4+\pi\ $};
		\node [style=X phase dot] (14) at (-2.75, 2.25) {$\ \frac\pi4+b\pi\ $};
		\node [style=X dot] (51) at (3.25, 0.75) {};
		\node [style=Z dot] (52) at (1.75, 0) {};
		\node [style=none] (53) at (1, 0) {};
		\node [style=Z dot] (54) at (4.25, 1.25) {};
		\node [style=Z dot] (55) at (4.25, 0) {};
		\node [style=none] (56) at (5.75, 1.25) {};
		\node [style=none] (57) at (5.75, 0) {};
		\node [style=Z dot] (58) at (2.75, -1.25) {};
		\node [style=hadamard] (59) at (2.25, -0.65) {};
		\node [style=hadamard] (60) at (3.5, -1.25) {};
		\node [style=Z dot] (61) at (4.25, -1.25) {};
		\node [style=none] (62) at (5, -1.25) {};
		\node [style=hadamard] (63) at (3.5, -0.625) {};
		\node [style=Z phase dot] (64) at (1.75, 1.25) {$\ \frac\pi4\!+\!2\pi\ $};
		\node [style=X phase dot] (65) at (3.25, 2.25) {$\ \frac\pi4+b\pi-\frac\pi2\ $};
		\node [style=none] (66) at (0, 0) {=};
		\node [style=Z tiny dot] (67) at (5.024999999999999, 1.25) {$\pi$};
		\node [style=Z tiny dot] (68) at (5, 0) {$\pi$};
		\node [style=X dot] (69) at (10, 0.75) {};
		\node [style=Z dot] (70) at (8.5, 0) {};
		\node [style=none] (71) at (7.75, 0) {};
		\node [style=Z dot] (72) at (11, 1.25) {};
		\node [style=Z dot] (73) at (11, 0) {};
		\node [style=none] (74) at (12.5, 1.25) {};
		\node [style=none] (75) at (12.5, 0) {};
		\node [style=Z dot] (76) at (9.5, -1.25) {};
		\node [style=hadamard] (77) at (9, -0.6499999999999999) {};
		\node [style=hadamard] (78) at (10.25, -1.25) {};
		\node [style=Z dot] (79) at (11, -1.25) {};
		\node [style=none] (80) at (11.75, -1.25) {};
		\node [style=hadamard] (81) at (10.25, -0.625) {};
		\node [style=Z phase dot] (82) at (8.5, 1.25) {$\ \frac\pi4\ $};
		\node [style=X phase dot] (83) at (10, 2.25) {$\ -\frac\pi4+b\pi\ $};
		\node [style=Z tiny dot] (84) at (11.774999999999999, 1.25) {$\pi$};
		\node [style=Z tiny dot] (85) at (11.75, 0) {$\pi$};
		\node [style=none] (86) at (6.75, 0) {=};
	\end{pgfonlayer}
	\begin{pgfonlayer}{edgelayer}
		\draw (2.center) to (1);
		\draw [style=Z Web] (0) to (4);
		\draw [style=Z Web] (0) to (3);
		\draw [style=Z Web] (0) to (1);
		\draw [style=Z Web] (3) to (5.center);
		\draw [style=Z Web] (4) to (6.center);
		\draw (11.center) to (10);
		\draw (10) to (9);
		\draw (9) to (7);
		\draw (7) to (12);
		\draw (12) to (4);
		\draw (14) to (0);
		\draw (13) to (1);
		\draw (53.center) to (52);
		\draw [style=Z Web] (51) to (55);
		\draw [style=Z Web] (51) to (54);
		\draw [style=Z Web] (51) to (52);
		\draw [style=Z Web] (54) to (56.center);
		\draw [style=Z Web] (55) to (57.center);
		\draw (62.center) to (61);
		\draw (61) to (60);
		\draw (60) to (58);
		\draw (65) to (51);
		\draw (64) to (52);
		\draw (52) to (58);
		\draw (58) to (55);
		\draw (1) to (7);
		\draw (71.center) to (70);
		\draw [style=Z Web] (69) to (73);
		\draw [style=Z Web] (69) to (72);
		\draw [style=Z Web] (69) to (70);
		\draw [style=Z Web] (72) to (74.center);
		\draw [style=Z Web] (73) to (75.center);
		\draw (80.center) to (79);
		\draw (79) to (78);
		\draw (78) to (76);
		\draw (83) to (69);
		\draw (82) to (70);
		\draw (70) to (76);
		\draw (76) to (73);
	\end{pgfonlayer}
\end{tikzpicture}
  \end{equation*}
  Note that the defect at the $\frac\pi4$ X-spider results in flipping the $\frac\pi4$ to $-\frac\pi4$, meaning that its measurement angle depends on the outcome of measuring the previous $\frac\pi4$ spider.
\end{example}

Whereas ZX-flow assigns a flow semiweb to each non-Clifford spider, we can define a stricter notion that requires a flow semiweb for every spider.

\begin{definition}
  A \textit{strong ZX-flow} for a diagram $D$ consists of a partial ordering $\preceq \subseteq S \times S$ on all spiders $S \subseteq N_D$ and the following Pauli semiwebs:
\begin{itemize}
  \item for every input wire $i \in I$, a pair of \textit{logical semiwebs} $\ell_Z(i)$ and $\ell_X(i)$ whose restrictions to input wires are $Z_i$ and $X_i$, respectively; and
  \item for every spider $\nu \in S$, a \textit{flow semiweb} $f(\nu)$ that has a $\pi$-defect at $\nu$ and all other defects at spiders $\nu'$ where $\nu \preceq \nu'$.
\end{itemize}
\end{definition}

As we will see in Section~\ref{sec:pauli-flow-equiv}, this stronger notion of ZX-flow is equivalent to Pauli flow in a strict sense. Namely, graph-like ZX-diagrams have strong ZX-flow if and only if they have Pauli flow. However, this stronger notion is not preserved by arbitrary Clifford rewrite rules, hence it is more easily broken by the kinds of simplifications we would like to do on ZX-diagrams.

It is not immediately clear that the existence of a strong ZX-flow implies the existence of a ZX-flow, because the former allows Pauli semiwebs to have defects at arbitrary spiders, not just the non-Clifford spiders. To solve this issue, we will use the technique of \textit{focusing}. In generalised flow and Pauli flow, focusing the flow corresponds to delaying measurement corrections in MBQC as long as possible. For ZX-flow, we see this concept reflected in trying to make semiwebs satisfy the \textbf{parity} condition from Theorem~\ref{thm:pauli-web-conds} for as many spiders $\nu$ as possible. For a non-Clifford $Z$-spider (resp. $X$-spider), this means $\nu$ should have even $Z$-support (resp. $X$-support) in its neighborhood, but it might still be a defect. For Clifford spiders, this means all the Pauli web criteria are satisfied, so there will not be a defect at $\nu$.

The following definition makes sense both for ZX-flow and strong ZX-flow, so we will state it as a single definition and corresponding focusing result.

\begin{definition}\label{def:focused-zx-flow}
  A \textit{focused (strong) ZX-flow} is a (strong) ZX-flow whose logical semiwebs satisfy the \textbf{parity} condition at every spider and whose flow semiwebs $f(\nu)$ satisfy the \textbf{parity} condition at every spider $\nu' \neq \nu$.
\end{definition}

\begin{theorem}\label{thm:focused-zx-flow}
  ZX-diagrams admit (strong) ZX-flow if and only if they admit \emph{focused} (strong) ZX-flow.
\end{theorem}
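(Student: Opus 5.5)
The reverse direction is immediate: a focused (strong) ZX-flow is in particular a (strong) ZX-flow. For the forward direction, I would start from a given (strong) ZX-flow $(\preceq, \ell_Z, \ell_X, f)$ and repair parity violations by multiplying semiwebs together, using Corollary~\ref{cor:semiweb-product} to stay inside the class of Pauli semiwebs. The key observation is about how the \textbf{parity} condition behaves under products. Fix a spider $\nu'$. The same-type support of $\vec w\vec v$ in the neighbourhood of $\nu'$ is the symmetric difference of the supports of $\vec w$ and $\vec v$. Also, $\nu'$ is a $\pi$-defect-contributor exactly when the parity condition fails there, possibly together with a Clifford violation. So parity violations at $\nu'$ compose additively over $\mathbb F_2$, provided we track the opposite-type support too, since for odd multiples of $\frac\pi2$ the parity condition depends on it. I would first check this additivity carefully, splitting on whether the phase of $\nu'$ is an odd multiple of $\frac\pi2$ or not. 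In the odd case, the condition is "same-type parity $\equiv$ opposite-type support", and both sides are additive over $\mathbb F_2$. In the other case it is just "same-type parity even". So the failure indicator $p_{\nu'}(\vec w)\in\mathbb F_2$ is a homomorphism. Moreover, $f(\nu')$ itself has $p_{\nu'}(f(\nu'))=1$: it has a $\pi$-defect at $\nu'$, and its defect there is a pure parity violation, or a parity violation combined with a Clifford one.

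Next, the focusing procedure, stated for strong ZX-flow first, where every spider has a flow semiweb. I would extend $\preceq$ to a total order and process semiwebs from the top down. For a flow semiweb $f(\nu)$, whenever $p_{\nu'}(f(\nu))=1$ for some $\nu'\neq\nu$, the flow condition forces $\nu\prec\nu'$. I replace $f(\nu)$ by $f(\nu)f(\nu')$, choosing $\nu'$ minimal among the violating spiders. This fixes the violation at $\nu'$. All new defects introduced by $f(\nu')$ lie at spiders $\succeq\nu'\succ\nu$, so the flow condition is preserved. The $\pi$-defect at $\nu$ is also untouched, since $p_\nu(f(\nu'))=0$ would follow from $\nu\not\succeq\nu'$. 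Here I need $f(\nu')$ not to have a defect at $\nu$, which holds because $\nu\prec\nu'$ and the order is antisymmetric. Violations strictly above the current minimal one may change, but the minimal violation index strictly increases, so processing in increasing order terminates. Logical semiwebs are handled the same way, multiplying by $f(\nu')$ for every violating $\nu'$. Since $f(\nu')$ has no support on input wires (Step to verify), the restriction to inputs stays $Z_i$ or $X_i$.

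The main obstacle is exactly that last point, together with the non-strong case. Flow semiwebs might have support on inputs, and in plain ZX-flow only non-Clifford spiders have flow semiwebs. For the input issue, I would first replace each $f(\nu)$ by $f(\nu)\prod \ell_P(i)$ to cancel any input support. This may introduce defects at non-Clifford spiders, possibly not in the future of $\nu$, so it has to interact correctly with the focusing. My plan is to first focus the logical semiwebs and only then use them, or to argue that the freedom is harmless after focusing by induction down the order. For non-strong ZX-flow, parity violations at Clifford spiders cannot be cured by a flow semiweb. However, a flow semiweb has no defects at Clifford spiders at all, so every Clifford spider already satisfies parity for it. Parity violations can therefore only occur at non-Clifford spiders, each of which does own a flow semiweb. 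The same minimal-violation induction then goes through verbatim.
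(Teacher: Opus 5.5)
Your main argument is correct and is essentially the paper's. Parity violations at a fixed spider add over $\mathbb F_2$, including the odd-multiple-of-$\frac\pi2$ case, as you check. The semiweb $f(\nu')$ violates parity at $\nu'$. Any parity violation of $f(\nu)$ at some $\nu'\neq\nu$ is a defect, so it lies strictly after $\nu$. Multiplying by such later flow semiwebs therefore repairs $f(\nu)$ without touching its $\pi$-defect or the ordering. In the non-strong case, Clifford spiders never need repair because they carry no defects. The paper works top-down and multiplies $f(\nu)$ in one step by the already-focused $f(\nu_1),\ldots,f(\nu_k)$. Your minimal-violation induction is a harmless variant that does not even need the later semiwebs to be focused first.

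The genuine problem is your last paragraph. The workaround you sketch for flow semiwebs with input support cannot succeed, because if such support is allowed the statement is false. Take a single input wire $i$ ending in a one-legged $Z$-spider $\nu$ with phase $\frac\pi4$, and set $\ell_Z(i)=Z_i$, $\ell_X(i)=X_i$, $f(\nu)=Z_i$. These satisfy the literal definition of ZX-flow: all defects sit at the non-Clifford spider $\nu$, and $f(\nu)$ has a $\pi$-defect there. Yet any semiweb restricting to $Z_i$ on the input has odd $Z$-support at $\nu$, so no focused ZX-flow exists. Your proposed replacement $f(\nu)\ell_Z(i)$ is just the identity here and loses the $\pi$-defect.

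The missing point is that flow semiwebs must have trivial restriction to the inputs. The paper uses this convention silently. Its own proof multiplies logical semiwebs by flow semiwebs exactly as you do. The proof of Lemma~\ref{lem:strong-zxflow-iff-pauli-flow} states that inputs are un-coloured by $f(\nu)$. Section~\ref{sec:extraction} treats $f(\nu)$ as a web from $Z_w$ to an output colouring $\vec P$. With that convention your ``Step to verify'' holds by definition, and products of such semiwebs again have no input support. Your proof is then complete, and the final-paragraph workaround should simply be dropped.
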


We prove this in Appendix~\ref{app:focusing} by giving a constructive procedure for turning any ZX-flow or strong ZX-flow into a focused one. Being focused implies that none of the logical semiwebs have defects at Clifford nodes and the flow semiwebs $f(\nu)$ only have a defect at a Clifford node if $\nu$ is itself Clifford. Hence, a strong focused ZX-flow on all spiders will restrict to a valid ZX-flow on the non-Clifford spiders. The following is then an immediate consequence of Theorem~\ref{thm:focused-zx-flow}.

\begin{corollary}\label{cor:strong-implies-weak}
If a ZX-diagram admits a strong ZX-flow, it admits a ZX-flow.
\end{corollary}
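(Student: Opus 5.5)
The plan is to apply Theorem~\ref{thm:focused-zx-flow} to the given strong ZX-flow $(\preceq, \ell, f)$ on all spiders $S$. This yields a \emph{focused} strong ZX-flow $(\preceq', \ell', f')$. We then restrict it to the non-Clifford spiders $T \subseteq S$ and check that the restriction satisfies every clause of the definition of ZX-flow.

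For the logical semiwebs, take $\ell'_Z(i)$ and $\ell'_X(i)$ unchanged. Their restrictions to the inputs are still $Z_i$ and $X_i$. Focusing means they satisfy \textbf{parity} at every spider. At a Clifford spider, the semiweb conditions of Theorem~\ref{thm:semiweb-cond} give \textbf{H} and \textbf{all-or-nothing}, and the \textbf{Clifford} condition holds trivially because the phase is a multiple of $\frac\pi2$. So all four conditions of Theorem~\ref{thm:pauli-web-conds} hold there. Since defects arise exactly when \textbf{parity} or \textbf{Clifford} fails (the remark after Theorem~\ref{thm:semiweb-cond}), there is no defect at any Clifford spider. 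Thus every defect lies at a non-Clifford spider, as ZX-flow requires.

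For the flow semiwebs, take $f'(\nu)$ for each $\nu \in T$, and take the order to be $\preceq'$ restricted to $T \times T$; the restriction of a partial order is again a partial order. By definition $f'(\nu)$ has a $\pi$-defect at $\nu$. Any other defect $\nu'$ satisfies $\nu \preceq' \nu'$. Focusing gives \textbf{parity} at every $\nu' \neq \nu$, so the same argument as above rules out defects at Clifford $\nu' \neq \nu$. Hence every other defect of $f'(\nu)$ lies in $T$ and sits above $\nu$ in the restricted order.

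There is no real obstacle: all the substantive work is in Theorem~\ref{thm:focused-zx-flow}. The only point to check is the local fact that a spider with Clifford phase satisfying \textbf{parity}, \textbf{H} and \textbf{all-or-nothing} is not a defect. This follows from Theorems~\ref{thm:pauli-web-conds} and~\ref{thm:semiweb-cond}, using that the twist $\alpha$ is uniquely determined once the spider has at least one leg. Legless spiders have no adjacent wires, so the semiweb acts trivially there and we may choose $\alpha = 0$.
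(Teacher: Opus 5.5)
Your proposal is correct and follows the paper's argument. You focus the strong ZX-flow via Theorem~\ref{thm:focused-zx-flow}, note that \textbf{parity} at a Clifford spider (together with \textbf{H} and \textbf{all-or-nothing}) rules out a defect there, and restrict the order and flow semiwebs to the non-Clifford spiders; you just spell out the local no-defect check that the paper states in a single line.
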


\subsection{Clifford preservation of ZX-flow}\label{sec:flow-preservation}

\begin{figure}
\centering
\input{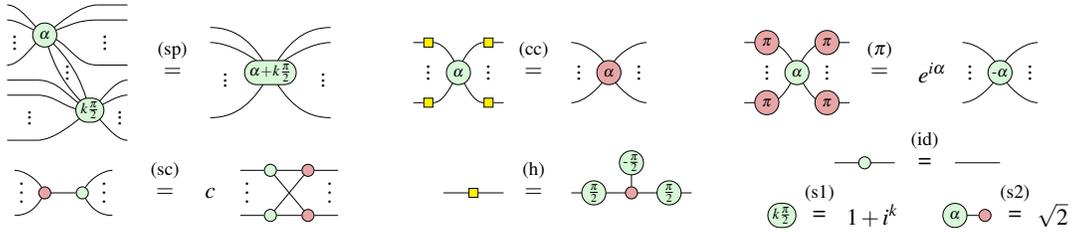}
\caption{The extended Clifford ZX-calculus, where $\alpha \in \mathbb R$, $k \in \mathbb Z$, and $c = \sqrt{2}^{(n-1)(m-1)}$ for $n$ inputs and $m$ outputs. The rules are pronounced: \sprule spider, \ccrule colour-change, \pirule $\pi$ stabiliser, \scrule strong complementarity, \hadrule Hadamard, \idrule identity, \srule and \stworule scalar rules.}
\label{fig:Clifford-rules}
\end{figure}

In this section, we will show that all of the rules of the Clifford ZX-calculus preserve ZX-flow. In fact, we can show this fact for a bit more than the usual Clifford ZX-calculus, as we can allow rules that have arbitrary phase parameters in certain locations; see Figure~\ref{fig:Clifford-rules}. In particular, spider fusion with arbitrary angles preserves ZX-flow in either direction with only one exception: it cannot be applied to unfuse a spider with a Clifford angle into two spiders with non-Clifford angles. The proofs in this section are postponed to Appendix~\ref{app:clifford-pres}.

The key observation is that when we replace a diagram consisting of only Clifford spiders with another one, we can always locally update Pauli semiwebs thanks to Theorem~\ref{thm:pauli-web-stabilisers}. For example:
\ctikzfig{local-update}
It then only remains to show that any non-Clifford spiders that we modify can be assigned valid flow semiwebs after a rule application. This gives rise to the following results.

\begin{theorem}\label{thm:ext-clifford-zx-pres}
  The rules of the extended Clifford ZX-calculus (Figure~\ref{fig:Clifford-rules}) preserve ZX-flow.
\end{theorem}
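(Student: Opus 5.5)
The plan is to go through the rules of Figure~\ref{fig:Clifford-rules} one at a time. For each rule $L \to R$ (and, where relevant, $R \to L$) applied inside a diagram $D$ with ZX-flow $(\preceq, \ell, f)$, we build a ZX-flow on the rewritten diagram $D'$. The semiwebs are updated only locally: outside the rewritten region nothing changes, and inside it we recolour wires so that the boundary colouring matches.

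First we use Theorem~\ref{thm:focused-zx-flow} to assume the flow is focused. Then the logical semiwebs have no defects at Clifford spiders. Each flow semiweb $f(\nu)$ has defects only at non-Clifford spiders $\nu' \succeq \nu$, apart from its $\pi$-defect at $\nu$. In particular, no semiweb has a defect at any Clifford spider in the rewritten region.

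\textbf{Step 1 (purely Clifford rules).} For rules whose spiders all have Clifford phases (\idrule, \hadrule, \ccrule, \scrule, the scalar rules, and the Clifford instances of \sprule), we argue as follows. Each semiweb restricted to the region $L$ is a Pauli web of $L$ with boundary colouring $\vec P$, because there are no defects there. By Theorem~\ref{thm:pauli-web-stabilisers}, $L = \pm\vec P L$. Since $\llbracket L\rrbracket = \llbracket R \rrbracket$, the same Pauli stabilises $R$.

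Next we need a Pauli web on $R$ with that boundary colouring. For this we use the converse direction: every Pauli stabiliser of a Clifford diagram is realised by some Pauli web. This follows because webs are solutions to the $\mathbb F_2$-linear system of Theorem~\ref{thm:pauli-web-conds}, and each solution fires to a stabiliser. I would prove this completeness for each rule's right-hand side directly, since each is small. Gluing the new interior colouring to the unchanged exterior gives a semiweb on $D'$ that satisfies \textbf{H} and \textbf{all-or-nothing} (Theorem~\ref{thm:semiweb-cond}) and has exactly the same defects as before. The ordering and the flow therefore transfer unchanged. The scalar zero meta-rule is trivial.

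\textbf{Step 2 (rules with arbitrary phases).} These are \sprule with arbitrary $\alpha,\beta$, and \pirule. The Clifford argument no longer applies inside the region, so we handle them by hand.

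For \pirule, pushing $\pi$ through a spider with phase $\alpha$ yields $-\alpha$. The web colourings on the legs are recoloured by the $X$/$Z$-support rules. A defect at the spider with twist $\gamma$ becomes one with twist $-\gamma$ (or is preserved up to sign), so the defect locations, and hence the flow semiwebs and $\preceq$, stay the same.

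For spider fusion of $\nu_1$ (phase $\alpha$) and $\nu_2$ (phase $\beta$) into $\nu$ (phase $\alpha+\beta$), we delete the connecting wire. The all-or-nothing condition forces the two spiders' opposite-type supports to agree, so the fused colouring is consistent. The defect at $\nu$ is the sum of the defects at $\nu_1$ and $\nu_2$.

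\textbf{Step 3 (the main obstacle: non-Clifford spiders created or merged).} The hard part is choosing the new order and flow semiwebs. We treat the fusion directions separately.

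If exactly one of $\nu_1,\nu_2$ is non-Clifford, $\nu$ inherits its flow semiweb and its place in the order.

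If both are non-Clifford and $\nu$ is non-Clifford, we take $f(\nu) := f(\nu_1)$ for a $\preceq$-minimal one of the two. We then place $\nu$ at the position of the minimum, identifying $\nu_1$ and $\nu_2$ in the order. We must check that this identification does not create a cycle and that the other spider's defects remain in the future. I expect this to need the product trick: replace $f(\nu_1)$ by $f(\nu_1)f(\nu_2)$ where necessary, using Corollary~\ref{cor:semiweb-product}, to cancel an unwanted $\pi$ at the merged spider. Care is also needed because focusing was defined relative to parity at each spider.

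If $\alpha+\beta$ is Clifford, $\nu$ leaves $T$. Its defects in other semiwebs must then be removed by multiplying by the flow semiwebs of $\nu_1$ and $\nu_2$ and refocusing, which is where the order constraints must be verified.

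For unfusion, the reverse direction, we split a non-Clifford $\nu$ into one non-Clifford spider and one Clifford spider. The new non-Clifford spider takes over $f(\nu)$, the Clifford one needs no flow semiweb, and colourings on the new wire are fixed by all-or-nothing and parity. This also explains why unfusing a Clifford spider into two non-Clifford spiders is excluded: those would require new flow semiwebs that need not exist.

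Finally, the logical semiwebs are updated exactly as in Step 1, since their restriction to the inputs is unchanged.
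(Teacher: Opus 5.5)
Your overall strategy is the paper's. Purely Clifford rules are handled by swapping in a Pauli web with the same boundary colouring; the paper packages this as Lemma~\ref{lem:clifford-zx-pres}. The rules with a free phase are then checked by hand. Your observation that the twist at a fused spider is the sum of the twists at its two parts is worth keeping. Combined with the fact that Clifford spiders never carry defects in a ZX-flow, it checks the spider-rule case more cleanly than the paper's parity count. Your $\pi$-rule argument is also fine.

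The gap is in case coverage. You file colour change and the scalar rules under Step~1, but in the extended calculus colour change and the second scalar rule carry an arbitrary phase $\alpha$. The paper's proof explicitly lists spider fusion, colour change, the $\pi$ rule and the second scalar rule with $\alpha \neq j\frac\pi2$ as the cases needing separate treatment. For non-Clifford $\alpha$ your Step~1 does not apply, because that spider may carry defects.

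\textbf{Colour change} is harmless. The \textbf{H} condition forces the inner colours to be the $X\leftrightarrow Z$ swap of the outer ones, so the conditions and the twist at the spider are mirrored exactly.

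\textbf{The second scalar rule read right-to-left} is a real omission. It creates a brand-new non-Clifford spider out of a scalar. Nothing in your proposal gives that spider a flow semiweb or a place in $\preceq$, because your scheme only ever transfers existing flow semiwebs and keeps defect locations fixed. You need the easy extra step the paper takes:
\begin{itemize}
\item colour the single wire of the new component with $Z$, which gives a $\pi$-defect at the new spider and no other defect;
\item insert the new spider anywhere in the order, since no other semiweb touches this disconnected component.
\end{itemize}
Read left-to-right, the rule simply deletes a disconnected component together with its flow semiweb.

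Some smaller points.
\begin{itemize}
\item In Step~1, your justification of ``every stabiliser is realised by a web'' runs the wrong way. That each solution of the linear system fires to a stabiliser gives web $\Rightarrow$ stabiliser, not the converse you need. Also, the rules are families of unbounded arity, so ``each is small'' is not a proof. Either cite Lemma~\ref{lem:clifford-zx-pres}, as the paper does, or give a genuine argument for each rule family.
\item The focusing step is unnecessary: the definition of ZX-flow already forbids defects at Clifford spiders.
\item The extended spider rule only fuses an arbitrary-phase spider with a Clifford one. Your Step~3 subcases with two non-Clifford spiders therefore belong to Theorem~\ref{thm:full-sp-fusion-pres}. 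As written they are only a sketch (``I expect\ldots''), so drop them rather than rely on them.
\end{itemize}
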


\begin{theorem}\label{thm:full-sp-fusion-pres}
Spider fusion between a pair of non-Clifford spiders preserves ZX-flow from left-to-right, and it preserves ZX-flow from right-to-left as long as we don't unfuse a Clifford spider into a pair of non-Clifford spiders.
\end{theorem}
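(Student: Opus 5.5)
We split the statement into its two directions. In both we fix local notation: the fused side has a single spider $\nu$ with phase $\alpha+\beta$; the unfused side has two same-coloured spiders $\nu_1,\nu_2$ with phases $\alpha,\beta$, joined by a plain wire $e$. Throughout we use the characterisation of semiwebs in Theorem~\ref{thm:semiweb-cond}, together with one observation we check first. Spider fusion also holds for \emph{twisted} local states: plugging $|{\nu_1}_{a_1}\>$ and $|{\nu_2}_{a_2}\>$ together along $e$ gives $|\nu_{a_1+a_2}\>$ up to a scalar. Consequently, if $\vec w$ is a semiweb on the unfused diagram with defects $a_1,a_2$ at $\nu_1,\nu_2$, then deleting the wire $e$ gives a semiweb on the fused diagram whose defect at $\nu$ is $a_1+a_2$. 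The \textbf{H} condition is untouched. \textbf{All-or-nothing} at $\nu$ also holds: opposite-type support on any leg forces it on $e$ by all-or-nothing at one endpoint, and then on every leg by all-or-nothing at the other.

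\emph{Left to right.} Logical semiwebs and flow semiwebs of spiders other than $\nu_1,\nu_2$ transfer by deleting $e$, and their defects at $\nu$ are sums of defects at $\nu_1,\nu_2$. First replace $\preceq$ by a linear extension; this is still a valid ZX-flow. Then assume without loss of generality that $\nu_1\preceq\nu_2$. By antisymmetry $f(\nu_2)$ has no defect at $\nu_1$, so its transfer has exactly a $\pi$-defect at $\nu$. We take this transfer as $f(\nu)$ and put $\nu$ at the position of $\nu_2$ in the order. If some transferred $f(x)$ has a defect at $\nu$, then $x\preceq\nu_1$ or $x\preceq\nu_2$, and either way $x\preceq\nu_2$. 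Hence the new order works.

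The remaining case is $\alpha+\beta$ Clifford. Then $\nu$ is removed from $T$, and we must ensure that no semiweb has a defect at $\nu$. Here we first apply Theorem~\ref{thm:focused-zx-flow}, so every relevant semiweb satisfies \textbf{parity} at $\nu_1$ and $\nu_2$. Then the only possible defect at $\nu$ is the Clifford-type one, $-2\alpha-2\beta=-2(\alpha+\beta)\in\pi\mathbb Z$. Combined with parity, this should reduce to the trivial twist. This last point is where the arithmetic needs care.

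\emph{Right to left.} We extend each semiweb on the fused diagram by choosing a colour for $e$. Put opposite-type support on $e$ exactly when $\vec w$ has opposite-type support at $\nu$; this restores all-or-nothing at both $\nu_1$ and $\nu_2$. The same-type support on $e$ is a free bit. It decides whether the parity violation, i.e.\ a $\pi$-defect, sits at $\nu_1$ or at $\nu_2$. The Clifford-type defects $-2\alpha$ and $-2\beta$ are forced. In the order, replace $\nu$ by the chain $\nu_1\preceq\nu_2$, inserted at the old position of $\nu$, and proceed as follows:
\begin{itemize}
\item for every other semiweb, choose the free bit so that any parity defect lands on $\nu_2$;
\item define $f(\nu_2)$ as the extension of $f(\nu)$ with the $\pi$-defect placed on $\nu_2$;
\item define $f(\nu_1)$ as the extension of $f(\nu)$ with the $\pi$-defect placed on $\nu_1$. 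Its other defect, at $\nu_2$, lies in the future of $\nu_1$, as required.
\end{itemize}
If exactly one of $\alpha,\beta$ is Clifford, only that non-Clifford spider enters $T$, and we use the bit to put the $\pi$-defect on it. If both are Clifford, no new spiders enter $T$. The excluded case, where $\alpha+\beta$ is Clifford but $\alpha$ and $\beta$ are not, is exactly where the Clifford-type defects $-2\alpha,-2\beta$ would appear at spiders with no available flow semiweb.

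\emph{Main obstacle.} The step I expect to be hardest is controlling opposite-type support at $\nu$. If $f(\nu)$ has opposite-type support there, its $\pi$-defect at $\nu$ is really a combination $\pi-2(\alpha+\beta)$ of parity and Clifford defects. After splitting, this yields defects $-2\alpha$ and $-2\beta$ that are not a clean $\pi$ at the intended spider. My first attempt at a fix is to use the focused flow from Theorem~\ref{thm:focused-zx-flow}. Where needed, I would also multiply by the basic semiweb $\vec b_\nu$ or by the edge semiweb on $e$ (using Corollary~\ref{cor:semiweb-product}) to strip opposite-type support from $\nu$ before extending. The same manoeuvre should settle the Clifford-sum case of the left-to-right direction.
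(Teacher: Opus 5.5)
\textbf{The gap is the left-to-right case where $\alpha+\beta$ is Clifford.} You expect the residual twist $-2(\alpha+\beta)$ to be trivial ``combined with parity''. This fails when $\alpha+\beta$ is an odd multiple of $\frac\pi2$. A focused semiweb with opposite-type support on all legs has even same-type support at $\nu_1$ and $\nu_2$, hence at $\nu$, so it gets a $\pi$-defect at the now-Clifford spider $\nu$. For instance, fuse two $\frac\pi4$ Z-spiders in series on one wire. The focused logical semiweb colouring input, $e$ and output by $X$ has defects $-\frac\pi2$ at both spiders, and restricts to a semiweb with a $\pi$-defect at the fused $\frac\pi2$ spider; the correct logical web must end in $Y$, as $SX=YS$.

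Your proposed repairs do not fix this. The edge semiweb on $e$ disappears when $e$ is deleted. Multiplying by $\vec b_\nu$ removes opposite-type support only by creating $\pi$-defects on the boundary of the fusable component, possibly at Clifford spiders or at non-Clifford spiders outside the allowed future.

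The missing idea is to use the flow semiweb of the later spider. With $\nu_1\preceq\nu_2$ in your linear extension, your additivity observation gives $f(\nu_2)^-$ twist $0+\pi=\pi$ at $\nu$. At the Clifford spider $\nu$ every twist lies in $\{0,\pi\}$. Such twists add under products, because a $\pi$-twist amounts to a single-leg Pauli on the local state. So multiply every transferred semiweb with a defect at $\nu$ by $f(\nu_2)^-$. If such a semiweb is $f(x)$, it had a nonzero twist at $\nu_1$ or $\nu_2$, so $x\prec\nu_2$. Hence the defects contributed by $f(\nu_2)$ lie in the future of $x$. Also $f(\nu_2)$ has no defect at $x$, so the $\pi$-defect there survives. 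Logical semiwebs keep their input restriction, since flow semiwebs leave inputs uncoloured. The paper's proof dismisses this case with ``we are done'' and does not supply this step either, so your suspicion was well founded.

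\textbf{The rest of your argument is sound, given one observation.} The ``main obstacle'' in the right-to-left direction cannot arise. New non-Clifford spiders appear only when $\alpha+\beta$ is non-Clifford. At a spider of non-Clifford phase $\gamma$, opposite-type support produces a twist $p\pi-2\gamma\not\equiv\pi$. Hence $f(\nu)$ has no opposite-type support at $\nu$ and odd same-type support there, exactly as the paper uses in the proof of Theorem~\ref{thm:ext-clifford-zx-pres}. With this, your free-bit construction is complete and matches the paper's: both new spiders go at the old position of $\nu$, and the one-Clifford subcase is already Theorem~\ref{thm:ext-clifford-zx-pres}.

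In the non-Clifford-sum left-to-right case you take a different route from the paper. The paper focuses the flow, so a defect of another flow semiweb at $\nu_1$ or $\nu_2$ must come from opposite-type support and therefore sits at both, forcing its owner before both. You instead pass to a linear extension, keep the flow semiweb of the later spider, and use additivity of twists under fusion. This is correct and avoids focusing altogether. As shown above, the same two ingredients also close the Clifford-sum case.
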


\subsection{Equivalence of ZX-Flow and Pauli flow}\label{sec:pauli-flow-equiv}

Once we have a strong ZX-flow, we can define a Pauli flow by setting correction sets $c(\nu) := B$, where $B$ is a generating set of spiders for $f(\nu)$ (cf.~Theorem~\ref{thm:generating-set-of-spiders}). Conversely, correction sets give flow semiwebs by taking products of basic semiwebs $b_\nu$ for $\nu \in c(\nu)$. This gives the following equivalence, which we prove in detail in Appendix~\ref{app:pauli-flow-equiv}.

\begin{lemma}\label{lem:strong-zxflow-iff-pauli-flow}
A graph-like ZX-diagram has strong ZX-flow if and only if its induced labelled open graph has Pauli flow.
\end{lemma}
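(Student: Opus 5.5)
The plan is to build a dictionary between semiwebs on a graph-like diagram $D$ and subsets of spiders, and then read each Pauli flow condition as a statement about which defects occur.

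\textbf{The dictionary.} Take $B \subseteq N_D$ and let $\vec w_B := \prod_{u \in B} \vec b_u$, multiplied by whatever edge semiwebs are needed on the input wires. For a spider $v$, the support of $\vec w_B$ around $v$ is as follows.
\begin{itemize}
  \item $v$ has $X$-support (the opposite type for a $Z$-spider) on all of its wires exactly when $v \in B$. By definition this is $v \in A_Z(B)$, writing $A_P(B)$ for the anti-commutation set of $B$.
  \item The parity of the $Z$-support around $v$ is the number of neighbours of $v$ in $B$, mod $2$. So it is odd exactly when $v \in \Odd(B) = A_X(B)$.
\end{itemize}
Reading the \textbf{Clifford} and \textbf{parity} conditions of Theorem~\ref{thm:pauli-web-conds} through the defect classification after Theorem~\ref{thm:semiweb-cond} then gives the defect at $v$ case by case, by measurement label:
\begin{itemize}
  \item $\mu(v)=X$ (phase $k\pi$): there is a defect, always a $\pi$-defect, iff $v \in \Odd(B) = A_X(B)$.
  \item $\mu(v)=Y$ (phase an odd multiple of $\frac\pi2$): the parity condition requires even $Z$-support iff there is no $X$-support. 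A defect therefore occurs iff exactly one of $v \in B$, $v \in \Odd(B)$ holds, i.e.\ iff $v \in A_Y(B)$, and it is a $\pi$-defect.
  \item $\mu(v)=XY$ (non-Clifford): a $\pi$-defect needs the parity violation with no $X$-support, i.e.\ $v \in \Odd(B)\setminus B$. Any defect at all occurs iff $v \in B \cup \Odd(B)$, so iff $v \in A_X(B) \cup A_Z(B)$.
\end{itemize}
Input and output wires need separate bookkeeping; I would assume, and check, that the edge semiwebs on input wires account for the restriction $c(v) \subseteq \overline I$.

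\textbf{Pauli flow $\Rightarrow$ strong ZX-flow.} Set $f(v) := \vec w_{c(v)}$, and I expect the same order $\preceq$ to work.
\begin{itemize}
  \item Condition (i) of Definition~\ref{def:pauli-flow} gives $v \in A_P(c(v))$ for each $P \in \mu(v)$. By the dictionary this is a $\pi$-defect at $v$. For $XY$, condition (i) must give $v \in \Odd(c(v)) \setminus c(v)$; I need to confirm this from the appendix's reformulation.
  \item Condition (ii) says any other spider $v'$ with $v' \in A_P(c(v))$ for some $P \in \mu(v')$ satisfies $v \preceq v'$. By the dictionary, every other defect of $f(v)$ sits at such a $v'$.
  \item Output spiders carry no measurement label. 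Their defects must be absorbed by multiplying in edge semiwebs on output wires, or they are excluded because $c(v) \subseteq \overline I$ touches outputs only through plain output wires. I expect the latter to let output wires be coloured freely.
  \item For the logical semiwebs, $\ell_X(i)$ is the basic semiweb of the input spider and $\ell_Z(i)$ is the edge semiweb on input $i$. Strong ZX-flow puts no defect restriction on logical semiwebs, so both are immediate.
\end{itemize}

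\textbf{Strong ZX-flow $\Rightarrow$ Pauli flow.} Use Theorem~\ref{thm:generating-set-of-spiders}: write $f(v) = \vec e \prod_{u \in B} \vec b_u$, where $B$ is unique because $D$ is graph-like. Set $c(v) := B$ and keep $\preceq$ restricted to non-outputs. The dictionary then turns the defect conditions back into (i) and (ii). One subtlety: edge semiwebs on internal $H$-edges can add defects, but in a graph-like diagram internal edges join fusable pairs only through $H$-edges. Firing such an edge semiweb toggles parity at both endpoints, so it could break (i) or (ii). I would handle this by first applying Theorem~\ref{thm:focused-zx-flow} to replace the strong ZX-flow with a focused one. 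In a focused flow, parity holds away from $v$, which forces $\vec e$ to be trivial on internal edges. A second issue is the requirement $c(v) \subseteq \overline I$: if an input spider lies in $B$, I would multiply by $\ell$-type semiwebs or argue that input spiders can be removed from $B$ by edge semiwebs on input wires.

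\textbf{Expected obstacle.} The case analysis is routine. The main work is the boundary bookkeeping: edge semiwebs on inputs and outputs, the requirements $c(v) \subseteq \overline I$ and $v \in \overline O$, and confirming that the appendix form of Pauli flow matches exactly the $\pi$-defect requirement for $XY$-labelled spiders.
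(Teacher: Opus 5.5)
Your dictionary between correction sets and products of basic semiwebs, with its label-by-label reading of defects, is the same argument as the paper's. The two points you left to check are immediate: (i) for $P=X$ and $P=Y$ gives $v\in\Odd(c(v))\setminus c(v)$ for $XY$, and $A_X\cup A_Z=A_X\cup A_Y$ handles (ii) at $XY$-spiders. The real gaps are in the boundary bookkeeping, and the fixes you sketch there fail.

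\textbf{Outputs, Pauli flow $\Rightarrow$ strong ZX-flow.} You never give flow semiwebs to output spiders, but strong ZX-flow requires one for every spider. The fix is to take the $Z$-coloured edge semiweb on the output wire, which has a single $\pi$-defect.

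More importantly, defects of $f(v)$ at output spiders can be neither excluded nor absorbed. The restriction $c(v)\subseteq\overline I$ still allows output spiders in $c(v)$. If such an $o$ has non-Clifford phase $\phi$, its full $X$-support gives a $-2\phi$-defect. Adding a $Z$ edge semiweb on the output wire only turns this into a $(\pi-2\phi)$-defect, which is still nonzero. For example, take an input spider $a$ with phase $0$, joined by an $H$-edge to an output spider $o$ with phase $\frac\pi4$. It has Pauli flow with $c(a)=\{o\}$, and $f(a)=\vec b_o$ has a $-\frac\pi2$-defect at $o$. So if the given order places $o$ below $a$, reusing it breaks your construction. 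The missing step, as in the paper, is to extend $\preceq$ so that all output spiders are maximal. This is consistent because (ii) never constrains outputs.

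\textbf{Inputs, strong ZX-flow $\Rightarrow$ Pauli flow.} Neither of your ways of removing input spiders from $B$ works.
\begin{itemize}
\item Multiplying by a logical semiweb such as $\ell_X(i)=\vec b_a$ toggles the parity at every neighbour of the input spider $a$. Logical semiwebs in a strong ZX-flow carry no constraint on their defects, so the ordering condition is lost.
\item An edge semiweb on an input wire can only be $Z$-coloured, since an $X$ would violate all-or-nothing at $a$ unless $a$ has degree one. So it cannot cancel $X$-support. It does, however, toggle the parity at $a$, which breaks your dictionary there.
\end{itemize}

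In fact no repair is possible with the definition read literally. A lone $Z$-spider $a$ with phase $\frac\pi4$ and a single input wire is graph-like. It has a strong ZX-flow:
\begin{itemize}
\item $f(a)=\ell_Z(i)$, the $Z$ edge semiweb on the input;
\item $\ell_X(i)$, the $X$ on the input.
\end{itemize}
But it has no Pauli flow, since $c(a)\subseteq\overline I=\emptyset$ forces $a\notin\Odd(c(a))$.

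The paper's proof silently uses that flow semiwebs leave input wires uncoloured, and you need to adopt this convention explicitly. With it, $B\cap I=\emptyset$ is automatic, because an input spider in $B$ puts $X$ on its input wire. Then $\vec e$ lives only on output wires, and your dictionary holds verbatim at every non-output spider.

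Finally, your worry about internal edge semiwebs rests on a misreading of ``fusable''. Two $Z$-spiders joined by an $H$-edge are not fusable, so a graph-like diagram has no internal edge semiwebs, and the detour through focusing is unnecessary.
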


If we start with a ZX-diagram that has ZX-flow, we can use Clifford rewrites to remove all unnecessary Clifford spiders. Once we do this, we show in Appendix~\ref{app:pauli-flow-equiv} that ZX-flow is equivalent to strong ZX-flow. The following is then an immediate consequence of Lemma~\ref{lem:strong-zxflow-iff-pauli-flow}.

\begin{theorem}\label{thm:zxflow-iff-pauli-flow}
  A ZX-diagram $D$ has ZX-flow if and only if it is Clifford-equivalent to a graph-like ZX-diagram $D'$ with Pauli flow.
\end{theorem}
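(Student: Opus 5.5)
The plan is to prove the two directions separately, each by combining Lemma~\ref{lem:strong-zxflow-iff-pauli-flow} with the Clifford-preservation results (Theorem~\ref{thm:ext-clifford-zx-pres} and Theorem~\ref{thm:full-sp-fusion-pres}).

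\emph{($\Leftarrow$)} Suppose $D$ is Clifford-equivalent to a graph-like $D'$ with Pauli flow.
\begin{enumerate}
\item By Lemma~\ref{lem:strong-zxflow-iff-pauli-flow}, $D'$ has strong ZX-flow.
\item By Corollary~\ref{cor:strong-implies-weak}, $D'$ then has ZX-flow.
\item Every rule of the extended Clifford ZX-calculus preserves ZX-flow by Theorem~\ref{thm:ext-clifford-zx-pres}, and these rules can be applied in either direction.
\item So ZX-flow transports along the sequence of rewrites from $D'$ back to $D$.
\end{enumerate}
One point needs checking: the rewrites must not unfuse a Clifford spider into two non-Clifford spiders. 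I will arrange the Clifford-equivalence so that non-Clifford spiders correspond one-to-one between $D$ and $D'$, with only Clifford structure changing. Then only rules from Figure~\ref{fig:Clifford-rules} are needed, and these are preserved in both directions.

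\emph{($\Rightarrow$)} Suppose $D$ has ZX-flow.
\begin{enumerate}
\item Apply the standard graph-like normalisation by Clifford rewrites. This uses colour change, spider fusion, Hadamard cancellation, and removal of parallel edges and self-loops via the $\pi$-stabiliser and strong complementarity rules.
\item Then eliminate as many Clifford spiders as possible. Use local complementation and pivoting, derivable in the Clifford calculus, to remove interior Clifford spiders, and use boundary pivots where needed. The target form has every Clifford spider either a boundary spider or adjacent to a non-Clifford one in an essential way; this is the ``remove all unnecessary Clifford spiders'' step.
\item Each step preserves ZX-flow by Theorem~\ref{thm:ext-clifford-zx-pres}, so the resulting graph-like $D'$ has ZX-flow, which I take to be focused by Theorem~\ref{thm:focused-zx-flow}.
\end{enumerate}

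The main obstacle is upgrading this ZX-flow on $D'$ to a strong ZX-flow, so that Lemma~\ref{lem:strong-zxflow-iff-pauli-flow} applies. For each remaining Clifford spider $\nu$ we need a flow semiweb with a $\pi$-defect at $\nu$, and we must extend the order $\preceq$ to all spiders.

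\begin{itemize}
\item For a Clifford spider $\nu$ labelled $X$ (phase a multiple of $\pi$), the first thing to try is the basic semiweb $\vec b_\nu$, multiplied by logical semiwebs and previously built flow semiwebs. The aim is to cancel its defects at inputs-adjacent and earlier spiders.
\item In the reduced form, each remaining internal Clifford spider must be a phase-gadget-like or boundary spider, since otherwise it would have been eliminated. For such spiders I would build $f(\nu)$ explicitly and place $\nu$ minimally in the order (for gadget legs) or maximally (for output spiders, which need no flow semiweb).
\item The leftover defects at Clifford spiders are then cleared using Corollary~\ref{cor:semiweb-product} together with focusing.
\end{itemize}

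If the explicit construction becomes messy, a fallback is to show that the focused ZX-flow semiwebs, decomposed via Theorem~\ref{thm:generating-set-of-spiders}, directly give correction sets for the non-Clifford vertices. One would then verify Definition~\ref{def:pauli-flow} on the Pauli-labelled vertices, with the $X$/$Y$ conditions handled by the parity property of focused semiwebs.
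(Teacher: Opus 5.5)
Your overall structure matches the paper's. The ($\Leftarrow$) direction is exactly Lemma~\ref{lem:strong-zxflow-iff-pauli-flow}, then Corollary~\ref{cor:strong-implies-weak}, then Theorem~\ref{thm:ext-clifford-zx-pres}. Your caveat about unfusing is unnecessary: the equivalence is given to you, Theorem~\ref{thm:ext-clifford-zx-pres} handles every rule of Figure~\ref{fig:Clifford-rules} in both directions, and that unfusion is not one of them. For ($\Rightarrow$), the paper likewise reduces $D$ to a graph-like skeleton whose only interior Clifford spiders are phase-gadget hubs, and then upgrades to a strong ZX-flow.

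The gap is that you never carry out this upgrade, which you rightly call the crux. You never say what the flow semiwebs of the remaining Clifford spiders are, and your one concrete suggestion fails. For a Clifford $\nu$ with phase $k\pi$, $\vec b_\nu$ has full $X$-support and zero $Z$-support at $\nu$. It therefore satisfies every web condition there and has \emph{no} defect at $\nu$. In Pauli-flow terms, an $X$-labelled $\nu$ needs $\nu \in \Odd(c(\nu))$, whereas $\vec b_\nu$ puts $\nu$ into $c(\nu)$. Multiplying by logical semiwebs or non-Clifford flow semiwebs, which have no defects at Clifford spiders, cannot create the required $\pi$-defect. ``Clearing leftover defects by focusing'' is circular, since Theorem~\ref{thm:focused-zx-flow} presupposes an existing (strong) ZX-flow.

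The missing ingredient is the explicit semiwebs:
\begin{itemize}
\item \emph{Gadget hubs.} For a hub $\nu$ with 1-legged non-Clifford partner $\nu'$, set $f(\nu) := \vec b_{\nu'}$. It colours only the $H$-edge between them, giving a single $Z$ at $\nu$ (a $\pi$-defect) and one other defect at $\nu'$. So $\nu$ only needs to sit below $\nu'$, and your minimal placement works.
\item \emph{Input spiders.} For a Clifford spider $\nu$ on input $i$, take $\ell_Z(i)$ times the $Z$ edge semiweb on $i$. This removes the input support. Since $\ell_Z(i)$ has no defect and no $X$-support at $\nu$ (the input wire carries $Z$, not $Y$), flipping one $Z$ produces a $\pi$-defect there. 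All other defects stay at non-Clifford spiders, so $\nu$ can be placed minimally.
\item \emph{Output spiders.} Contrary to your remark, these do need flow semiwebs in a strong ZX-flow, but the edge semiweb on the output wire suffices.
\item \emph{Boundary Hadamards.} These are handled by inserting identity spiders and treating them as boundary spiders in the same way.
\end{itemize}
None of these insertions disturbs the existing logical and non-Clifford flow semiwebs, because those have no Clifford defects and so impose no new order constraints.

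Your fallback would not suffice either. Definition~\ref{def:pauli-flow} requires correction sets satisfying (i) at every non-output vertex, including the $X$- and $Y$-labelled ones. Sets read off from the non-Clifford flow semiwebs alone therefore do not form a Pauli flow.
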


\section{Circuit extraction}\label{sec:extraction}

We will now show how we can perform circuit extraction for a ZX-diagram with ZX-flow. The main idea is that we can interpret a ZX-diagram with a focused ZX-flow directly as a Clifford circuit, followed by a sequence of \textit{Pauli exponentials} (a.k.a. Pauli gadgets or Pauli-product rotations).

\begin{definition}
For a Pauli operator $\vec P = P_1 \otimes \ldots \otimes P_n$ and angle $\alpha \in \mathbb R$, a \textit{Pauli exponential} is an $n$-qubit unitary of the form $\vec P[\alpha] := e^{-\frac{i \alpha}{2} \vec P}$, or diagrammatically:
\[
\begin{tikzpicture}
	\begin{pgfonlayer}{nodelayer}
		\node [style=paulibox] (20) at (2.5, 1.5) {$P_1$};
		\node [style=none] (21) at (1.25, 1.5) {};
		\node [style=none] (22) at (5, 1.5) {};
		\node [style=paulibox] (24) at (2.5, -2.5) {$P_n$};
		\node [style=none] (25) at (1.25, -2.5) {};
		\node [style=none] (26) at (5, -2.5) {};
		\node [style=none, rotate=90] (27) at (2.5, -1.25) {...};
		\node [style=X dot] (30) at (4.25, 2.5) {};
		\node [style=paulibox] (31) at (2.5, 0) {$P_2$};
		\node [style=none] (32) at (1.25, 0) {};
		\node [style=none] (33) at (5, 0) {};
		\node [style=paulibox, minimum height=2.3cm] (34) at (-5, 0) {$\vec P$};
		\node [style=none] (35) at (-6.5, 2) {};
		\node [style=none] (36) at (-3.5, 2) {};
		\node [style=none] (37) at (-6.5, -2) {};
		\node [style=none] (38) at (-3.5, -2) {};
		\node [style=none, rotate=90] (39) at (-6.25, 0) {...};
		\node [style=none, rotate=90] (40) at (-3.75, 0) {...};
		\node [style=none] (41) at (-2.25, 0) {$:=$};
		\node [style=none] (42) at (-0.25, 0) {$\sqrt{2}^{\,n-1}$};
		\node [style=Z phase dot] (43) at (-5, 3) {$\alpha$};
		\node [style=Z phase dot] (44) at (4.25, 3.5) {$\alpha$};
	\end{pgfonlayer}
	\begin{pgfonlayer}{edgelayer}
		\draw (21.center) to (22.center);
		\draw (25.center) to (26.center);
		\draw (32.center) to (33.center);
		\draw [in=-180, out=75] (20) to (30);
		\draw [in=240, out=60, looseness=0.75] (31) to (30);
		\draw [in=-90, out=60, looseness=0.75] (24) to (30);
		\draw (35.center) to (36.center);
		\draw (37.center) to (38.center);
		\draw (34) to (43);
		\draw (30) to (44);
	\end{pgfonlayer}
\end{tikzpicture}
\quad \textrm{where} \quad
\begin{tikzpicture}
	\begin{pgfonlayer}{nodelayer}
		\node [style=paulibox] (0) at (-7.25, 1) {$I$};
		\node [style=none] (1) at (-8.25, 1) {};
		\node [style=none] (2) at (-6.25, 1) {};
		\node [style=none] (3) at (-7.25, 2) {};
		\node [style=paulibox] (4) at (2.25, 1) {$X$};
		\node [style=none] (5) at (1.25, 1) {};
		\node [style=none] (6) at (3.25, 1) {};
		\node [style=none] (7) at (2.25, 2) {};
		\node [style=none] (8) at (-5.25, 1) {$:=$};
		\node [style=X dot] (9) at (-2, 1.375) {};
		\node [style=none] (10) at (-3.25, 1) {};
		\node [style=none] (11) at (-0.75, 1) {};
		\node [style=none] (12) at (-2, 2) {};
		\node [style=Z dot] (13) at (7, 1) {};
		\node [style=none] (14) at (5.25, 1) {};
		\node [style=none] (15) at (8.75, 1) {};
		\node [style=none] (16) at (7, 2) {};
		\node [style=hadamard] (17) at (6, 1) {};
		\node [style=hadamard] (18) at (8, 1) {};
		\node [style=none] (19) at (4.25, 1) {$:=$};
		\node [style=paulibox] (20) at (-7.25, -1.25) {$Y$};
		\node [style=none] (21) at (-8.25, -1.25) {};
		\node [style=none] (22) at (-6.25, -1.25) {};
		\node [style=none] (23) at (-7.25, -0.25) {};
		\node [style=Z dot] (24) at (-2.5, -1.25) {};
		\node [style=none] (25) at (-4.25, -1.25) {};
		\node [style=none] (26) at (-0.75, -1.25) {};
		\node [style=none] (27) at (-2.5, -0.25) {};
		\node [style=X phase dot] (28) at (-3.5, -1.25) {$\frac\pi2$};
		\node [style=none] (30) at (-5.25, -1.25) {$:=$};
		\node [style=X phase dot] (31) at (-1.5, -1.25) {-$\frac\pi2$};
		\node [style=paulibox] (32) at (2.25, -1.25) {$Z$};
		\node [style=none] (33) at (1.25, -1.25) {};
		\node [style=none] (34) at (3.25, -1.25) {};
		\node [style=none] (35) at (2.25, -0.25) {};
		\node [style=Z dot] (36) at (7, -1.25) {};
		\node [style=none] (37) at (5.25, -1.25) {};
		\node [style=none] (38) at (8.75, -1.25) {};
		\node [style=none] (39) at (7, -0.25) {};
		\node [style=none] (42) at (4.25, -1.25) {$:=$};
		\node [style=none] (43) at (-4.25, 1) {$\oneoversqrttwo$};
	\end{pgfonlayer}
	\begin{pgfonlayer}{edgelayer}
		\draw (1.center) to (2.center);
		\draw (0) to (3.center);
		\draw (5.center) to (6.center);
		\draw (4) to (7.center);
		\draw (10.center) to (11.center);
		\draw (9) to (12.center);
		\draw (14.center) to (15.center);
		\draw (13) to (16.center);
		\draw (21.center) to (22.center);
		\draw (20) to (23.center);
		\draw (25.center) to (26.center);
		\draw (24) to (27.center);
		\draw (33.center) to (34.center);
		\draw (32) to (35.center);
		\draw (37.center) to (38.center);
		\draw (36) to (39.center);
	\end{pgfonlayer}
\end{tikzpicture}
\]
\end{definition}

A useful feature of Pauli gadgets is that they commute through other maps in exactly the same way as Pauli operators. The following can be shown by direct calculation or using the Clifford ZX-calculus rules (see e.g.~\cite{KissingerWetering2024Book}).

\begin{proposition}\label{prop:push-Pauli}
  For a linear map $L : (\mathbb C^2)^{\otimes n} \to (\mathbb C^2)^{\otimes m}$ and Pauli operators $\vec P \in \mathcal P_n, \vec Q \in \mathcal P_m$, if $L \vec P = \vec Q L$, then $L \vec P[\alpha] = \vec Q[\alpha] L$.
\end{proposition}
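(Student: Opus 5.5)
The idea is to write $\vec P[\alpha]$ as a linear combination of $I$ and $\vec P$, and then use linearity. Since $\vec P$ is a Pauli operator (up to a global phase we may take it Hermitian with $\vec P^2 = I$), the power series of the exponential gives
\[
\vec P[\alpha] = e^{-\frac{i\alpha}{2}\vec P} = \cos\tfrac{\alpha}{2}\, I - i \sin\tfrac{\alpha}{2}\, \vec P ,
\]
and likewise $\vec Q[\alpha] = \cos\frac{\alpha}{2} I - i\sin\frac{\alpha}{2}\vec Q$. Then
\[
L\vec P[\alpha] = \cos\tfrac{\alpha}{2} L - i\sin\tfrac{\alpha}{2} L\vec P = \cos\tfrac{\alpha}{2} L - i \sin\tfrac{\alpha}{2}\vec Q L = \vec Q[\alpha] L .
\]
This uses only the hypothesis $L\vec P = \vec Q L$.

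The one point needing care is the phase convention. The Pauli group $\mathcal P_n$ contains elements such as $iX$ whose square is $-I$, and for these the cosine/sine expansion changes (the exponential becomes a $\cosh$/$\sinh$ combination). I would handle this in one of two ways:
\begin{itemize}
\item restrict to Hermitian Paulis (signs $\pm 1$), which is the setting in which Pauli exponentials are defined in the paper; or
\item observe that the identity $e^{c\vec P} = f(c) I + g(c)\vec P$ holds with the same scalar functions $f, g$ for $\vec P$ and $\vec Q$, provided $\vec P^2 = \vec Q^2 = \lambda I$ with the same $\lambda$.
\end{itemize}
I expect this matching of squares to be the only real obstacle. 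When $L \neq 0$ it follows from the hypothesis, since $L\vec P^2 = \vec Q L\vec P = \vec Q^2 L$, so $\lambda_P L = \lambda_Q L$ and hence $\lambda_P = \lambda_Q$. When $L = 0$ the claim is trivial.

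The diagrammatic normalisation of the Pauli gadget, including the $\sqrt2^{\,n-1}$ factor, agrees with $e^{-\frac{i\alpha}{2}\vec P}$, so no separate ZX argument is needed. Alternatively, one can argue graphically: unfuse the phase of the gadget and push the Pauli legs through $L$ using the Clifford rules. I would present the algebraic proof above as the main argument.
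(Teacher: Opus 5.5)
Your proof is correct and is exactly the ``direct calculation'' the paper alludes to. The paper gives no further detail, naming the Clifford ZX-calculus rules only as an alternative. Your observation that $L\vec P^2 = \vec Q^2 L$ forces $\vec P^2 = \vec Q^2$ whenever $L \neq 0$ also covers the non-Hermitian elements of $\mathcal P_n$, which the paper glosses over. One small correction to your side remark: the gadget diagram equals $e^{-\frac{i\alpha}{2}\vec P}$ only up to the global phase $e^{\frac{i\alpha}{2}}$, which does not depend on $\vec P$. For example, for $\vec P = Z$ the diagram is $|0\rangle\langle 0| + e^{i\alpha}|1\rangle\langle 1|$. This is harmless here because the same factor appears on both sides.
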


Combining this with Theorem~\ref{thm:pauli-web-stabilisers}, we can see that Pauli webs enable us to push Pauli exponentials through a ZX-diagram. Suppose we take a ZX-diagram $D$ with focused ZX-flow and pick a non-Clifford spider that is maximal with respect to $\preceq$. If it is a $Z$-spider, we unfuse a 1-legged $Z$-spider $\nu$ with phase angle $\alpha$ connected via a plain edge. If it is an $X$-spider, we unfuse a 1-legged $Z$-spider connected via an $H$-edge.
In either case, we obtain a flow semiweb that has $Z$-support on the wire $w$ adjacent to $\nu$ and it has support on the outputs given by some Pauli $\vec P$. We will call the restriction of a semiweb to the outputs of a diagram its \textit{output colouring}.

Since $\nu$ is maximal and the ZX-flow is focused, $f(\nu)$ only has a defect at $\nu$ itself, so we can treat $f(\nu)$ as a Pauli web on a sub-diagram $D'$ which contains everything in $D$ except for $\nu$. We then use this Pauli web to push the non-Clifford angle $\alpha$ out of the diagram as follows:
\[
\begin{tikzpicture}
	\begin{pgfonlayer}{nodelayer}
		\node [style=big gate, minimum height=2cm] (44) at (0, 0.125) {$D'$};
		\node [style=none] (47) at (2, 1.75) {};
		\node [style=none] (49) at (0.5, 1.75) {};
		\node [style=none] (50) at (2, -1.75) {};
		\node [style=none] (51) at (0.5, -1.75) {};
		\node [style=none, rotate=90] (52) at (1.5, 0) {...};
		\node [style=none] (53) at (2, 1.25) {};
		\node [style=none] (54) at (0.5, 1.25) {};
		\node [style=none] (55) at (2, -1.25) {};
		\node [style=none] (56) at (0.5, -1.25) {};
		\node [style=Z phase dot] (57) at (-2, 1.75) {$\alpha$};
		\node [style=none] (58) at (-0.5, 1.75) {};
		\node [style=none] (59) at (1.25, 2.5) {$\vec P$};
		\node [style=none] (60) at (-1, 2.5) {$Z_w$};
		\node [style=none] (61) at (-0.5, 0.75) {};
		\node [style=none] (62) at (-2, 0.75) {};
		\node [style=none] (63) at (-0.5, -1.75) {};
		\node [style=none] (64) at (-2, -1.75) {};
		\node [style=none, rotate=90] (65) at (-1.25, -0.5) {...};
		\node [style=big gate, minimum height=2cm] (66) at (-5.75, 0.125) {$D$};
		\node [style=none] (67) at (-3.75, 1.75) {};
		\node [style=none] (68) at (-5.25, 1.75) {};
		\node [style=none] (69) at (-3.75, -1.75) {};
		\node [style=none] (70) at (-5.25, -1.75) {};
		\node [style=none, rotate=90] (71) at (-4.25, 0) {...};
		\node [style=none] (72) at (-3.75, 1.25) {};
		\node [style=none] (73) at (-5.25, 1.25) {};
		\node [style=none] (74) at (-3.75, -1.25) {};
		\node [style=none] (75) at (-5.25, -1.25) {};
		\node [style=none] (80) at (-6.25, 1.25) {};
		\node [style=none] (81) at (-7.75, 1.25) {};
		\node [style=none] (82) at (-6.25, -1.25) {};
		\node [style=none] (83) at (-7.75, -1.25) {};
		\node [style=none, rotate=90] (84) at (-7, 0) {...};
		\node [style=none] (85) at (-3, 0) {$=$};
	\end{pgfonlayer}
	\begin{pgfonlayer}{edgelayer}
		\draw [style=X Web] (49.center) to (47.center);
		\draw [style=Z Web] (51.center) to (50.center);
		\draw [style=Y Web] (54.center) to (53.center);
		\draw (56.center) to (55.center);
		\draw [style=Z Web] (57) to (58.center);
		\draw (62.center) to (61.center);
		\draw (64.center) to (63.center);
		\draw (68.center) to (67.center);
		\draw (70.center) to (69.center);
		\draw (73.center) to (72.center);
		\draw (75.center) to (74.center);
		\draw (81.center) to (80.center);
		\draw (83.center) to (82.center);
	\end{pgfonlayer}
\end{tikzpicture}
\ \ \implies \ \ 
\begin{tikzpicture}
	\begin{pgfonlayer}{nodelayer}
		\node [style=paulibox, minimum height=2cm] (34) at (6.25, 0) {$\vec P$};
		\node [style=none, rotate=90] (39) at (5.375, 0) {...};
		\node [style=none, rotate=90] (40) at (7.25, 0) {...};
		\node [style=Z phase dot] (43) at (6.25, 2.875) {$\alpha$};
		\node [style=big gate, minimum height=2cm] (44) at (-5.75, 0.125) {$D'$};
		\node [style=none] (47) at (-4, 1.75) {};
		\node [style=none] (49) at (-5.25, 1.75) {};
		\node [style=none] (50) at (-4, -1.75) {};
		\node [style=none] (51) at (-5.25, -1.75) {};
		\node [style=none, rotate=90] (52) at (-4.5, 0) {...};
		\node [style=none] (53) at (-4, 1.25) {};
		\node [style=none] (54) at (-5.25, 1.25) {};
		\node [style=none] (55) at (-4, -1.25) {};
		\node [style=none] (56) at (-5.25, -1.25) {};
		\node [style=Z phase dot] (57) at (-7.25, 1.75) {$\alpha$};
		\node [style=none] (58) at (-6.25, 1.75) {};
		\node [style=none] (61) at (-6.25, 0.75) {};
		\node [style=none] (62) at (-7.75, 0.75) {};
		\node [style=none] (63) at (-6.25, -1.75) {};
		\node [style=none] (64) at (-7.75, -1.75) {};
		\node [style=none, rotate=90] (65) at (-7, -0.5) {...};
		\node [style=none] (66) at (-3.25, 0) {$=$};
		\node [style=big gate, minimum height=2cm] (67) at (-0.5, 0.125) {$D'$};
		\node [style=none] (68) at (1.25, 1.75) {};
		\node [style=none] (69) at (0, 1.75) {};
		\node [style=none] (70) at (1.25, -1.75) {};
		\node [style=none] (71) at (0, -1.75) {};
		\node [style=none, rotate=90] (72) at (0.75, 0) {...};
		\node [style=none] (73) at (1.25, 1.25) {};
		\node [style=none] (74) at (0, 1.25) {};
		\node [style=none] (75) at (1.25, -1.25) {};
		\node [style=none] (76) at (0, -1.25) {};
		\node [style=none] (78) at (-1, 1.75) {};
		\node [style=none] (79) at (-1, 0.75) {};
		\node [style=none] (80) at (-2.5, 0.75) {};
		\node [style=none] (81) at (-1, -1.75) {};
		\node [style=none] (82) at (-2.5, -1.75) {};
		\node [style=none, rotate=90] (83) at (-1.75, -0.5) {...};
		\node [style=Z dot] (84) at (-2.75, 1.75) {};
		\node [style=none] (85) at (2, 0) {$=$};
		\node [style=big gate, minimum height=2cm] (86) at (4.5, 0.125) {$D'$};
		\node [style=none] (87) at (7.5, 1.75) {};
		\node [style=none] (88) at (5, 1.75) {};
		\node [style=none] (89) at (7.5, -1.75) {};
		\node [style=none] (90) at (5, -1.75) {};
		\node [style=none] (92) at (7.5, 1.25) {};
		\node [style=none] (93) at (5, 1.25) {};
		\node [style=none] (94) at (7.5, -1.25) {};
		\node [style=none] (95) at (5, -1.25) {};
		\node [style=none] (97) at (4, 1.75) {};
		\node [style=none] (98) at (4, 0.75) {};
		\node [style=none] (99) at (2.75, 0.75) {};
		\node [style=none] (100) at (4, -1.75) {};
		\node [style=none] (101) at (2.75, -1.75) {};
		\node [style=none, rotate=90] (102) at (3.5, -0.5) {...};
		\node [style=Z dot] (103) at (3, 1.75) {};
		\node [style=big gate, minimum height=2cm] (104) at (-11.25, 0.125) {$D$};
		\node [style=none] (105) at (-9.25, 1.75) {};
		\node [style=none] (106) at (-10.75, 1.75) {};
		\node [style=none] (107) at (-9.25, -1.75) {};
		\node [style=none] (108) at (-10.75, -1.75) {};
		\node [style=none, rotate=90] (109) at (-9.75, 0) {...};
		\node [style=none] (110) at (-9.25, 1.25) {};
		\node [style=none] (111) at (-10.75, 1.25) {};
		\node [style=none] (112) at (-9.25, -1.25) {};
		\node [style=none] (113) at (-10.75, -1.25) {};
		\node [style=none] (114) at (-11.75, 1.25) {};
		\node [style=none] (115) at (-13.25, 1.25) {};
		\node [style=none] (116) at (-11.75, -1.25) {};
		\node [style=none] (117) at (-13.25, -1.25) {};
		\node [style=none, rotate=90] (118) at (-12.5, 0) {...};
		\node [style=none] (119) at (-8.5, 0) {$=$};
		\node [style=paulibox] (120) at (-1.75, 1.75) {$Z$};
		\node [style=Z phase dot] (121) at (-1.75, 2.875) {$\alpha$};
		\node [style=none] (122) at (2, 1) {\ref{prop:push-Pauli}};
	\end{pgfonlayer}
	\begin{pgfonlayer}{edgelayer}
		\draw (34) to (43);
		\draw (49.center) to (47.center);
		\draw (51.center) to (50.center);
		\draw (54.center) to (53.center);
		\draw (56.center) to (55.center);
		\draw (57) to (58.center);
		\draw (62.center) to (61.center);
		\draw (64.center) to (63.center);
		\draw (69.center) to (68.center);
		\draw (71.center) to (70.center);
		\draw (74.center) to (73.center);
		\draw (76.center) to (75.center);
		\draw (80.center) to (79.center);
		\draw (82.center) to (81.center);
		\draw (88.center) to (87.center);
		\draw (90.center) to (89.center);
		\draw (93.center) to (92.center);
		\draw (95.center) to (94.center);
		\draw (99.center) to (98.center);
		\draw (101.center) to (100.center);
		\draw (103) to (97.center);
		\draw (106.center) to (105.center);
		\draw (108.center) to (107.center);
		\draw (111.center) to (110.center);
		\draw (113.center) to (112.center);
		\draw (115.center) to (114.center);
		\draw (117.center) to (116.center);
		\draw (84) to (78.center);
		\draw (121) to (120);
	\end{pgfonlayer}
\end{tikzpicture}
\]

This strictly reduces the number of non-Clifford spiders in $D$, so at some point this will terminate when $D'$ is a Clifford ZX-diagram. Because the ZX-flow is focused, the logical semiwebs $\ell_Z(i), \ell_X(i)$ have no defects, so that they are logical Pauli webs. This will give us a stabiliser tableau for the initial Clifford part.

Using known procedures to synthesise Clifford isometries and Pauli exponentials, this gives us a complete procedure for extracting a ZX-diagram into a basic set of gates such as CNOT, $H$, $Z[\alpha]$.

Even though we showed this extraction procedure by rewriting the diagram, we can already read the form of the final extracted circuit from the focused ZX-flow data. This gives us the following theorem.

\begin{theorem}\label{thm:circuit-interp}
  Let $D$ be a ZX-diagram with focused ZX-flow. Fix an ordering of non-Clifford spiders $\nu_1, \ldots, \nu_m$ consistent with $\preceq$ such that the angle of $\nu_k$ is $\alpha_k$ and output colouring of $f(\nu_k)$ is $\vec P_k$, then:
  \ctikzfig{extraction}
  where $C$ is a Clifford isometry with logical operators $\vec{\mathcal Z}_i, \vec{\mathcal X}_i$ given by the output colourings of $\ell_Z(i)$, respectively $\ell_X(i)$, on $D$.
\end{theorem}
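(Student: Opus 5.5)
We proceed by induction on the number $m$ of non-Clifford spiders, formalising the extraction procedure sketched just before the statement. The claimed equation is that $D = \vec P_m[\alpha_m]\cdots \vec P_1[\alpha_1]\, C$, with $\nu_m$ (the $\preceq$-maximal spider in the chosen linear extension) giving the outermost exponential. For the base case $m=0$ the diagram is Clifford. A focused ZX-flow then consists only of the logical semiwebs, and by focusing these have no defects, so they are genuine Pauli webs. By Theorem~\ref{thm:pauli-web-stabilisers} they give $D Z_i = \pm\vec{\mathcal Z}_i D$ and $D X_i = \pm\vec{\mathcal X}_i D$. By the isometry characterisation of logical Pauli webs, $D$ is then a Clifford isometry with exactly these logical operators, so $D = C$.

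For the inductive step, take $\nu_m$, which is maximal for $\preceq$. First unfuse its phase $\alpha_m$ onto a 1-legged $Z$-spider $\nu$: attach it by a plain edge if $\nu_m$ is a $Z$-spider, and by an $H$-edge otherwise (colour change). The remaining spider becomes phase-free, hence Clifford. Call the rest of the diagram $D'$.

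We then transport the flow. By Theorem~\ref{thm:full-sp-fusion-pres}, or directly, this unfusion preserves ZX-flow. The concrete semiwebs must be tracked, however, so we argue as follows. The old $f(\nu_m)$ has all-or-nothing opposite-type support at $\nu_m$ and a $\pi$-defect there. Extend it to the new wire $w$ with $Z$ (same type after any colour change), chosen so that the now-Clifford split spider satisfies parity and Clifford conditions; the $\pi$-defect and the $-2\alpha$-type defect then move to $\nu$. Since $\nu_m$ is maximal and the flow is focused, $f(\nu_m)$ has no other defects. Its restriction to $D'$ is therefore a Pauli web with $Z_w$ on the new input $w$ and output colouring $\vec P_m$.

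Theorem~\ref{thm:pauli-web-stabilisers} gives $D' Z_w = \pm \vec P_m D'$. Proposition~\ref{prop:push-Pauli} then yields $D'\,(Z[\alpha_m]\otimes I) = (\pm\vec P_m)[\alpha_m] D'$, where the sign is absorbed into the angle convention. Plugging $|{+}\>$ (the phase-free 1-legged spider) into $w$ gives $D = \vec P_m[\alpha_m]\, D''$. Here $D''$ is $D$ with $\nu_m$'s phase removed, which has $m-1$ non-Clifford spiders.

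It remains to give $D''$ a focused ZX-flow whose data are the same as before. Keep $\ell_Z(i)$, $\ell_X(i)$ and $f(\nu_k)$ for $k<m$ unchanged as Pauli operators on the wires. These semiwebs satisfied parity at $\nu_m$ by focusing. The only possible defect at $\nu_m$ came from the \textbf{Clifford} condition, and that disappears once the phase is $0$. Hence the same operators are semiwebs on $D''$ with defects only at the remaining non-Clifford spiders in the correct future cones, and focusing is retained. Output colourings are unchanged, so induction gives $D'' = \vec P_{m-1}[\alpha_{m-1}]\cdots \vec P_1[\alpha_1] C$ with the same $C$.

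The main obstacle is the sign and phase bookkeeping in the middle step. We must confirm that $f(\nu_m)$ really restricts to a defect-free web on $D'$ with exactly $Z$ on $w$. We must also check that the eigenvalue sign in Theorem~\ref{thm:pauli-web-stabilisers} is consistent with the claimed $\vec P_m[\alpha_m]$ rather than $\vec P_m[-\alpha_m]$. A $\pi$-defect at $\nu_m$ means the sign flips relative to the defect-free case, and we would fix conventions by checking the single-spider example.
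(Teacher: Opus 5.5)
Your proposal is correct and is essentially the paper's own argument: peel off a $\preceq$-maximal non-Clifford spider, unfuse its phase, treat $f(\nu_m)$ as a genuine Pauli web on the remainder and push the phase out with Proposition~\ref{prop:push-Pauli} (tacitly using, as the paper also does, that flow semiwebs leave the inputs uncoloured), then read $C$ off the defect-free logical webs; your check that the untouched semiwebs still form a focused ZX-flow on the phase-stripped diagram makes explicit a step the paper leaves implicit. The obstacles you flag resolve as follows: for non-Clifford $\alpha_m$, neither $-2\alpha_m$ nor $\pi-2\alpha_m$ is $\equiv\pi \pmod{2\pi}$, so a $\pi$-defect at $\nu_m$ already forces no opposite-type support there (there is no ``$-2\alpha$-type'' part, and $w$ receives exactly $Z$), whereas the sign from Theorem~\ref{thm:pauli-web-stabilisers} is diagram-dependent (a $Z$-phase followed by an $X$-gate already produces $Z[-\alpha]$), so no single-spider convention fixes it and $\vec P_k$ must be read as the signed output Pauli obtained by firing $f(\nu_k)$, an imprecision in the statement rather than in your argument.
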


\section{Conclusion}

We defined a new flow condition that is naturally suited to generic ZX-diagrams, which allows us to work with a broader class of diagrams while still retaining the ability to extract quantum computations, either via deterministic measurement patterns or quantum circuits. Whereas previous conditions were very closely related to graph states and the one-way model of measurement-based quantum computing, ZX-flow can be thought of as a ``multi-paradigm'' concept of runnability. Theorem~\ref{thm:circuit-interp}, where we give a direct interpretation of a focused ZX-flow as a unitary circuit in a particular form, gives evidence of this. It would be interesting to explore other interpretations of ZX-flow in other quantum computational paradigms such as Pauli-based computation and lattice surgery. It may be useful when thinking about these other paradigms to think about a ``Heisenberg-style'' formulation of ZX-flow, where logical and flow semiwebs have support backward in time rather than forward, reflecting changes to the relevant Pauli observables that might happen as a result of past measurements.

Another direction for future exploration is efficiently finding ZX-flow when it exists. Identifying the necessary Pauli webs amounts to solving systems of $\mathbb F_2$-linear equations, so one could imagine a na\"ive procedure where we pick non-Clifford spiders one at a time and try to assign them flow semiwebs using spiders we have already seen as defects. This procedure is likely to run in $O(n^4)$, but perhaps one could do better by incorporating ideas from the efficient Pauli flow finding algorithm of~\cite{mitosek2024algebraic}.

\textbf{Acknowledgements}:
AK is supported by the Engineering and Physical Sciences Research Council grant number EP/Z002230/1, \textit{(De)constructing quantum software (DeQS)}. JvdW acknowledges support by a Veni grant from the Dutch Research Council (NWO).

\bibliographystyle{eptcs}
\bibliography{main}

\newpage

\appendix

\section{Correctness of the Definition of Pauli flow}\label{app:pauli-flow-def}

We will prove that Definition~\ref{def:pauli-flow} is equivalent to the standard one from the literature. For this, we will compare to the version that closely resembles the one given by Mitosek and Backens~\cite{mitosek2024algebraic}, which is an equivalent, more symmetric version of~\cite{browne2007gflow}.

\begin{theorem}\label{thm:pauli-flow-full}
  For an open graph $(G, I, O)$, a triple $(\preceq, \mu, c)$ is a Pauli flow in the sense of Definition~\ref{def:pauli-flow} if and only if:
  \begin{enumerate}
    \item[(P1)] $\mu(v) \notin \{ X, Y \}, u \npreceq v \implies v \notin c(u)$
    \item[(P2)] $\mu(v) \notin \{ Y, Z \}, u \npreceq v \implies v \notin \Odd(c(u))$
    \item[(P3)] $\mu(v) \notin \{ X, Z \}, u \npreceq v \implies v \notin c(u) \Delta \Odd(c(u))$
    \item[(P4)] $\mu(u) = XY \implies u \notin c(u) \wedge u \in \Odd(c(u))$
    \item[(P5)] $\mu(u) = XZ \implies u \in c(u) \wedge u \in \Odd(c(u))$
    \item[(P6)] $\mu(u) = YZ \implies u \in c(u) \wedge u \notin \Odd(c(u))$
    \item[(P7)] $\mu(u) = X \implies u \in \Odd(c(u))$
    \item[(P8)] $\mu(u) = Z \implies u \in c(u)$
    \item[(P9)] $\mu(u) = Y \implies u \in c(u) \Delta \Odd(c(u))$
  \end{enumerate}
\end{theorem}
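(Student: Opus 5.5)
The plan is to unfold Definition~\ref{def:pauli-flow} into explicit membership statements, one for each measurement label, and compare them with (P1)--(P9). Using the anti-commutation sets $A_X(u)=\Odd(c(u))$, $A_Y(u)=c(u)\Delta\Odd(c(u))$ and $A_Z(u)=c(u)$, and reading $\mu(v)$ as a one- or two-element subset of $\{X,Y,Z\}$, condition (i) says: for every $P\in\mu(u)$, $u\in A_P(u)$. Condition (ii) says: for every $P\in\mu(v)$, $v\in A_P(u)$ implies $u\preceq v$. Equivalently, if $u\npreceq v$, then $v\notin A_P(u)$ for every $P\in\mu(v)$.

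\textbf{Condition (ii) against (P1)--(P3).} Note that $\mu(v)\notin\{X,Y\}$ holds exactly when $Z\in\mu(v)$; the six labels are $X,Y,Z,XY,XZ,YZ$, and the only ones without $Z$ are $X$, $Y$ and $XY$. Similarly, $\mu(v)\notin\{Y,Z\}$ holds exactly when $X\in\mu(v)$, and $\mu(v)\notin\{X,Z\}$ holds exactly when $Y\in\mu(v)$. Hence (P1), (P2) and (P3) are precisely the contrapositive of (ii) for $P=Z$, $P=X$ and $P=Y$ respectively. This step is a direct check over the six labels.

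\textbf{Condition (i) against (P4)--(P9).} For single-element labels, (i) gives (P7)--(P9) directly: label $X$ gives $u\in\Odd(c(u))$, label $Z$ gives $u\in c(u)$, and label $Y$ gives $u\in c(u)\Delta\Odd(c(u))$. For planar labels, (i) gives two memberships, and by the definition of symmetric difference any two of the three conditions determine the pair of truth values of ``$u\in c(u)$'' and ``$u\in\Odd(c(u))$''. For $XY$, the conditions $u\in\Odd(c(u))$ and $u\in c(u)\Delta\Odd(c(u))$ together are equivalent to $u\in\Odd(c(u))$ and $u\notin c(u)$, which is (P4). For $XZ$, $u\in\Odd(c(u))$ and $u\in c(u)$ is (P5) verbatim. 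For $YZ$, $u\in c(u)$ and $u\in c(u)\Delta\Odd(c(u))$ are equivalent to $u\in c(u)$ and $u\notin\Odd(c(u))$, which is (P6).

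Since every step is an equivalence, both directions follow at once. There is no real obstacle here. The only delicate point is the symmetric-difference bookkeeping for the planar labels, that is, confirming that the pair of conditions from (i) is equivalent to, and not just implied by, the corresponding condition among (P4)--(P6). I will also remark that this equivalence with the Mitosek--Backens formulation yields equivalence with~\cite{browne2007gflow} by their result.
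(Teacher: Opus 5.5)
Your strategy is the same as the paper's: unfold (i) and (ii) label by label and match them against (P4)--(P9) and (P1)--(P3). Your handling of (i) against (P4)--(P9) is fine.

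The gap is in the (ii) part. You claim that $\mu(v)\notin\{X,Y\}$ holds exactly when $Z\in\mu(v)$, so that the plane label $XY$ falls outside the hypothesis of (P1). That misreads the condition. It says the label is neither the Pauli label $X$ nor the Pauli label $Y$; the paper's proof writes it as $\mu(v)\notin\{\{X\},\{Y\}\}$. This is also the standard meaning in~\cite{browne2007gflow}, where (P1) and (P2) together force an $XY$-measured vertex with $u\npreceq v$ to avoid both $c(u)$ and $\Odd(c(u))$, as in gflow. Under the correct reading, (P1) also covers $\mu(v)=XY$, (P2) also covers $\mu(v)=YZ$, and (P3) also covers $\mu(v)=XZ$. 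None of these three instances is the contrapositive of (ii) for a single Pauli in $\mu(v)$. So (P1)--(P3) are not ``precisely'' the contrapositives of (ii), and your direction (ii) $\Rightarrow$ (P1)--(P3) leaves these three cases unproved.

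The repair is the same symmetric-difference bookkeeping you already use for (P4)--(P6), applied to non-membership. For distinct Paulis $P\neq Q$ one has $A_P(u)\cup A_Q(u)=c(u)\cup\Odd(c(u))$, and this union contains the third anti-commutation set. So if $\mu(v)$ is a plane label and $u\npreceq v$, (ii) gives $v\notin c(u)\cup\Odd(c(u))$. Hence $v$ lies in none of $A_X(u),A_Y(u),A_Z(u)$, which supplies the missing instances. For example, for $XY$, $v\notin\Odd(c(u))$ and $v\notin c(u)\Delta\Odd(c(u))$ together force $v\notin c(u)$.

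Your converse direction, (P1)--(P3) $\Rightarrow$ (ii), needs no change. Each label containing $Z$ (resp.\ $X$, $Y$) is still one to which (P1) (resp.\ (P2), (P3)) applies under the correct reading, so those hypotheses only become stronger.
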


\begin{proof}
  First, assume $(\preceq, \mu, c)$ satisfies conditions (i) and (ii) from Definition~\ref{def:pauli-flow}. Condition (i) implies (P4)-(P9). Condition (ii) implies (P1)-(P3).

  Conversely, assume the criteria (P1)-(P9) from the Theorem. We begin by showing condition (i) of Definition~\ref{def:pauli-flow} by case distinction on the Pauli $P \in \mu(u)$. If $X$ is in $\mu(u)$, then the predicate from either (P4), (P5), or (P7) is satisfied. In any of these cases, $u \in \Odd(c(u)) = A_X(u)$. Similarly, if $Y \in \mu(u)$, the predicate from (P4), (P6), or (P9) is satisfied. Each of these cases imply the $u$ is in the symmetric difference $c(u) \Delta \Odd(c(u)) = A_Y(u)$. If $Z \in \mu(u)$, then the predicate from (P5), (P6), or (P8) is satisfied, hence $u \in c(u) = A_Z(u)$.

  Condition (ii) follows similarly by case distinction. If $Z \in \mu(v)$, then either $\mu(v) \notin \{\{X\},\{Y\}\}$ or $\mu(v) = \{X,Y\}$. In the first case, (P1) implies $v \notin c(u) = A_Z(u)$. In the second, (P2) and (P3) imply $v \notin A_Z(u)$. If $Y \in \mu(v)$, then either (P3) or the conjunction of (P1) and (P2) imply $u \notin A_Y(u)$. Finally, if $X \notin \mu(u)$, then either (P2) or the conjunction of (P1) and (P3) imply $u \notin A_X(u)$.
\end{proof}

\section{Semiweb Proofs}\label{app:semiweb}

\begin{proof}[Theorem~\ref{thm:semiweb-cond}]
  The behaviour of $H$-gates is the same for Pauli semiwebs, so it suffices to show that the \textbf{all-or-nothing} condition is equivalent to \eqref{eq:twist} for all spiders $\nu$. Let $\nu$ be a Z-spider. Suppose we write $\vec w_{\nu} |\nu\> = \lambda \vec P_X \vec P_Z |\nu\>$, where $\vec P_Z$ contains only $Z$ operators and $X$ contains only $X$ operators. Then, the RHS equals $\lambda\vec P_X |\nu_\alpha\>$ for $\alpha = 0$ if the number of $Z$ operators is even and $\alpha = \pi$ if it is odd. Now, \eqref{eq:twist} will be satisfied iff $\vec P_X|\nu_\alpha\>$ is proportial to a Z-spider. This happens if and only if $\vec P_X$ either preserves the basis states $|0...0\>$ and $|1...1\>$ or it swaps them, which is true iff $\vec P_X$ is either $I$ or $X\otimes \ldots \otimes X$. Hence, \eqref{eq:twist} is satisfied for Z-spiders iff their neighbourhood satisfies the all-or-nothing condition. We can argue similarly for X-spiders.
\end{proof}

\begin{proof}[Theorem~\ref{thm:generating-set-of-spiders}]
  Let $\vec w$ be a Pauli semiweb on $D$. It suffices to show that $\vec w$ can be reduced to the identity by multiplying by basic semiwebs and edge semiwebs. Suppose a spider $\nu$ is adjacent to a wire coloured with the opposite type. Then, by minimality of $\vec b_\nu$, the $Z$-support and $X$-support of $\vec w$ must contain the support of $b_\nu$. Hence, multiplying $\vec w$ by $b_\nu$ strictly reduces the support of $\vec w$. We can repeat this until no spider is adjacent to a wire of the opposite colour. The only remaining support will be on plain edges or H-edges whose ends are at inputs/outputs or at spiders of the same type. This can be represented as a product of edge webs.
\end{proof}

\section{Focusing ZX-flow}\label{app:focusing}

\begin{proof}[Theorem~\ref{thm:focused-zx-flow}]
  One direction is trivial. For the other direction, we prove this constructively by efficiently transforming a (strong) ZX-flow $(\ell_Z, \ell_X, f, \preceq)$ into a focused (strong) ZX-flow. The technique is the same in both cases, where the only difference is whether we operate on just the non-Clifford spiders in case of normal ZX-flow and all spiders for strong ZX-flow.
  
  We pick the $\preceq$-maximal spider $\nu \in T$ (or $S$ in the case of strong ZX-flow) such that $f(\nu)$ violates the \textbf{parity} condition for at least one spider $\nu' \neq \nu$. Let $\nu_1, \ldots, \nu_k$ be the spiders where the \textbf{parity} condition is violated. Since these must all be greater than $\nu$ with respect to $\preceq$, it must be the case that $f(\nu_j)$ can only violate the \textbf{parity} condition at $\nu_j$ (because otherwise $\nu$ would not be maximal in having this property). Note that $f(\nu_j)$ has a $\pi$-defect at $\nu_j$ and that this implies the \textbf{parity} condition is violated at $\nu_j$. The product of two semiwebs that violate the \textbf{parity} condition at a spider $\nu'$ will no longer violate it. Hence, the product $\vec w = f(\nu)f(\nu_1)\ldots f(\nu_k)$ will violate the \textbf{parity} condition at $\nu$ and nowhere else. We therefore set $f(\nu) := \vec w$ and then repeat until none of the flow semiwebs violate the focusing condition. Once this is done, we can remove any \textbf{parity} violations from the logical semiwebs $\ell_Z(i), \ell_X(i)$ similarly, by multiplying with the flow semiwebs of the spiders where there are \textbf{parity} violations.
\end{proof}

\section{Clifford Preservation Proofs}\label{app:clifford-pres}

We formalise the notion of local preservation of Pauli webs by purely Clifford rules in the following Lemma.

\begin{lemma}\label{lem:clifford-zx-pres}
For two Clifford ZX-diagrams $D$, $D'$, if $\llbracket D \rrbracket = \llbracket D' \rrbracket$, then for any Pauli web $\vec w$ on $D$ we can find a corresponding Pauli web $\vec w'$ on $D'$ such that $\vec w$ and $\vec w'$ have the same support on the boundary.
\end{lemma}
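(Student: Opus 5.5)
Since $\vec w$ is a Pauli web on the Clifford diagram $D$, Theorem~\ref{thm:pauli-web-stabilisers} gives $\llbracket D \rrbracket = (-1)^k \vec Q \llbracket D \rrbracket \vec P$, where $\vec P$ and $\vec Q$ are the input and output colourings of $\vec w$. Because $\llbracket D \rrbracket = \llbracket D' \rrbracket$, the same relation holds for $D'$. The task is therefore to show a converse of Theorem~\ref{thm:pauli-web-stabilisers} for Clifford diagrams: every such symmetry $(\vec P, \vec Q)$ of $\llbracket D' \rrbracket$ is realised by some Pauli web on $D'$ with boundary colouring $\vec P$, $\vec Q$. I will first treat the case $\llbracket D\rrbracket \neq 0$; if the map is zero, the meta-rule makes the statement degenerate, and I will handle it separately, for instance by reasoning with the scalar-free part of the diagram.

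To prove the converse, I will bend all inputs to outputs using cups. This turns $D'$ into a state diagram, and transforms Pauli webs bijectively: a cup carries the colouring $P$ on one leg to $P^T$ on the other, which agrees with $P$ up to sign. So it suffices to show the following: for a Clifford \emph{state} diagram $E$ and a Pauli $\vec S$ with $\vec S\llbracket E\rrbracket = \pm\llbracket E\rrbracket \neq 0$, there is a Pauli web on $E$ whose output colouring is $\vec S$.

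I will prove this by induction on the number of nodes, building $E$ up one node at a time. Each node's local state $|\nu\>$ is a stabiliser state, and $\llbracket E\rrbracket$ is obtained by taking the tensor product of all local states and contracting each internal wire with a cup/Bell effect. The Pauli webs are exactly the choices of a local stabiliser at each node that agree on both ends of every internal wire (up to phase), since Definition~\ref{def:pauli-web} is local. For a single stabiliser state, every stabiliser restricts to a local stabiliser on each factor, so the base case is clear.

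The inductive step is where the real difficulty lies. Suppose $|\psi\>$ is a stabiliser state on registers $A, a, b, B$, and we contract wires $a$ and $b$ with $\<\Phi^+|$ to get a nonzero $|\psi'\>$. We need to show that every stabiliser of $|\psi'\>$ lifts to a stabiliser of $|\psi\>$ that acts as $P \otimes P^T$ on $a,b$. I would argue this with stabiliser group theory: the projection $(\Phi^+\text{-proj}) |\psi\>$ has stabiliser group generated by $\mathrm{Stab}(\psi)$ intersected with the commutant of $\{XX, ZZ\}$ on $ab$, together with $XX$ and $ZZ$ themselves. Restricting this group to $A B$ gives exactly $\mathrm{Stab}(\psi')$, and the nonvanishing of $|\psi'\>$ ensures there are no sign conflicts. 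Elements of that commutant act on $ab$ as $P\otimes P^T$, up to phase, which is exactly the consistency condition on the wire. Applying this wire by wire, combined with the base case, gives the web on $E$. Bending the wires back then yields $\vec w'$ on $D'$ with the same boundary support as $\vec w$.
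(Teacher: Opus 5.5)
\textbf{Comparison with the paper.} The paper gives no proof of this lemma. It is stated in Appendix~\ref{app:clifford-pres} and used at once. The only justification is the remark in Section~\ref{sec:flow-preservation} that Clifford rewrites allow local updates of webs ``thanks to Theorem~\ref{thm:pauli-web-stabilisers}''. That theorem gives only the direction web $\Rightarrow$ symmetry. The real content of the lemma is the converse for Clifford diagrams, which the paper leaves implicit.

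Your proposal takes exactly this route and supplies the converse, and the argument is sound. Apply Gottesman's update rule to $X_aX_b$ and then $Z_aZ_b$. The projected state then has stabiliser group $\langle \mathrm{Stab}(\psi)\cap\mathcal C,\, X_aX_b,\, Z_aZ_b\rangle$, where $\mathcal C$ is the commutant of $X_aX_b, Z_aZ_b$. Hence every stabiliser of the contracted state lifts to one acting as $II$, $XX$, $YY$ or $ZZ$ (up to phase) on the contracted pair.

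Two presentational fixes:
\begin{itemize}
\item The induction is over contracted wires, not nodes. You lift downward from $\llbracket E\rrbracket$ to the uncontracted product $\bigotimes_\nu|\nu\rangle$. Every intermediate contraction is nonzero because the final one is.
\item The base case should be stated as follows: a Pauli eigen-operator of a nonzero product vector factors into eigen-operators of the factors. That is exactly the web condition node by node.
\end{itemize}

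The paper's local-update picture suggests another route: check each Clifford rule by hand and chain rewrites via completeness of the Clifford ZX-calculus. Your semantic argument needs neither completeness nor case analysis. Together with Theorem~\ref{thm:pauli-web-stabilisers}, it shows that the webs on a nonzero Clifford diagram realise exactly its Pauli symmetries.

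\textbf{The zero case.} This case is not degenerate but a genuine exception, so it cannot be ``handled separately'', even by stripping scalars. Consider the zero scalar diagram consisting of a one-legged $X$-spider with phase $\pi$ wired to a one-legged phase-free $X$-spider; its value is proportional to $\langle 0|1\rangle=0$. Let $D$ be a phase-free two-legged $Z$-spider tensored with this diagram, and $D'$ an $H$-gate tensored with the same diagram.
\begin{itemize}
\item Both maps are zero, so $\llbracket D\rrbracket=\llbracket D'\rrbracket$, and there is no scalar factor to strip.
\item $D$ has a web colouring both boundary wires $X$, with the scalar part coloured trivially.
\item By the \textbf{H} condition, every web on $D'$ with $X$ on its input has $Z$ on its output.
\end{itemize}

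So the lemma needs the hypothesis $\llbracket D\rrbracket\neq 0$, which is exactly where your proof uses nonvanishing. This is harmless for the application in Theorem~\ref{thm:ext-clifford-zx-pres} whenever the ambient diagram is nonzero, since then every subdiagram is nonzero.
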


Using this Lemma, we can prove Theorem~\ref{thm:ext-clifford-zx-pres}.

\begin{proof}[Theorem~\ref{thm:ext-clifford-zx-pres}]
  All the semiwebs used in the ZX-flow only have defects at non-Clifford spiders, so that in the neighbourhood of the Clifford spiders, the semiwebs satisfy all the local Pauliweb rules.
  Preservation of ZX-flow for ZX-rules involving only Clifford spiders then follows from Lemma~\ref{lem:clifford-zx-pres}. It then remains to consider \sprule, \ccrule, \pirule, and \stworule for $\alpha \neq j\frac{\pi}{2}$.

  For \sprule, let $\nu$ be the non-Clifford spider on the LHS and $\nu'$ be the non-Clifford spider on the RHS. If we apply \sprule left-to-right, we define $\preceq$ on the resulting diagram by replacing $\nu$ with $\nu'$. We then need to show $\nu'$ has a suitable flow semiweb, that the other semiwebs in the diagram remain valid, and that the defects respect the new ordering $\preceq$.

  The wires on the RHS are a subset of the wires on the LHS, so we can restrict all logical and flow webs $\vec w$ to the wires of the new diagram to obtain a new semiweb $\vec w^-$. All $\vec w^-$ will satisfy the \textbf{H} condition, and if any of the input/output wires on the LHS of \sprule have $X$-support for any semiweb, then they all must have $X$-support, so all $\vec w^-$ also satisfy the \textbf{all-or-nothing} condition, so that all the previously existing semiwebs remain valid. 

  Now, let $f(\nu') := f(\nu)^-$. We know that $f(\nu)$ had a $\pi$-defect at $\nu$, so it must be the case that oddly many wires adjacent to $\nu$ have $Z$-support and no adjacent wires have $X$-support.
  But then the Clifford spider on the LHS must have $Z$-support on an even number of adjacent wires.
  We then distinguish two cases. Either an odd number of internal wires have $Z$-support, or an even number do. If this is odd, then the non-Clifford spider has an even number of boundary wires with $Z$-support, and the Clifford spider has an odd amount. After fusion this hence combines as even+odd = odd boundary wires with $Z$-support. If instead an even number of internal wires has $Z$-support, then the non-Clifford spider has an odd number of boundary wires with $Z$-support and the Clifford spider has an odd amount, so that again we have even+odd = odd boundary wires with $Z$-support after fusion.
  Hence in both cases $f(\nu')$ must have $Z$-support on oddly many adjacent wires and $X$-support on no adjacent wires, so that $f(\nu')$ satisfies the flow condition.

  Finally, for any other non-Clifford spider $\nu''$, if $f(\nu'')^-$ has a defect at $\nu'$, then it previously had a defect at $\nu$, so $\nu'' \preceq \nu'$, so that the defects still respect $\preceq$.

  Applying \sprule from right-to-left is similar, except we extend each logical and flow semiweb $\vec w$ to a new semiweb $\vec w^+$ which may have additional support on the newly-created wire(s) between the unfused spiders. We do this such that the Pauli web condition is satisfied for the Clifford spider on the LHS. Namely, the new wires have $X$-support if and only if the boundary wires do, and we introduce $Z$-support on 0 or 1 wires in order to make the $Z$-support of the Clifford spider even. Intuitively, we can think of whether we colour the connecting edge as capturing whether the Pauli semiweb ``exits'' through the Clifford spider. For example:
  \ctikzfig{spider-unfusion-example}
  This will imply that the $Z$-support of $f(\nu')^+$ is odd, so we can set $f(\nu) := f(\nu')^+$.

  The arguments for \ccrule and \pirule are very similar, as the LHS and RHS both contain exactly 1 non-Clifford spider. In fact, these cases are a bit simpler, as the extended Pauli semiwebs when applying the rules from right-to-left are uniquely fixed. This is because the $H$-gates in the \ccrule and the $\pi$-spiders in the \pirule uniquely fix the colour of the internal wires based on the colour of the inputs/outputs. For example:
  \ctikzfig{colour-change-example}

  Applying \stworule left-to-right trivially preserves ZX-flow. From right-to-left, we can give the non-Clifford spider a flow semiweb just by introducing $Z$-support on the one wire in the LHS.
\end{proof}

Theorem~\ref{thm:full-sp-fusion-pres} says the following full spider-fusion rule holds from left-to-right and it holds from right-to-left as long as $\alpha+\beta$ is non-Clifford.
\ctikzfig{spider-fusion}

\begin{proof}[Theorem~\ref{thm:full-sp-fusion-pres}]
When we apply the rule from left-to-right, the only case not covered by Theorem~\ref{thm:ext-clifford-zx-pres} is when $\alpha$ and $\beta$ are both non-Clifford. As in the proof of that theorem, we define new Pauli semiwebs by restricting the old ones. If $\alpha + \beta$ is Clifford, we are done. If $\alpha + \beta$ is non-Clifford, the spider on the RHS $\nu'$ inherits the flow semiweb and its position w.r.t. $\preceq$ from either spider $\nu$ on the LHS.

Now let $\nu''$ be any other spider in the diagram we apply the rewrite to and suppose it has a defect at one of the spiders in the LHS. We can assume without loss of generality that the ZX-flow is focused, hence this defect must not be a $\pi$-defect. Hence, $f(\nu'')$ must have $X$-support on all the edges in the LHS of \fullsprule, so it must have a defect at both spiders, so $\nu''$ is before both spiders. This implies that if $f(\nu'')^-$ has a defect at the spider $\nu'$ on the RHS of \fullsprule, then $\nu'' \preceq \nu'$.

Applying the rule from right-to-left, again we only need to consider the case where both spiders on the LHS are non-Clifford. We place them both at the same location as the spider on the RHS w.r.t. $\preceq$. The ordering between the two spiders on the LHS is irrelevant. We extend the Pauli semiwebs as described in the proof of Theorem~\ref{thm:ext-clifford-zx-pres} and note that any spider $\nu''$ with a defect at one of the two spiders on the LHS must be in the past, as it would have also had a defect on the spider in the RHS.
\end{proof}

\section{Pauli Flow Equivalence Proofs}\label{app:pauli-flow-equiv}

\begin{proof}[Lemma~\ref{lem:strong-zxflow-iff-pauli-flow}]
Let $(\preceq, \ell_Z, \ell_X, f)$ be a strong ZX-flow for $D$. For every non-output spider $\nu$, let the correction set $c(\nu)$ be the generating set of spiders for the Pauli semiweb $f(\nu)$, as defined in Theorem~\ref{thm:generating-set-of-spiders}.

It must be the case that $f(\nu)$ has a $\pi$-defect at $\nu$. We will show that this implies the Pauli flow conditions at $\nu$ for each of the measurement labels. First note that, since $\nu$ is not adjacent to an output and inputs must be un-coloured by $f(\nu)$, any $Z$-coloured wires adjacent to $\nu$ must come from basic semiwebs in $c(\nu)$.
\begin{itemize}
\item If $X \in \mu(\nu)$, then the phase of $\nu$ is not an odd multiple of $\frac\pi2$. Hence, the only way $f(\nu)$ can have a $\pi$-defect at $\nu$ is if there is an odd number of adjacent wires with $Z$-support, which implies $\nu$ is in the odd neighbourhood of $c(\nu)$.
\item If $Y \in \mu(\nu)$, then the phase of $\nu$ is not a multiple of $\pi$. Hence, it must either be the case that (1) no adjacent wires have $X$-support and oddly many adjacent wires have $Z$-support, or (2) all adjacent edges have $X$-support and evenly-many adjacent wires have $Z$-support. This implies that $\nu$ is in the symmetric difference of $c(\nu)$ and the odd neighborhood of $c(\nu)$.
\end{itemize}
Now, consider another spider $\nu'$:
\begin{itemize}
\item If $X \in \mu(\nu')$ then the phase of $\nu'$ is not an odd multiple of $\frac\pi2$. Hence, if $\nu'$ is in the odd neighbourhood of $c(\nu)$, $\nu$ must have a $\pi$-defect at $\nu'$ so that $\nu \preceq \nu'$.
\item If $Y \in \mu(\nu')$, then the phase $\alpha$ of $\nu$ is not a multiple of $\pi$. Hence, if $\nu'$ is in the symmetric difference of $c(\nu)$ and $\Odd(c(\nu))$, then it is either in $\Odd(c(\nu))$ so that $f(\nu)$ has a $\pi$-defect due to the odd number of $Z$-support edges, or it is in $c(\nu)$ so that $\nu'$ has all $X$-support and $f(\nu)$ has a $2\alpha$-defect there. In both cases $\nu \preceq \nu'$.
\end{itemize}
Hence, $(c,\mu,\preceq)$ satisfies the conditions of Pauli flow of Definition~\ref{def:pauli-flow}.

Now for the converse, let $(\preceq',\mu, c)$ be a Pauli flow for $D$. Let $\preceq$ be the extension of $\preceq'$ to all spiders, including output spiders, where output spiders are maximal with respect to $\preceq$.

We first define the flow semiwebs for $D$. Since we want a strong ZX-flow, we need to do it for all spiders, not just the non-Clifford ones.

For each output spider, we let $f(\nu)$ be the Pauli semiweb supported on the unique output wire adjacent to $\nu$. For each non-output spider, we let $f(\nu) := \prod_{\nu' \in c(\nu)} \vec b_{\nu'}$ where $\vec b_\nu$ is the basic semiweb at $\nu$ (Definition~\ref{def:basic-semiweb}).

If $\mu(\nu) = X$, it's phase must be a multiple of $\pi$ and $\nu \in \Odd(c(\nu))$, so $f(\nu)$ has oddly many adjacent wires with $Z$-support, and hence has a $\pi$-defect. If $\mu(\nu) = Y$, its phase is $\pm\frac\pi2$ and $\nu$ is either in $c(\nu)$ or $\Odd(c(\nu))$, but not both. In either case, $f(\nu)$ has a $\pi$-defect. Finally, if $\mu(\nu) = XY$, then oddly many adjacent wires have Z-support and no adjacent wires have X-support, so it has a $\pi$-defect at $\nu$.

Now, consider another spider $\nu'$ such that $f(\nu)$ has a defect at $\nu'$. If $\nu'$ is adjacent to an output, then $\nu \preceq \nu'$. If it is a non-output, then if its phase is a multiple of $\pi$, then oddly many neighbouring wires must have Z-support in $f(\nu)$ for it to have a defect, so $\nu' \in \Odd(c(\nu))$ and hence $\nu \preceq \nu'$. If the phase of $\nu'$ is an odd multiple of $\frac\pi2$, then it can only have a defect if $\nu'\in c(\nu)$ or $\nu' \in \Odd(c(\nu))$, but not both, so that $\nu' \in A_Y(\nu)$. Hence, also $\nu \preceq \nu'$. If instead the phase of $\nu'$ is non-Clifford, $\nu' \in c(\nu)$ or $\nu' \in \Odd(c(\nu))$. So, $\nu'$ must either be in $A_X(\nu)$ or $A_Y(\nu)$, so that also $\nu \preceq \nu'$. All the flow semiwebs hence have the correct properties with respect to $\preceq$.

Finally, we construct the logical semiwebs as follows. $\ell_Z(i)$ is the edge semiweb supported on input $i$. This has $Z$-support just on input wire $i$ and no $X$-support. We construct the logical semiweb $\ell_X(i)$ as the basic semiweb $b_\nu$ of the spider $\nu$ that is adjacent to input $i$. Since the the diagram is graph-like, this basic semiweb restricted to inputs is $X_i$, as required.
\end{proof}

\begin{proof}[Theorem~\ref{thm:zxflow-iff-pauli-flow}]
  Suppose $D$ is Clifford-equivalent to some graph-like $D'$ with Pauli flow. Then, $D'$ has strong ZX-flow by Lemma~\ref{lem:strong-zxflow-iff-pauli-flow}. By Corollary~\ref{cor:strong-implies-weak} $D'$ then also has regular ZX-flow. Then, by Theorem~\ref{thm:ext-clifford-zx-pres}, $D$ must have ZX-flow.

  Conversely, assume $D$ has ZX-flow. Using the Clifford simplification routine from e.g.~\cite{kissinger2019tcount}, we can apply Clifford rewrites to reduce $D$ to it's ``skeleton'' $D'$, i.e. a graph-like diagram whose only interior Clifford spiders are part of \emph{phase gadgets}, and which might have Hadamards on the boundary wires. That is, every interior Clifford spider $\nu$ must be connected via an $H$-edge to a 1-legged non-Clifford spider $\nu'$. $D'$ must have a ZX-flow by Theorem~\ref{thm:ext-clifford-zx-pres}, but we can furthermore give it a strong ZX-flow by placing every interior Clifford spider $\nu$ directly before its associated 1-legged non-Clifford spider $\nu'$ in $\preceq$. We then let $f(\nu) := \vec b_{\nu'}$, which only covers the H-edge between $\nu'$ and $\nu$. This basic semiweb has a $\pi$ defect at $\nu$ and another defect at $\nu'$, so it is a valid flow semiweb for $\nu$.

  Place input Clifford spiders as minimal elements in $\preceq$. For a Clifford spider $\nu$ adjacent to an input wire $i$, we construct $f(\nu)$ by multiplying one of the two logical semiwebs $\ell_Z(i)$ or $\ell_X(i)$ with an edge semiweb that contains $i$ to remove its support on the input wire. Output Clifford spiders can be placed arbitrarily w.r.t. $\preceq$, and we let $f(\nu)$ be the edge semiweb that contains its adjacent output. 

  We have then shown that $D'$ has strong ZX-flow. In order to invoke Lemma~\ref{lem:strong-zxflow-iff-pauli-flow}, we must first make it fully graph-like, which means we need to introduce identity spiders on those boundary wires that contain Hadamards. Introducing these identities trivially preserves the desired properties of all the existing semiwebs, so we just have to find semiwebs for these identity spiders. Since these are all boundary spiders, we do this in the same way as we did with the previous boundary spiders.

  We have thus shown that $D'$ has a strong ZX-flow and is graph-like, so by Lemma~\ref{lem:strong-zxflow-iff-pauli-flow}, it has Pauli flow.
\end{proof}

\end{document}